\def\vv{{\mathsf d}}
\def\vv{{\mathsf v}}
\def\nn{\nonumber}
\begin{document}
%
\title{
Capacity of the Binary Energy Harvesting Channel
 \thanks{
 The material in this paper was presented in part at the
 56th Annual Allerton Conference on Communication etc., Monticello, IL, USA, October 2018, and at the IEEE International Symposium on Information Theory, Los Angeles, CA, USA, June 2020.}
} 
\author{
    \IEEEauthorblockN{Eli Shemuel\IEEEauthorrefmark{1}, Oron Sabag\IEEEauthorrefmark{2}, Haim Permuter\IEEEauthorrefmark{1}}\\
    \IEEEauthorblockA{\IEEEauthorrefmark{1}Ben-Gurion University} 
    \IEEEauthorblockA{\IEEEauthorrefmark{2}The Hebrew University of Jerusalem}
}
\maketitle


%
\IEEEpeerreviewmaketitle

\newtheorem{question}{Question}
\newtheorem{claim}{Claim}
\newtheorem{guess}{Conjecture}
\newtheorem{definition}{Definition}
\newtheorem{fact}{Fact}
\newtheorem{assumption}{Assumption}
\newtheorem{theorem}{Theorem}
\newtheorem{lemma}{Lemma}
\newtheorem{ctheorem}{Corrected Theorem}
\newtheorem{corollary}{Corollary}
\newtheorem{proposition}{Proposition}
\newtheorem{remark}{Remark}
\newtheorem{example}{Example}

\def\cS{{\mathcal S}}
\def\cX{{\mathcal X}}
\def\cU{{\mathcal U}}
\def\cQ{{\mathcal Q}}
\def\cY{{\mathcal Y}}

\begin{abstract}
The capacity of a channel with an energy-harvesting (EH) encoder and a finite battery remains an open problem, even in the noiseless case. 
A key instance of this scenario is the binary EH channel (BEHC), where the encoder has a unit-sized battery and binary inputs. 
Existing capacity expressions for the BEHC are not computable, motivating this work, which determines the capacity to any desired precision via convex optimization. By modeling the system as a finite-state channel with state information known causally at the encoder, we derive single-letter lower and upper bounds using auxiliary directed graphs, termed $Q$-graphs. These $Q$-graphs exhibit a special structure with a finite number of nodes, $N$, enabling the formulation of the bounds as convex optimization problems. As $N$ increases, the bounds tighten and converge to the capacity with a vanishing gap of $O(N)$. For any EH probability parameter $\eta\in \{0.1,0.2, \dots, 0.9\}$, we compute the capacity with a precision of ${1e-6}$, outperforming the best-known bounds in the literature. Finally, we extend this framework to noisy EH channels with feedback, and present numerical achievable rates for the binary symmetric channel using a Markov decision process.
\end{abstract}
\begin{IEEEkeywords}
channel capacity, channels with feedback, convex optimization, energy harvesting, finite-state channel, $Q$-graphs.
\end{IEEEkeywords}

\section{Introduction}
Energy-harvesting (EH) communication models \cite{OptimalEnergyManagement2010,ozel2012optimal,UlkusYang12_energy_harvesting_comm12,Uluku12_Energy_harvesting_BC,AWGNchannelUnderTimeVarying2011,AchievingAWGNCapacity2012,tutuncuoglu2012optimum,mao2017CapAn,jog2014energy,Dong2015NearOptimal,Shaviv2016CapEHCfiniteBattery,tutuncuoglu2013binary,tutuncuoglu2014improved,tutuncuoglu2017binary} have gained significant attention due to their critical role in enabling sustainable and autonomous wireless communication systems. These models find applications in a wide range of emerging technologies, where transmitters harvest energy from the environment either for immediate use or to be stored in a battery for future transmissions, such as low-power wireless sensor networks \cite{visser2013rf,ma2017experimental} and modern unmanned aerial vehicle communications \cite{10452297_2024,EnergyEfficientUAV2020}.

Typically, EH models describe communication channels where the encoder is powered by a battery that is charged by an exogenous energy source according to an energy arrival process, as depicted in Fig.~\ref{fig:energyharvesting} (with or without output feedback). The channel inputs are constrained by the remaining energy stored in the battery. These unprecedented constraints differ fundamentally from conventional channels with average or peak power constraints, making the capacity analysis of EH models particularly difficult. The capacity has only been determined 
in extreme cases, such as when no battery is present \cite{AWGNchannelUnderTimeVarying2011} or when the battery is infinitely large \cite{AchievingAWGNCapacity2012}. However, for the intermediate scenarios of any finite-sized, non-zero battery, introduced in \cite{tutuncuoglu2012optimum} and further studied in \cite{mao2017CapAn,jog2014energy,Dong2015NearOptimal,Shaviv2016CapEHCfiniteBattery, tutuncuoglu2013binary,tutuncuoglu2014improved,tutuncuoglu2017binary}, the capacity remains an open problem.
\begin{figure}[t]
 \begin{center}
 \begin{psfrags}
     \psfragscanon
     \psfrag{M}[][][1]{$M$}
     \psfrag{N}[][][1]{$\hat{M}$}
     \psfrag{E}[][][1]{$E_i$}
     \psfrag{B}[][][1]{Battery\hspace{-0.7cm}}
     \psfrag{T}[][][0.98]{Encoder}
     \psfrag{C}[][][1]{Channel}
     \psfrag{R}[][][1]{Decoder}
     \psfrag{X}[][][1]{$X_i$}
     \psfrag{Y}[][][1]{$Y_i$}
      \psfrag{F}[][][1]{$Y_{i-1}$\hspace{-0.7cm}}
 \includegraphics[scale=0.38]{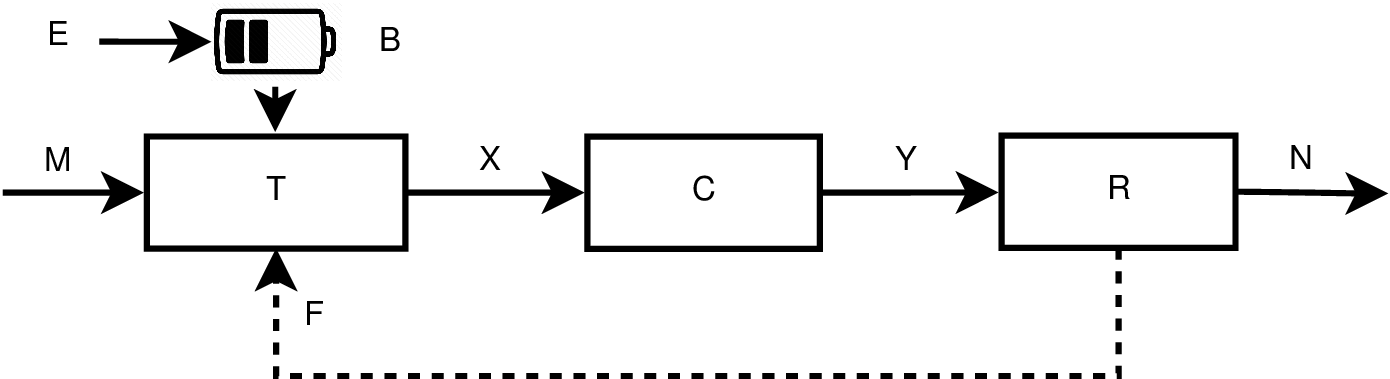}
 \caption{The energy-harvesting model with a finite battery.} \label{fig:energyharvesting}
 \psfragscanoff
 \end{psfrags}
 \end{center}
 \end{figure}

The capacity of a finite battery EH model with an AWGN channel is approximated within a constant gap in \cite{Dong2015NearOptimal,Shaviv2016CapEHCfiniteBattery}.
In \cite{jog2014energy}, this scenario is studied with a deterministic energy arrival process, and computable bounds on the capacity are obtained by calculating
the volume of feasible input vectors. Multi-letter capacity expressions are provided in \cite{mao2017CapAn} for a discrete memoryless channel (DMC) using the Verdú-Han framework \cite{Verdu94}, and in \cite{Shaviv2016CapEHCfiniteBattery} for any memoryless channel. However, these capacity expressions are difficult to evaluate due to the optimization over an infinite number of Shannon strategies \cite{shannon1958channels}. 
Recognizing the complexity of the finite-battery EH model's capacity problem, \cite{tutuncuoglu2013binary} introduces a simplified binary EH model, inspired by \cite{Popovski}, called the \textit{binary EH channel} (BEHC), to make progress toward deriving a computable capacity expression.

The BEHC, as its name suggests, operates with binary alphabets. After each channel use, energy is harvested in binary amounts -- either $0$ or $1$ unit. The battery is binary, meaning it is unit-sized and can be either empty or full, and the channel inputs are binary as well. Transmitting a $1$ requires one energy unit per channel use, whereas transmitting a $0$ consumes no energy. Thus, although the encoder can transmit a $0$ at any time, transmitting a $1$ is only possible when the battery is charged. Viewing the battery state as the channel state naturally defines the BEHC as a channel with state information (SI) that is available causally to the encoder but not to the decoder. However, the capacity cannot be achieved using the Shannon strategy \cite{shannon1958channels}, as the state process is not independent and identically distributed (i.i.d.) over time. Instead, the state evolution is affected by the full history of the channel inputs,
which may exhibit infinite memory. The potential for infinitely large memory in the SI, which is unknown to the decoder, makes the capacity problem challenging even in the noiseless case. This complexity arises because the uncertainty that causes communication errors stems from the state evolution, which incorporates the random EH process and imposes intricate constraints on the inputs. For instance, when the decoder receives a noiseless output 
$0$, it cannot determine whether the encoder intended to transmit a $0$ or was constrained to do so due to an empty battery.
 
The problem of the BEHC with an i.i.d. Bernoulli($\eta$) energy arrival process and noiseless channel is studied in \cite{tutuncuoglu2013binary,tutuncuoglu2014improved,tutuncuoglu2017binary}.
Throughout this paper, and as in these works, the BEHC refers to the noiseless channel scenario unless otherwise stated. They demonstrate that the BEHC is equivalent to a timing channel, analogous to the telephone signaling channel in \cite{anantharam1996bits,EntropyTimingCap,ITcap_queues}, by modeling the time differences between consecutive $1$s transmitted through the channel. In the equivalent representation, a single-letter expression for the capacity of the BEHC is derived. 
However, evaluating this expression remains difficult because it involves an auxiliary random variable (RV) with infinite cardinality, and its capacity-achieving distribution is unknown. Based on specific choices of distributions, analogous to lattice coding for the timing channel, \cite{tutuncuoglu2017binary} provides numerical achievable rates. Additionally, two upper bounds are proposed: one using a genie-aided method and the other quantifying the leakage of SI to the decoder \cite{TavanBitsThrBuf}. Numerical results are evaluated for various EH probability parameters $\eta\in \{0.1,0.2, \dots, 0.9\}$, showing a small gap between the lower and upper bounds. This gap is larger for small $\eta$ values and decreases as $\eta$ increases.

In this work, we also consider the BEHC and provide a solution for practically computing the capacity with any desired precision using convex optimization. First, recall that the remaining energy in the battery serves as the channel state, which is causally known at the encoder and has memory. Consequently, we view the BEHC as a finite-state channel (FSC), where the memory is encapsulated in the channel state. The BEHC is a special case of FSCs with feedback and SI causally available at the encoder, as introduced in \cite{shemuel2024finite} and depicted in Fig.~\ref{fig:setting}.
Since the channel is noiseless, feedback does not increase the capacity, allowing us to utilize known results from \cite{shemuel2024finite}. They derive the feedback capacity of the general setting as the directed information between auxiliary RVs with
memory to the channel outputs, and provide computable lower bounds via the $Q$-graph method \cite{Sabag_UB_IT} and a Markov decision process (MDP) formulation. 

Here, we construct a single-letter lower bound for the BEHC based on a specific $Q$-graph with $N$ nodes, which has a special structure given $N$. Although a $Q$-graph upper bound is not derived for the general setting in \cite{shemuel2024finite} due to the lack of cardinality bound on the auxiliary RV, we successfully derive a computable single-letter $Q$-graph upper bound for the BEHC. Similar to the lower bound, the upper bound is established on a specific $N$-node $Q$-graph with a special structure. Furthermore, we formulate both the $Q$-graph lower and upper bounds as convex optimization problems, making them computable via convex optimization algorithms. We show that the sequences of these bounds coincide and converge to the capacity of the BEHC as $N$ approaches infinity, with the gap between them scales as $O(N)$. This enables us to compute the capacity with any desired precision. For any EH parameter $\eta\in \{0.1,0.2, \dots, 0.9\}$, we compute the capacity with a precision of ${1e-6}$ and compare it to the tightest numerical bounds known in the literature from \cite{tutuncuoglu2017binary}. Our results outperform the previous bounds across all $\eta$ values, with particularly significant improvements for smaller $\eta$. 

\begin{figure}[t]
\begin{center}
\begin{psfrags}
    \psfragscanon
    \psfrag{E}[][][1]{$M$}
    \psfrag{S}[][][1]{\begin{tabular}{@{}l@{}}
    $\; S^{i-1}$
    \end{tabular}}
    \psfrag{A}[\hspace{2cm}][][1]{Encoder}
	 \psfrag{F}[\hspace{1cm}][][1]{$X_i$}
	 \psfrag{B}[\hspace{2cm}]{{$P(s_i,y_i|x_i,s_{i-1})$}}
	 \psfrag{G}[][][1]{$Y_i$}
	 \psfrag{C}[\hspace{2cm}][][1]{Decoder}
	 \psfrag{K}[][][1]{$\hat{M}$}
	 \psfrag{H}[\hspace{2cm}][][1]{$Y_i$}
	 \psfrag{D}[\hspace{2cm}][][0.95]{Unit-Delay}
	 \psfrag{J}[\vspace{2cm}\hspace{2cm}][][1]{$Y_{i-1}$}
	 \psfrag{L}[\hspace{2cm}][][1]{Finite-State Channel}
	 \psfrag{I}[][][1]{}
\includegraphics[scale=0.63]
{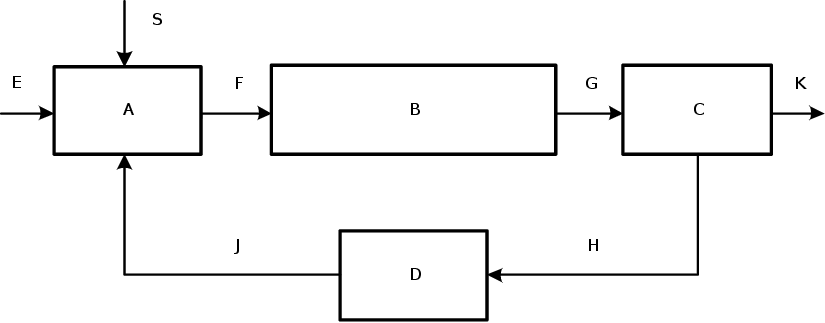}
\caption{
FSC with feedback and SI available causally to the encoder. At time $i$, the current state~$S_{i-1}$ influences output~$Y_i$.}
\label{fig:setting}
\psfragscanoff
\end{psfrags}
\end{center}
\end{figure}

Additionally, we extend the BEHC framework to noisy channels with feedback and demonstrate that the MDP formulation in \cite{shemuel2024finite} can be applied to any DMC with feedback. We apply the value iteration algorithm (VIA) to numerically evaluate achievable rates for the BEHC over a binary symmetric channel (BSC). To the best of our knowledge, no computable bounds on the feedback capacity have previously been derived in the literature for an EH model with a noisy channel.

The rest of the paper is structured as follows:
Section~\ref{sec:Com_Setup} defines the communication setup. Section~\ref{sec:MainResults} presents the main results. Section~\ref{sec:EHconvex} describes the convex optimization approach used to calculate the lower and upper bounds on the capacity. Section~\ref{sec:proof_main_theorems} provides proofs of the main results. Section~\ref{sec:Noisy} presents numerical achievable rates for the BEHC with a BSC. Finally, Section~\ref{sec:conclusions} concludes the paper.
\section{Preliminaries and the Channel Model}
\label{sec:Com_Setup}

\subsection{Notation and Definitions}
Lowercase letters denote sample values (e.g., $x,y$), and uppercase letters denote discrete RVs (e.g., $X,Y$). Subscripts and superscripts denote vectors as follows: $x_i^j=(x_i,x_{i+1},\dots,x_j)$ and $X_i^j=(X_i,X_{i+1},\dots,X_j)$ for $1\leq i \leq j$, whereas $x^n\triangleq x_1^n$ and $X^n\triangleq X_1^n$. The notation $\mathbf{0}^N$ denotes a vector of length \( N \) with all entries equal to $0$.
Calligraphic symbols (e.g., $\cX,\cY$) represent alphabets, with $|\cX|$ indicating the alphabet's cardinality. For two RVs $X,Y$, the probability mass function (PMF) of $X$ is expressed as $P(X=x)$, while the conditional PMF given $Y=y$ is written as $P(X=x|Y=y)$, and their joint PMF is denoted by $P(X=x,Y=y)$. For these PMFs, the shorthands $P(x), P(x|y)$ and $P(x,y)$ are used. The indicator function $\mathbbm{1}(b)$, given a binary condition $b$, equals $1$ if $b$ is true and $0$ otherwise.
Given two integers $n$ and $m$ such that $n \le m$, the discrete interval $[n:m]$ is defined as $\triangleq \{n,n+1,\dots,m\}$. 
For any $\alpha \in [0,1]$, we define $\bar{\alpha}=1-\alpha$.

Logarithms are in base $2$; hence, entropy is measured in bits. The binary entropy function is defined by $H_2(p)\triangleq -p \log p -\bar{p} \log \bar{p}, \; p\in [0,1]$.
The \textit{directed information} between $X^N$ and $Y^N$
is defined as $I(X^n\rightarrow Y^n)\triangleq \sum_{i=1}^{n} I(X^i;Y_i|Y^{i-1})$.
The \textit{causal conditioning distribution} 
is defined as $P(x^n||y^{n-1})\triangleq \prod_{i=1}^n P(x_i|x^{i-1},y^{i-1})$. 

A \textit{FSC} is defined by finite alphabets for its input ($\cX$), output ($\cY$), and state ($\cS$) variables. At each time $i$, the corresponding symbols are $X_i,Y_i$ and $S_{i-1}$, and the channel, being time-invariant, is governed by $P_{S^+,Y|X,S}$, satisfying 
\begin{equation}
P(s_i,y_i|x^i,s_0^{i-1},y^{i-1}) = P_{S^+,Y|X,S}(s_i,y_i|x_i,s_{i-1}).
\label{eq:BasicFSCMarkov}
\end{equation}

\begin{definition}[Connectivity]\label{def:connectivity} \cite[Def.~2]{Permuter06_trapdoor_submit}
A FSC is called \textit{strongly connected} if for all $s',s \in \cS$ there exists an integer $T(s)$ and input distribution of the form $\{P(x_i|s_{i-1})\}_{i=1}^{T(s)}$ that may depend on $s$, such that $\sum_{i=1}^{T(s)} P(S_i=s|S_0=s')>0$.
\end{definition}

\vspace{-0.3cm}
\subsection{Preliminary on Quantized-graphs}
\label{subsection:prelQgraph}
The $Q$-graph is a method for deriving single-letter lower and upper bounds on the capacity of a setting with feedback from a multi-letter expression. This technique is established on \textit{Quantized-graphs} ($Q$-graphs), which are directed, connected graphs with a finite number of nodes $|\cQ|$. Each node represents a unique value $q \in \cQ$ and is associated with exactly $|\cY|$ outgoing edges, each labeled with a distinct symbol from $\cY$. By definition, starting from an initial node, $q_0$, and given an output sequence, $y^i$, traversing the corresponding labeled edges uniquely determines a final node, $q_i$. This transition process is governed by a time-invariant function, $g:\cQ\times\cY\to\cQ$, where the next node is determined by the current node and the channel output.
Recursively, the final node after $i$ steps is given by the function composition
$g^i(q_0,y^i) = g(g^{i-1}(q_0,y^{i-1}),y_i)$. A visual instance of a $Q$-graph is illustrated in Fig.~\ref{fig:q_graph_ex}.
\begin{figure}[t]
\centering
    \psfrag{Q}[][][1]{$Y=0$}
    \psfrag{E}[][][1]{$Y=1$}
    \psfrag{F}[][][1]{$Y=2$}
    \psfrag{O}[][][1]{$Y=0/1/2$}
    \psfrag{L}[][][1]{$Q=1$}
    \psfrag{H}[][][1]{$Q=0$}
    \includegraphics[scale = 0.5]{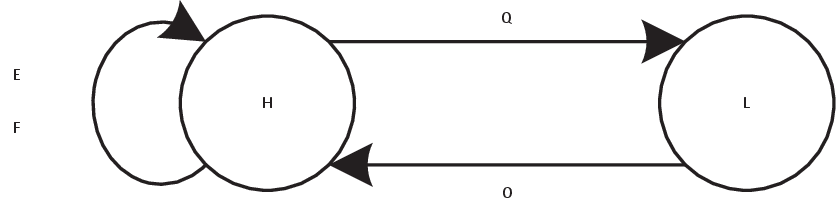}
    \caption{An instance of a $2$-node $Q$-graph with ternary output.}
    \label{fig:q_graph_ex}
\end{figure}

\vspace{-0.15cm}
\subsection{The Energy-Harvesting Communication Model}
We consider the EH model depicted in Fig.~\ref{fig:energyharvesting} with a unit-sized battery, i.e., it can store at most $B_{max}=1$ energy unit. At time $i$,
the current battery state is represented by
$S_{i-1}\in \{0,1\}$, which is causally known to the encoder but remains unknown to the decoder, and $X_i\in\{0,1\}$ denotes the transmitted input symbol during channel use $i$. The encoder is always capable of transmitting $X_i=0$ regardless of the battery state. However, sending $X_i=1$ is only feasible when the battery is charged ($S_{i-1}=1$). Thus, if the battery is empty ($S_{i-1}=0$), the input is constrained to $X_i=0$. After each transmission, the encoder harvests an energy unit according to an i.i.d. arrival process $E_i\sim$\text{Bern}($\eta$), where $\eta\in [0,1]$ denotes the success probability. 
If $S_{i-1}=0$, a successfully harvested energy unit charges the battery to $S_i=1$, enabling a future transmission of $X=1$. However, if $S_{i-1}=1$, the harvested energy is lost as it cannot be used without being stored.\footnote{This scenario is referred to as the \textit{transmit first model} in \cite{tutuncuoglu2017binary}, as the encoder first transmits $X_i$ and then harvests energy $E_i$, such that the harvested energy is not immediately available for transmission. A \textit{harvest first} model, in which the order is reversed (harvest then transmit), is considered in \cite{mao2017CapAn}.}
For simplicity, and without loss of generality (see \cite[Prop.~1]{Shaviv2016CapEHCfiniteBattery}), we assume that the battery starts in an empty state, i.e., $S_0=0$.
The evolution of the battery state is governed by the equation
\begin{equation}
S_{i}=\min\{S_{i-1}-X_i + E_i,1\}, \label{eq:BEHC_state_evolution}
\end{equation}
where $X_i = 0$ if $S_i = 0$ (or equivalently, $X_i\le S_{i-1}$).

Our primary focus is on the noiseless channel case, where $Y_i=X_i$ holds for all $i$. This case is referred to as the \textit{noiseless BEHC}, or simply \textit{BEHC}. Additionally, we consider a generalized scenario with noise: a DMC $P(y|x)$ in the presence of feedback, referred to as a \textit{noisy BEHC}.
At time $i$, the encoder observes not only the message $M$ and the current battery state $S_{i-1}$, but also the output feedback $Y_{i-1}$. 
The definitions of 
a \textit{$(2^{nR},n)$ code}, the \textit{average probability of error}, \textit{achievable rate} and \textit{capacity} are standard (see \cite{shemuel2024finite}). The capacity of the noiseless BEHC is denoted by $C_{\text{BEHC}}$ throughout this paper.

Determining $C_{\text{BEHC}}$ in a computable form for any harvesting parameter $\eta$ remains an open problem, and resolving it is the primary objective of this work. The challenge arises from the intricate constraints at the encoder due to \eqref{eq:BEHC_state_evolution}, while the decoder remains unaware of the battery state, even when receiving noiseless outputs. Since feedback does not increase the capacity of any noiseless channel model, $C_{\text{BEHC}}$ corresponds to a special case of noisy BEHCs with feedback. A noisy BEHC with feedback can be viewed as a strongly connected FSC $P_{S^+,Y|X,S}$ with feedback and SI available causally at the encoder, as illustrated in Fig.~\ref{fig:setting}, and specified~by
\begin{align}
&P_{S^+,Y|X,S}(s_i,y_i|x_i,s_{i-1}) =P(y_i|x_i)P(s_i|x_i,s_{i-1}) \nonumber\\
&=P(y_i|x_i)\sum\nolimits_{j\in\{0,1\}} P(E_i=j)P(s_i|x_i,s_{i-1},E_i=j) \nonumber\\
&=P(y_i|x_i)\big(
\bar{\eta} \mathbbm{1}\{s_i=s_{i-1}-x_i\} + \eta \mathbbm{1}\{s_i=1\}\big), \label{eq:stateEvolutionEH}
\end{align}
where $x_i\le s_{i-1}$, and in the noiseless channel scenario, the FSC $P_{S^+|X,S}$ is characterized by
\begin{align}
P_{S^+|X,S}(s_i=0|x_i,s_{i-1}) & =\bar{\eta} \mathbbm{1}\{x_i=s_{i-1}\}.
\label{eq:stateEvolutionNoiselessEH}
\end{align}
The capacity of such general FSC $P_{S^+,Y,X,S}$ is denoted by $C_{\text{fb-csi}}$, and is studied in \cite{shemuel2024finite}.

\vspace{-0.28cm}
\subsection{A Lower Bound on the Feedback Capacity of FSCs With SI Available Causally at the Encoder}
In \cite{shemuel2024finite}, a single-letter lower bound on $C_{\text{fb-csi}}$ is derived using $Q$-graphs, forming the foundation of our main results.
Given a $Q$-graph, an auxiliary RV $U$ with a specified cardinality $|\cU|$, a strategy function $f: \mathcal U \times \mathcal S \to \mathcal X$ and an input distribution $P(u^+|u,q)$, the transition matrix $P(s^+,u^+,q^+|s,u,q)$ is defined~as
\begin{align}
\label{eq:suq_transition}
& P(s^+,u^+,q^+|s,u,q) = \sum\nolimits_{x,y} P(u^+|u,q) \mathbbm{1}\{x=f(u^+,s)\} \nn\\
&\quad \times \mathbbm{1}\{ q^+=g(q,y) \}  P_{S^+,Y|X,S}(s^+,y|x,s).
\end{align}
The set $\mathcal{P}_{\pi}$ denotes all $P(u^+|u,q)$ that induce a transition matrix \eqref{eq:suq_transition} with a unique stationary distribution over $(S,U,Q)$, denoted by $\pi(s,u,q)$. 
An input distribution $P(u^+|u,q)\in \mathcal{P}_\pi$ is said to be \textit{BCJR-invariant} if it satisfies the Markov chain
\begin{align}
 (U^+,S^+)-Q^+-(Q,Y). \label{eq:BCJR_Def}
\end{align}
\begin{theorem}(\!\!\cite[Th.~5]{shemuel2024finite}):
 \label{theorem:qgraph_LB}
For any $Q$-graph, given a fixed finite cardinality $\left|\mathcal{U}\right|$ ($U^+,U \in \cU$) and a function $f: \mathcal U \times \mathcal S \to \mathcal X$, the feedback capacity is lower bounded by
\begin{align}\label{eq:Theorem_Lower}
C_{\text{fb-csi}}\geq I(U^+,U;Y|Q),
\end{align}
for all $P(u^+|u,q)\in\mathcal{P}_\pi$ that are BCJR-invariant \eqref{eq:BCJR_Def}, where the joint distribution is given by
\begin{align}
\label{eq:Qgraph_joint_dist}
  P&_{S,U,Q,X,Y,S^+,U^+,Q^+} = \pi_{S,U,Q} P_{U^+|U,Q} \mathbbm{1}\{X = f(U^+,S)\} \nn\\ &\quad \times P_{S^+,Y|X,S} \mathbbm{1}\{Q^+ = g(Q,Y)\}. 
\end{align}
\end{theorem}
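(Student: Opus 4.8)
The plan is to establish \eqref{eq:Theorem_Lower} by achievability: for the fixed $Q$-graph, cardinality $|\mathcal U|$, strategy $f$, and BCJR-invariant $P(u^+|u,q)\in\mathcal P_\pi$, I will exhibit one admissible feedback-dependent input process and show that its normalized directed information tends to the right-hand side. The starting point is the multi-letter characterization of $C_{\text{fb-csi}}$ from \cite{shemuel2024finite}, of the form $C_{\text{fb-csi}}=\lim_{n\to\infty}\tfrac1n\sup I(U^n\to Y^n)$, the supremum being over auxiliary processes $U$ with memory together with strategies $X_i=f_i(U_i,S_{i-1})$. Since this supremum dominates the value attained by any single admissible process, it suffices to lower bound $\tfrac1n I(U^n\to Y^n)$ for one well-chosen process and then let $n\to\infty$.

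For that process I let $(S_i,U_i,Q_i)$ evolve as the Markov chain governed by the transition kernel \eqref{eq:suq_transition}: draw $U_i\sim P(u_i|u_{i-1},q_{i-1})$, set $X_i=f(U_i,S_{i-1})$, pass $(X_i,S_{i-1})$ through $P_{S^+,Y|X,S}$ to obtain $(S_i,Y_i)$, and update $Q_i=g(Q_{i-1},Y_i)$. This is a legitimate strategy with feedback because $U_i$ depends only on $(U_{i-1},Q_{i-1})$ and $Q_{i-1}=g^{i-1}(q_0,Y^{i-1})$ is a deterministic function of the feedback. Since $P(u^+|u,q)\in\mathcal P_\pi$, the chain has a unique stationary law $\pi(s,u,q)$; invoking the standard insensitivity of $C_{\text{fb-csi}}$ to the initial state for strongly connected FSCs, I may evaluate the multi-letter formula with the chain initialized at $\pi$, so that $(U_i,Y_i)$ is stationary and the joint law of consecutive variables matches \eqref{eq:Qgraph_joint_dist}.

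The core of the argument is a sufficiency (``BCJR'') property of the $Q$-node: for the stationary process, $P(s_{i-1},u_{i-1}\,|\,y^{i-1})=\pi(s_{i-1},u_{i-1}\,|\,q_{i-1})$, i.e., the Markov chain $Y^{i-1}-Q_{i-1}-(S_{i-1},U_{i-1})$ holds. I would prove this by induction on $i$: the inductive step writes the conditional law of $(S_{i-1},U_{i-1},U_i,X_i,Y_i,S_i)$ given $Y^{i-1}$ as exactly the single-letter joint \eqref{eq:Qgraph_joint_dist} restricted to $Q=q_{i-1}$ -- using the inductive hypothesis for $(S_{i-1},U_{i-1})$, then $P(u_i|u_{i-1},q_{i-1})$, then $\mathbbm{1}\{X_i=f(U_i,S_{i-1})\}$ and $P_{S^+,Y|X,S}$ -- and then conditions additionally on $Y_i$ and applies the BCJR-invariance Markov chain \eqref{eq:BCJR_Def}, $(U^+,S^+)-Q^+-(Q,Y)$, together with $q_i=g(q_{i-1},y_i)$, to collapse the conditioning onto $q_i$ and close the recursion. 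Granting this, because $(U_i,X_i,Y_i,S_i)$ are generated from $(S_{i-1},U_{i-1})$ and the $Y^{i-1}$-measurable quantity $Q_{i-1}$ using fresh independent randomness, the Markov chain extends to $Y^{i-1}-Q_{i-1}-(U_{i-1},U_i,Y_i)$.

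Finally I would apply the chain rule and single-letterize:
\begin{align*}
I(U^n\to Y^n)&=\sum_{i=1}^n I(U^i;Y_i\,|\,Y^{i-1})\\
&\ge \sum_{i=2}^n I(U_{i-1},U_i;Y_i\,|\,Y^{i-1})\\
&= \sum_{i=2}^n I(U_{i-1},U_i;Y_i\,|\,Q_{i-1}),
\end{align*}
where the inequality discards the non-negative $i=1$ term and keeps only the coordinates $U_{i-1},U_i$ of $U^i$, and the last equality uses that $Q_{i-1}$ is a deterministic function of $Y^{i-1}$ together with the Markov chain of the previous step, so conditioning on $Y^{i-1}$ beyond $Q_{i-1}$ leaves the mutual information unchanged. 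By stationarity every summand equals $I(U^+,U;Y\,|\,Q)$ under \eqref{eq:Qgraph_joint_dist}, whence $\tfrac1n I(U^n\to Y^n)\ge\tfrac{n-1}{n}I(U^+,U;Y\,|\,Q)$; letting $n\to\infty$ and comparing with the multi-letter formula gives \eqref{eq:Theorem_Lower}. I expect the main obstacle to be the BCJR sufficiency property -- proving that the $Q$-node is a sufficient statistic for the posterior of $(S_{i-1},U_{i-1})$ is precisely where the hypothesis \eqref{eq:BCJR_Def} and membership in $\mathcal P_\pi$ are essential -- while the remaining ingredients (admissibility of the strategy, the chain-rule bound, and stationarity) are routine, and the reconciliation of the deterministic start $S_0=0$ with the stationary $\pi$ is a technical point handled via strong connectivity.
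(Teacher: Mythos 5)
Your proposal is correct and follows essentially the same route as the source of this result: the paper does not reprove Theorem~\ref{theorem:qgraph_LB} but imports it from \cite[Th.~5]{shemuel2024finite}, whose argument is exactly the one you outline --- evaluate the multi-letter directed-information achievability formula on the $(S,U,Q)$ Markov chain induced by \eqref{eq:suq_transition}, use BCJR-invariance \eqref{eq:BCJR_Def} to show by induction that $Q_{i-1}$ is a sufficient statistic of $Y^{i-1}$ for the posterior of $(S_{i-1},U_{i-1})$, and then single-letterize via the chain rule and stationarity. The only points worth flagging are bookkeeping ones you already acknowledge: the base case of the induction and the mismatch between the deterministic initial state and $\pi$, both handled by strong connectivity and uniqueness of the stationary law guaranteed by $\mathcal{P}_\pi$.
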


For the policy's auxiliary RVs $U,U^+$, we use the notations $U^{(Q=q)},U^{+(U=u,Q=q)}$ throughout the paper to denote them conditioned on $Q=q$ or $(U=u,Q=q)$, respectively, i.e., $U \mid Q=q$ and $U^+ \mid (U=u,Q=q)$.

\section{Main Results}
\label{sec:MainResults}
In this section, we present our main results. 
Using the $Q$-graph method, we derive sequences of single-letter convex lower and upper bounds on $C_{\text{BEHC}}$, which converge to the capacity as their limits coincide. Each term in these sequences is defined using the auxiliary RVs $(U,U^+,Q,Q^+)$, whose cardinality is $N+1$. To emphasize their dependence on $N$, we denote them as $(U^{(N)},U^{+(N)},Q^{(N)},Q^{+(N)})$, though we often omit the superscript $^{(N)}$ for simplicity.

We begin with the following theorem, which establishes the lower bound sequence based on Theorem~\ref{theorem:qgraph_LB}.
\begin{figure}[t]
\resizebox{\columnwidth}{!}{
\begin{psfrags}
    \psfragscanon
    \psfrag{E}[][][0.8]{$M$}
    \psfrag{A}[\hspace{2cm}][][1]{$Q=0$}
    \psfrag{B}[\hspace{2cm}][][1]{$Q=1$}
    \psfrag{C}[\hspace{2cm}][][1]{$Q=2$}
    \psfrag{G}[\hspace{2cm}][][0.75]{}
    \psfrag{D}[\hspace{2cm}][][0.65]{$Q=N-1$}
    \psfrag{E}[\hspace{2cm}][][1]{$Q=N$}
    \psfrag{F}[\hspace{2cm}][][0.75]{$X=0$}
    \psfrag{M}[\hspace{2cm}][][1]{$...$}
    \psfrag{H}[][][0.75]{$X=1$\hspace{0.1cm}}
    \psfrag{I}[\hspace{2cm}][][1]{$...$\hspace{0cm}}
    \psfrag{J}[\hspace{2cm}][][0.75]{$X=1$\hspace{-0.5cm}}
    \psfrag{M}[\hspace{2cm}][][0.75]{$X=1$\hspace{-0.30cm}}
    \psfrag{K}[\hspace{2cm}][][0.75]{$X=1$\hspace{0cm}}
    \psfrag{N}[\hspace{2cm}][][0.75]{$X=0$\hspace{0.5cm}}
    \psfrag{L}[\hspace{2cm}][][0.75]{$X=0,1$\hspace{0cm}}
    \includegraphics[scale=0.65]{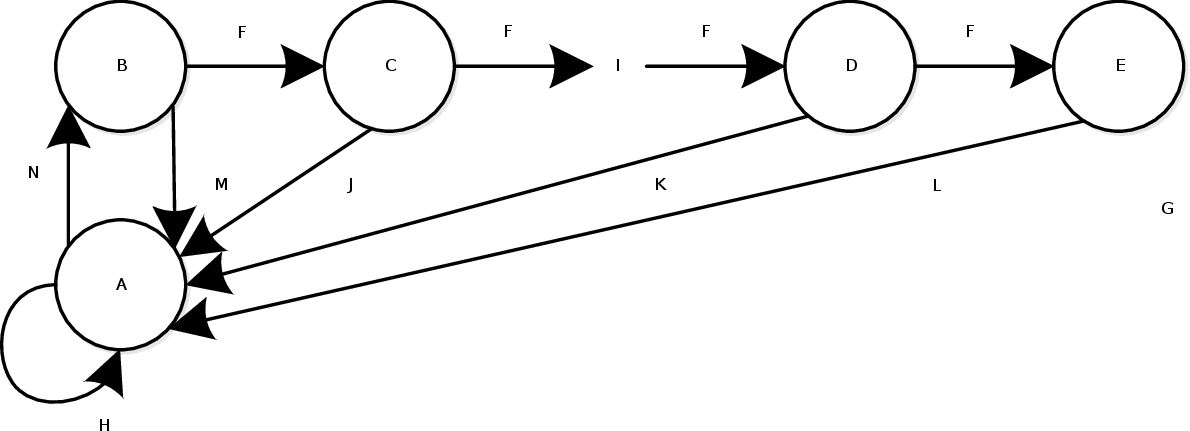}
\end{psfrags}
}
\caption{\small A $Q$-graph used for lower bounding $C_{\text{BEHC}}$ in Theorem~\ref{thr:EH_cvx_lower}.}
\label{fig:EH_LB_Graph}
\end{figure}

\begin{theorem}[Convex Lower Bound]
\label{thr:EH_cvx_lower}
For any integer $N\ge0$, the capacity of the BEHC is lower bounded by
\begin{align}
C_{\text{BEHC}}\geq I(U^{+(N)},U^{(N)};X\mid Q^{(N)}), \label{eq:EH_LB_cvx}
\end{align}
where $Q^{(N)}$ is defined on the vertices of the $Q$-graph illustrated in Fig.~\ref{fig:EH_LB_Graph}, with $|\cQ^{(N)}|= N+1$, the auxiliary RV sets are given by
\begin{align}           
 &\mspace{-10mu}\cU^{(Q=q)}\mspace{-5mu}=\mspace{-5mu}[0\mspace{-2mu}:\mspace{-2mu}q], \cU^{+(U=u,Q=q)}\mspace{-5mu}=\mspace{-5mu}\{0,u\mspace{-5mu}+\mspace{-5mu}1\}, \forall q\mspace{-2mu}\in\mspace{-2mu}[0\mspace{-2mu}:\mspace{-2mu}N\mspace{-3mu}-\mspace{-3mu}1], \mspace{-12mu}\label{eq:Uset_initialNodes}\\
 &\mspace{-10mu}\cU^{(Q=N)}=[0:N],  \cU^{+(Q=N)}=\{0\}, \label{eq:Uset_LastNode}    
\end{align}
and the joint distribution $P_{S,U,Q,X,S^+,U^+,Q^+}$ is given by \eqref{eq:Qgraph_joint_dist}, where $Y=X$. The FSC characterization $P_{S^+|X,S}$ is given in \eqref{eq:stateEvolutionNoiselessEH},
and the function $f: \mathcal U \times \mathcal S \to \mathcal X$ is defined as
\begin{align}
     f(U^+,S)&=S \mathbbm{1}\{U^+=0\}.
          \label{eq:x_behc}
\end{align}
The induced marginal $\pi_{S|U,Q}$ remains constant for any choice of policy $P_{U^+|U,Q}$, given \eqref{eq:Uset_initialNodes}–\eqref{eq:Uset_LastNode}, and is given by
\begin{align}
\label{eq:s_given_uq_lb}
\pi(S=0|u,q)=\bar{\eta}^{u+1}, \quad \forall q\in \cQ, \forall u\in \mathcal{U}^{(Q = q)}.
\end{align}
Optimizing over $P_{S,U,Q,X,S^+,U^+,Q^+}$ can be formulated as a convex optimization problem, detailed in Eq.~\eqref{optProb_EH_LB}, Section~\ref{subsec:EHconvex_lower}. Moreover, the optimized lower bound in \eqref{eq:EH_LB_cvx} converges to $C_{\text{BEHC}}$ as $N\to \infty$.
\end{theorem}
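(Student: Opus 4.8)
The plan is to verify the three assertions in Theorem~\ref{thr:EH_cvx_lower} in turn: that the displayed quantity is a valid instance of the lower bound in Theorem~\ref{theorem:qgraph_LB}, that the resulting optimization is convex, and that the optimized bound converges to $C_{\text{BEHC}}$ as $N\to\infty$. For the first part, I would start by checking that the $Q$-graph of Fig.~\ref{fig:EH_LB_Graph} together with the auxiliary alphabets \eqref{eq:Uset_initialNodes}--\eqref{eq:Uset_LastNode} and the strategy $f$ in \eqref{eq:x_behc} defines an admissible policy class, i.e.\ that the transition kernel \eqref{eq:suq_transition} has a unique stationary distribution $\pi_{S,U,Q}$ and that $P_{U^+|U,Q}$ can be chosen BCJR-invariant. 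The specific structure to exploit is that $U$ tracks ``how many consecutive channel uses since the last $X=1$'' (capped by the current $Q$-node), so that $\pi(S=0\mid u,q)=\bar\eta^{\,u+1}$ follows from \eqref{eq:stateEvolutionNoiselessEH} by a short induction on $u$: given $u$ idle steps, the battery is empty precisely when the harvest failed all $u+1$ times. Since this marginal does not depend on the policy, $I(U^+,U;Y\mid Q)=I(U^+,U;X\mid Q)$ (using $Y=X$ in the noiseless case) is exactly the right-hand side of \eqref{eq:Theorem_Lower}, so Theorem~\ref{theorem:qgraph_LB} gives the inequality \eqref{eq:EH_LB_cvx}.

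For convexity, I would defer the detailed construction to Section~\ref{subsec:EHconvex_lower} (Eq.~\eqref{optProb_EH_LB}) but indicate the mechanism here: because $\pi_{S|U,Q}$ is fixed, the only free variables are $P(u^+\mid u,q)$ and the stationary node distribution $\pi(q)$ they induce, and one reparametrizes in terms of the joint stationary quantities $\pi(u,q)$ and the edge weights; the BCJR-invariance constraint \eqref{eq:BCJR_Def} becomes a finite set of linear equalities in these variables, the stationarity condition is linear, and the objective $I(U^+,U;X\mid Q)$ is a concave function of the joint PMF on $(U,U^+,Q,X)$. Hence maximizing \eqref{eq:EH_LB_cvx} over the admissible policies is a concave maximization over a polytope, which is the claimed convex program.

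The convergence claim is the main obstacle and the step I expect to require the most care. The idea is that the $N$-node $Q$-graph of Fig.~\ref{fig:EH_LB_Graph} and the alphabet choice \eqref{eq:Uset_initialNodes}--\eqref{eq:Uset_LastNode} are ``complete enough'' to realize, in the limit, any finite-memory strategy for the equivalent timing-channel / FSC formulation: a walk on the graph records the run length since the last transmitted $1$, truncated at $N$, so as $N\to\infty$ the policy class $\mathcal P_\pi$ for this family exhausts the relevant strategies and the single-letter bound \eqref{eq:EH_LB_cvx} approaches the multi-letter feedback capacity $C_{\text{fb-csi}}=C_{\text{BEHC}}$. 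Concretely I would (i) fix an arbitrary $\epsilon>0$ and take a block-length-$k$ input distribution achieving within $\epsilon$ of $C_{\text{BEHC}}$ in the multi-letter characterization of \cite{shemuel2024finite}; (ii) show this distribution can be emulated (up to the truncation error incurred only on the low-probability event that an idle run exceeds $N$, which by \eqref{eq:s_given_uq_lb} has probability $O(\bar\eta^{\,N})$, i.e.\ geometrically small) by a stationary policy supported on the $N$-node graph with $N$ large; and (iii) conclude by continuity of directed information that the gap is $O(\bar\eta^{\,N})+\epsilon$, hence vanishes. The delicate points are confirming that the truncation at node $N$ genuinely loses only a geometrically small amount of rate — which is where \eqref{eq:s_given_uq_lb} and strong connectivity of the FSC (Definition~\ref{def:connectivity}) are used — and that the emulating policy can be taken BCJR-invariant without further rate loss; I would handle the latter by appealing to the same reduction argument used to establish Theorem~\ref{theorem:qgraph_LB} in \cite{shemuel2024finite}, noting that the matching upper bound established later in the paper pins the limit to $C_{\text{BEHC}}$ exactly.
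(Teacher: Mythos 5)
Your overall decomposition (validity as an instance of Theorem~\ref{theorem:qgraph_LB}, convexity, convergence) matches the paper's, and your first two parts are essentially right: the paper proves $\pi(S=0\mid u,q)=\bar\eta^{\,u+1}$ by exactly the induction you sketch (Lemma~\ref{lem:marginal_s_given_uq}), and the convexity argument (fixed $\pi_{S|U,Q}$, linear stationarity/channel/policy constraints, concave objective) is the content of Lemma~\ref{lemma:eh_lb_step_2}. One refinement you miss on the BCJR issue: the paper does not impose BCJR-invariance as a (linear or otherwise) constraint; it proves in Lemma~\ref{lemma:eh_lb_BCJR} that \emph{every} policy respecting the alphabets \eqref{eq:Uset_initialNodes}--\eqref{eq:Uset_LastNode} is automatically BCJR-invariant, by checking \eqref{eq:BCJR_check} only at the unique node ($Q^+=0$) with multiple incoming edges and showing both incoming cases evaluate to $\bar\eta$. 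This is what lets the constraint be dropped entirely from the program; as the paper notes, in general \eqref{eq:BCJR_Def} is a \emph{non-linear} equality in the joint PMF, so your claim that it "becomes a finite set of linear equalities" would not rescue convexity.

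The genuine gap is in your convergence argument. You propose to emulate a near-optimal multi-letter strategy on the $N$-node graph and claim the truncation error is $O(\bar\eta^{\,N})$ "by \eqref{eq:s_given_uq_lb}." That equation gives the probability that the \emph{battery} is empty after $u$ idle steps; it says nothing about the probability that an idle run exceeds $N$, which is governed by the encoder's policy, not by the harvesting process, and there is no a priori reason a near-optimal policy does not dwell at long idle runs (indeed, for small $\eta$ the optimal policy does exactly that, which is why the required $N$ in Table~\ref{table:noiseless_BEHC_full} grows as $\eta$ shrinks). You would also need to show the emulating policy is BCJR-invariant and stationary, which your sketch defers. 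The paper sidesteps all of this: it proves convergence of the lower bound \emph{jointly} with the upper bound (Theorem~\ref{thr:convergence}) by constructing a lower-bound-feasible policy $\tilde P$ that mimics the optimal upper-bound policy $\bar P^*$ on nodes $0,\dots,N-1$, showing by induction that the two induce the same conditional laws there, and bounding the residual gap by the contribution of node $Q=N$, namely $b_N-a_N\le \max_{p\in[0,1]} H_2(p)/(Np+1)$, which decays like $\log N/N$ --- not geometrically. The squeeze $a_N\le C_{\text{BEHC}}\le \bar C_{\text{BEHC}}(N)\le b_N$ then pins the common limit. Your closing remark that "the matching upper bound pins the limit" is the right instinct, but it is the entire proof, not a footnote; as written, your primary route (geometric truncation error from \eqref{eq:s_given_uq_lb}) does not go through.
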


The proof of Theorem \ref{thr:EH_cvx_lower} is provided in Section~\ref{sec:QgraphLB}. Note that $|\cU^{(Q=q)}|$ increases by $1$ with $q$. In general, $|\cU^{(N)}|=|Q^{(N)}|=N+1$. Furthermore, \eqref{eq:Uset_initialNodes}–\eqref{eq:Uset_LastNode} imply that the policy $P_{U^+|U,Q}$ is subject to
\begin{align}
P(u^+ \mid u, q) &= 0, \quad \forall q \in [0:N-1], \; \forall u \in \mathcal{U}^{(Q = q)}, \nonumber \\
&\quad \forall u^+ \in \mathcal{U}^{+(Q = q)} \setminus \{0, u+1\}, \label{eq:policy_first_nodes} \\
P_{U^+ \mid U, Q}(0 \mid u, N) &= 1, \quad \forall u \in \mathcal{U}^{(Q = N)}. \label{eq:policy_last_node}
\end{align}
In particular, this policy structure, combined with \eqref{eq:x_behc}, always attempts to transmit $X=1$ at node $Q=N$, with success determined by whether the battery is charged. While the BCJR constraint of the policy in its general form \eqref{eq:BCJR_Def} translates into non-linear equality constraints, Theorem~\ref{thr:EH_cvx_lower} demonstrates that the $Q$-graph lower bound on $C_{\text{BEHC}}$ in \eqref{eq:EH_LB_cvx} can be formulated as a convex optimization problem. 

We now turn to the upper bound sequence. Although a general single-letter $Q$-graph upper bound on $C_{\text{fb-csi}}$ was not derived in \cite{shemuel2024finite} due to the absence of a general cardinality bound on $\cU$, the following theorem establishes a sequence of such bounds specifically for $C_{\text{BEHC}}$, with a finite $\cU$ and a $Q$-graph modified from Fig.~\ref{fig:EH_LB_Graph} by a single edge.

\begin{figure}[t]
\resizebox{\columnwidth}{!}{
\begin{psfrags}
    \psfragscanon
    \psfrag{E}[][][0.8]{$M$}
    \psfrag{A}[\hspace{2cm}][][1]{$Q=0$}
    \psfrag{B}[\hspace{2cm}][][1]{$Q=1$}
    \psfrag{C}[\hspace{2cm}][][1]{$Q=2$}
    \psfrag{G}[\hspace{2cm}][][0.75]{$X=0$}
    \psfrag{D}[\hspace{2cm}][][0.65]{$Q=N-1$}
    \psfrag{E}[\hspace{2cm}][][1]{$Q=N$}
    \psfrag{F}[\hspace{2cm}][][0.75]{$X=0$}
    \psfrag{M}[\hspace{2cm}][][1]{$...$}
    \psfrag{H}[][][0.75]{$X=1$\hspace{0.1cm}}
    \psfrag{I}[\hspace{2cm}][][1]{$...$\hspace{0cm}}
    \psfrag{J}[\hspace{2cm}][][0.75]{$X=1$\hspace{-0.5cm}}
    \psfrag{M}[\hspace{2cm}][][0.75]{$X=1$\hspace{-0.30cm}}
    \psfrag{K}[\hspace{2cm}][][0.75]{$X=1$\hspace{0cm}}
    \psfrag{N}[\hspace{2cm}][][0.75]{$X=0$\hspace{0.5cm}}
    \psfrag{L}[\hspace{2cm}][][0.75]{$X=1$\hspace{0cm}}
    \includegraphics[scale=0.65]{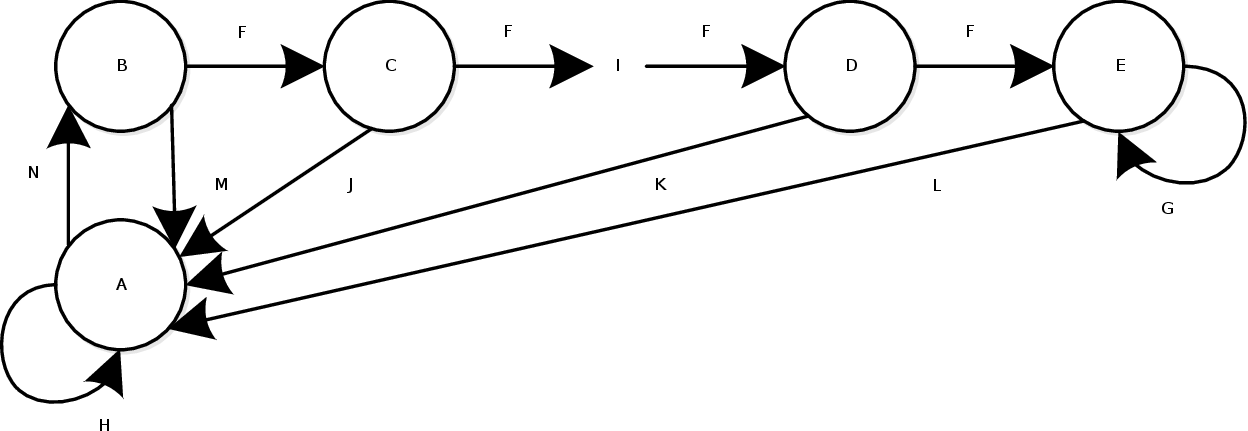}
\end{psfrags}
}
\caption{\small A $Q$-graph for upper bounding $C_{\text{BEHC}}$ in Theorem~\ref{thr:EH_cvx_upper}.}
\label{fig:EH_UB_Graph}
\end{figure}

\begin{theorem}[Convex Upper Bound]
\label{thr:EH_cvx_upper}
For any integer $N\ge0$, the capacity of the BEHC is upper bounded by 
\begin{align}
C_{\text{BEHC}}\leq \sup_{P(u^+|u,q)\in\mathcal{P}_\pi} I(U^{+(N)},U^{(N)};X \mid Q^{(N)}), \label{eq:EH_UB_cvx}
\end{align}
where $Q^{(N)}$ is defined on the vertices of the $Q$-graph illustrated in Fig.~\ref{fig:EH_UB_Graph}, with $|\cQ^{(N)}|= N+1$, the auxiliary RV sets are given by \eqref{eq:Uset_initialNodes} and
\begin{align}
    &\cU^{(Q=N)}=[0:N], \; \cU^{+(Q=N)}=\{0,1\}, \label{eq:Uset_LastNodeUB}
\end{align}
the joint distribution is given by
\begin{align}
\label{eq:Qgraph_joint_distUB}
  &P_{S,U,Q,X,S^+,U^+,Q^+}= \pi_{S,U,Q} P_{U^+|U,Q} \mathbbm{1}\{X = f(U^+,S)\}  \nn\\
  & \times P_{S^+|X,S,Q} \mathbbm{1}\{Q^+  = g(Q,X)\}, \\    
  &\mspace{-10mu}P_{S^+|X,S,Q}\mspace{-4mu} \triangleq \mspace{-4mu}
  \begin{cases}
     \mathbbm{1}\{S^+=1\}, & \text{if } Q\mspace{-4mu}\in\mspace{-4mu}\{N\mspace{-4mu}-\mspace{-4mu}1,\mspace{-3mu}N\}, \mspace{-4mu}X\mspace{-4mu}=\mspace{-4mu}0\\    
    P_{S^+|X,S} \text{ in \eqref{eq:stateEvolutionNoiselessEH}}
    & \text{otherwise}
   \end{cases}, \label{eq:modifiedChannelLaw}   
\end{align}
and the function $f(U^+,S)$ is defined in \eqref{eq:x_behc}.
The induced marginal $\pi_{S|U,Q}$ is constant for any choice of policy $P_{U^+|U,Q}$, given \eqref{eq:Uset_initialNodes} and \eqref{eq:Uset_LastNodeUB}, and is given by
\begin{align}    
\mspace{-15mu} \pi(S=0|u,q)&=
   \begin{cases}
      \bar{\eta}^{u+1},  &\mspace{-10mu} \forall q\in [0:N\mspace{-5mu}-\mspace{-5mu}1], \forall u\in \cU^{(Q=q)}\\
      0, &\mspace{-10mu}  q=N, \forall u\in \cU^{(Q=N)}
   \end{cases}. \label{eq:s_given_uq_upper}
\end{align}
Optimizing over the joint distribution $P_{S,U,Q,X,S^+,U^+,Q^+}$ can be formulated as a convex optimization problem, as detailed in Eq.~\eqref{optProb_EH_UB}, Section~\ref{subsec:EHconvex_upper}. Moreover, the upper bound in \eqref{eq:EH_UB_cvx} converges to $C_{\text{BEHC}}$ as $N\to \infty$.
\end{theorem}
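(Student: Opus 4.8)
The plan is to prove \eqref{eq:EH_UB_cvx} in two stages. \emph{Stage 1: a genie-aided channel.} Let $\tilde{\mathcal C}_N$ denote the EH channel obtained from the BEHC by replacing the battery dynamics \eqref{eq:BEHC_state_evolution} with the relaxed law $P_{S^+|X,S,Q}$ of \eqref{eq:modifiedChannelLaw}, i.e. whenever the $Q$-graph of Fig.~\ref{fig:EH_UB_Graph} sits at node $N-1$ or $N$ and the encoder sends $X=0$, the genie refills the battery. Since the genie only ever raises the state to its maximal value $1$ and \eqref{eq:BEHC_state_evolution} is monotone in $S_{i-1}$, a coupling on a common energy-arrival process shows the genie-aided battery stochastically dominates the true one; as in the noiseless setting the state influences the channel only through the constraint $X_i\le S_{i-1}$, an encoder that observes its own harvested energy can run any BEHC-feasible (SI-adapted) scheme on $\tilde{\mathcal C}_N$ against a ``virtual'' BEHC battery, producing the same output law. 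Hence every rate achievable on the BEHC is achievable on $\tilde{\mathcal C}_N$, so $C_{\text{BEHC}}\le C(\tilde{\mathcal C}_N)$, and $\tilde{\mathcal C}_N$ is a strongly connected FSC with feedback and causal SI to which the framework of \cite{shemuel2024finite} applies.

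\emph{Stage 2: a single-letter $Q$-graph upper bound for $\tilde{\mathcal C}_N$.} The point of the genie is memory truncation: whenever the $Q$-graph is at node $N$ the battery is deterministically full, so the conditional law of the state given the output history and the encoder's past actions depends on only finitely much information. Take the auxiliary $U$ to be the running count of consecutive ``do-not-transmit'' decisions since the last transmission \emph{attempt}, where by \eqref{eq:x_behc} a step is an attempt exactly when $U^+=0$; capping this counter against the $Q$-graph forces $U\in\cU^{(Q=q)}=[0:q]$ and $U^+\in\{0,u+1\}$ exactly as in \eqref{eq:Uset_initialNodes}, \eqref{eq:Uset_LastNodeUB}, so $|\cU^{(N)}|=N+1$. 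With this choice $(S,U,Q)$ evolves as the time-invariant Markov chain \eqref{eq:Qgraph_joint_distUB}, and its induced state-marginal is \eqref{eq:s_given_uq_upper}: right after any attempt the battery equals $E\sim\text{Bern}(\eta)$, so $\pi(S=0)=\bar\eta$ (counter value $u=0$), each subsequent do-not-transmit step sends $X=0$ and adds a fresh $E$, multiplying $\pi(S=0)$ by $\bar\eta$ and yielding $\bar\eta^{u+1}$ for $q\in[0:N-1]$, while the genie rule forces $\pi(S=0)=0$ at $q=N$; this is independent of the policy because whether a step is an attempt is a function of $U^+$ alone. Applying the directed-information characterization of $C_{\text{fb-csi}}$ from \cite{shemuel2024finite} to $\tilde{\mathcal C}_N$ and running the standard $Q$-graph converse manipulations (bound $H(Y_i\mid\text{past})$ by $H(Y_i\mid Q_{i-1})$ since $Q_{i-1}$ is a function of past outputs, keep $H(Y_i\mid U^i,S_{i-1},Q_{i-1})$ in the negative term since for an FSC $Y_i$ depends only on $(X_i,S_{i-1})$, then time-average and invoke stationarity of $(S,U,Q)$) collapses the bound to $\sup_{P(u^+|u,q)\in\mathcal P_\pi}I(U^{+(N)},U^{(N)};Y\mid Q^{(N)})$ with $U,U^+$ valued in the finite sets above; since $Y=X$ this is \eqref{eq:EH_UB_cvx}. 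Note the supremum is over all of $\mathcal P_\pi$ (no BCJR-invariance is imposed, unlike Theorem~\ref{thr:EH_cvx_lower}), so this bound is automatically at least the lower bound.

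The convexity assertion -- that maximizing $I(U^+,U;X\mid Q)$ over the joint law \eqref{eq:Qgraph_joint_distUB} subject to \eqref{eq:Uset_initialNodes}, \eqref{eq:Uset_LastNodeUB} is a convex program -- is deferred to Section~\ref{subsec:EHconvex_upper}; the key point is that substituting the closed-form $\pi_{S|U,Q}$ of \eqref{eq:s_given_uq_upper} eliminates the nonlinear stationarity constraints, leaving the usual mutual-information objective over the polytope of policies $P_{U^+|U,Q}$. For the convergence to $C_{\text{BEHC}}$, write $\text{LB}_N$ and $\text{UB}_N$ for the optimal values in Theorems~\ref{thr:EH_cvx_lower} and \ref{thr:EH_cvx_upper}; since $\text{LB}_N\le C_{\text{BEHC}}\le\text{UB}_N$ and $\text{LB}_N\to C_{\text{BEHC}}$ by Theorem~\ref{thr:EH_cvx_lower}, it suffices to bound $\text{UB}_N-\text{LB}_N$, and this I would do by comparing the two $(N+1)$-node constructions -- which use the same $f$, the same $\pi_{S|U,Q}$ off node $N$, and $Q$-graphs differing by a single edge -- so that the discrepancy is governed by the genie's intervention, whose stationary weight I expect to be of order $\bar\eta^{N}$ and hence vanishes as $N\to\infty$.

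The main obstacle is Stage 2. Because \cite{shemuel2024finite} supplies no cardinality bound on $\cU$ in the general encoder-SI setting, the crux is to show that for the memory-truncated channel $\tilde{\mathcal C}_N$ the explicit $(N+1)$-element counter $U$, together with the $(N+1)$-node $Q$-graph, is a sufficient statistic in the converse -- i.e. that conditioning on this finite pair rather than on the full output-and-action history loses nothing when single-letterizing the directed-information bound -- and to verify en route that the resulting $(S,U,Q)$ chain is well-defined (lies in $\mathcal P_\pi$) for the required policies and realizes \eqref{eq:s_given_uq_upper} independently of the policy.
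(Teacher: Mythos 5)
Your overall architecture matches the paper's: a genie-aided (``boosted'') channel that recharges the battery after $N$ consecutive zeros, reformulation as a strongly connected FSC with augmented state $(S,Q)$, a $Q$-graph converse with $U$ interpreted as the time since the last transmission \emph{attempt}, and convergence via bounding the gap to the lower bound. Your Stage~1 coupling argument for $C_{\text{BEHC}}\le C(\tilde{\mathcal C}_N)$ is sound (the paper simply asserts this), and your derivation of $\pi_{S|U,Q}$ is correct. However, you explicitly leave unproved the step you yourself identify as ``the main obstacle'': that the $(N+1)$-valued counter $U$ together with $Q$ is a sufficient statistic in the converse. This is not a routine verification --- it is the central technical contribution, since \cite{shemuel2024finite} provides no cardinality bound on $\cU$ for this setting. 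The paper resolves it in two dedicated steps: first a Markov lemma $X_i-(\hat U_{K_{i-1}}^i,Q_{i-1})-(\hat U^{K_{i-1}-1},X^{i-1})$ (Lemma~\ref{lem:boosted_markov}), where $K_{i-1}$ is the index of the last input $'1'$; then a reduction showing that (i) the battery constraint leaves only two feasible Shannon strategies per step (``always send $0$'' vs.\ ``attempt $1$''), and (ii) by Lemma~\ref{lem:tryTransmit1}, among the strategies since $K_{i-1}$ only the \emph{most recent} attempt matters, which is what collapses the exponentially many strategy histories to the $q+1$ values of the counter. On top of this one still needs the Jensen/time-averaging step over the sets $\mathcal D_{1/n}$ and the limit $\mathcal D_{1/n}\to\mathcal D_0$ to pass from a blocklength-$n$ bound to the stationary single-letter form. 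None of this is supplied or even sketched in your proposal, so \eqref{eq:EH_UB_cvx} is not established.

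A second, smaller issue: your convergence heuristic --- that the genie's intervention has stationary weight of order $\bar\eta^{N}$ --- is incorrect. The weight of node $N$ is policy-dependent, not governed by $\eta$: a policy that deliberately sends long runs of zeros can place large mass on $Q=N$ regardless of $\eta$. The paper's actual bound is
\begin{align}
b_N-a_N \;\le\; \bar P^*_Q(N)\, H_2\bigl(\bar P^*_{U^+|Q}(0|N)\bigr) \;\le\; \max_{p\in[0,1]}\frac{H_2(p)}{Np+1},
\end{align}
obtained by mimicking the optimal upper-bound policy on nodes $0,\dots,N-1$ inside the lower-bound $Q$-graph and then bounding $\bar P^*_Q(N)$ via a comparison chain $\bar P'$; crucially this bound is uniform in $\eta$ and decays like $\log N/N$, not exponentially. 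Also note the mild circularity in deducing convergence of the upper bound from ``$\mathrm{LB}_N\to C_{\text{BEHC}}$ by Theorem~\ref{thr:EH_cvx_lower}'': in the paper both convergences are proved \emph{jointly} from the vanishing gap plus the sandwich $a_N\le C_{\text{BEHC}}\le b_N$, so neither can be invoked as a prior fact to establish the other.
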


The proof of Theorem~\ref{thr:EH_cvx_upper} is provided in Section~\ref{sec:QgraphUB}.
The key difference between Theorem~\ref{thr:EH_cvx_lower} and Theorem~\ref{thr:EH_cvx_upper} lies in the treatment of the last node $Q=N$. Specifically, in Fig.~\ref{fig:EH_LB_Graph}, the outgoing edge $'X=0'$ from $Q=N$ leads to $Q=0$, whereas in Fig.~\ref{fig:EH_UB_Graph}, it loops back to $Q=N$. Furthermore, \eqref{eq:Uset_LastNodeUB} implies that the policy $P_{U^+|U,Q}$ in Theorem~\ref{thr:EH_cvx_upper} is subject to 
\begin{align}
    P_{U^+|U,Q}(u^+|u,N)=0, \forall u\in \cU^{(Q=N)}, \forall u^+\in [2:N], \label{eq:policy_last_node_UB}
\end{align}
rather than \eqref{eq:policy_last_node} in the lower bound. Unlike the lower bound, the upper bound policy is neither obligated to attempt transmitting $X=1$ in the last node nor satisfies the BCJR property \eqref{eq:BCJR_Def} (particularly at $Q=N$, as can be verified). Furthermore, $P_{S^+|X,S,Q}$ in \eqref{eq:modifiedChannelLaw}, which replaces $P_{S^+|X,S}$ from Theorem~\ref{thr:EH_cvx_lower} in both the joint distribution and the transition matrix \eqref{eq:suq_transition}, differs at the last node $Q^+\mspace{-3mu}=\mspace{-3mu}N\mspace{-3mu}=\mspace{-3mu}g(Q\mspace{-3mu}=\mspace{-3mu}N-1,X\mspace{-3mu}=\mspace{-3mu}0)$, where $S^+\mspace{-3mu}=\mspace{-3mu}1$. This implies that when $Q\mspace{-3mu}=\mspace{-3mu}N$, reached after $N$ consecutive zero inputs, the battery is charged with probability~$1$, as also indicated in Eq.~\eqref{eq:s_given_uq_upper}. Thus, for any integer $N\ge0$, we implicitly consider a scenario whose capacity upper bounds $C_{\text{BEHC}}$, and aids in proving Theorem~\ref{thr:EH_cvx_upper}. 
Notably, an upper bound based on a Markov $Q$-graph of order $N$ could also be established. However, we focus on the $Q$-graph depicted in Fig.~\ref{fig:EH_UB_Graph}, since its number of nodes scales linearly with $N$ rather than exponentially.

The combination of Theorems~\ref{thr:EH_cvx_lower} and~\ref{thr:EH_cvx_upper} enables the evaluation of $C_{\text{BEHC}}$ using convex optimization algorithms, with a decreasing gap between the lower and upper bounds as $N$ increases. For any harvesting probability $\eta\in \{0.1,0.2,\dots,0.9\}$, we evaluate $C_{\text{BEHC}}(\eta)$ with a precision of ${1e-6}$ using CVX\footnote{\textit{CVX} is a MATLAB-based modeling system for convex optimization. We provide MATLAB code for evaluating the convex lower and upper bounds detailed in \eqref{optProb_EH_LB} and \eqref{optProb_EH_UB}, respectively, for any parameter $\eta$ and integer $N>0$. This code is openly accessible at the following \textit{GitHub} repository:\\ \url{https://github.com/Eli-BGU/Energy-Harvesting-CVX}.}. Our results are compared to the tightest bounds reported in the literature from \cite{tutuncuoglu2017binary}.
The results are summarized in Table~\ref{table:noiseless_BEHC_full}, where our lower and upper bounds are denoted by \textit{$Q$-graph LB} and \textit{$Q$-graph UB}, respectively. We also present the number of nodes required to achieve the specified precision, $|\cQ^{(N)}|=N+1$. Notably, smaller harvesting probabilities $\eta$ require a larger $N$ to achieve the same precision. This trend arises because, for larger $\eta$, when the encoder intends to transmit $N$ consecutive zero inputs, the battery is more likely to be charged afterward, and 
conversely, for smaller $\eta$, it is less likely. For all investigated values of $\eta$, our bounds outperform the best-known results in the literature. While the results from \cite{tutuncuoglu2017binary} are near-tight for larger values of $\eta$, our method provides accurate solutions across all regimes of $\eta$.

\begin{table}[t]
\caption{Bounds on the capacity of the BEHC with i.i.d. harvesting probability $E_i\sim\text{Bern}(\eta)$.}
\label{table:noiseless_BEHC_full}
\centering
 \begin{tabular}{|c||c||c|c|c||c|}
 \hline
 $\eta$ & \textit{LB} \cite{tutuncuoglu2017binary}& ${Q\text{-graph LB}}$ & $|\cQ^{(N)}|$ & ${Q\text{-graph UB}}$ & \textit{UB} \cite{tutuncuoglu2017binary}\\
 \hline
 \textbf{0.1} & 0.2317 & \textbf{0.234150} & 100 &\textbf{0.234151} & 0.2600\\
 \hline
 \textbf{0.2} & 0.3546 & \textbf{0.360193} & 70 & \textbf{0.360194} & 0.3871\\
 \hline
 \textbf{0.3} & 0.4487 & \textbf{0.457051} & 40 & \textbf{0.457052} & 0.4740\\
 \hline
 \textbf{0.4} & 0.5297 & \textbf{0.538820} & 30 & \textbf{0.538821} & 0.5485\\
 \hline
 \textbf{0.5} & 0.6033 & \textbf{0.610944} & 20 & \textbf{0.610945} & 0.6164\\
 \hline
 \textbf{0.6} & 0.6729 & \textbf{0.678468} & 16 & \textbf{0.678469} & 0.6807\\ 
 \hline
 \textbf{0.7} & 0.7403 & \textbf{0.743533} & 12 & \textbf{0.743534} & 0.7442\\ 
 \hline
 \textbf{0.8} & 0.8088 & \textbf{0.810034} & 8 & \textbf{0.810035} & 0.8101\\
 \hline
 \textbf{0.9} & 0.8845 & \textbf{0.884596} & 7 & \textbf{0.884597} & 0.8846\\
 \hline
 \end{tabular}
\end{table}

\section{Convex Optimization Formulations}
\label{sec:EHconvex}
In this section, we formulate the convex optimization problems for the lower and upper bounds on $C_{\text{BEHC}}$ provided in Theorem~\ref{thr:EH_cvx_lower} and Theorem~\ref{thr:EH_cvx_upper}, and we prove their convexity.

\subsection{Convex Lower Bound on $C_{\text{BEHC}}$}
\label{subsec:EHconvex_lower}
Prior to outlining the convex optimization problem corresponding to the lower bound sequence in Theorem~\ref{thr:EH_cvx_lower}, we first describe its $Q$-graph with $|\cQ|=N+1, N>0$, as illustrated in Fig.~\ref{fig:EH_LB_Graph}. For all nodes $Q\in\cQ=[0:N]$, the outgoing edge $'X=1'$ leads to node $Q=0$. For any node $Q\in [0: N-1]$, the outgoing edge $'X=0'$ leads to node $Q^+=Q+1$. However, for the last node $Q=N$, the outgoing edge $'X=0'$ leads back to the initial node $Q=0$. 

Having described the $Q$-graph, we now define the optimization variables. These variables correspond to the joint PMF $\cS \times \cU \times \cQ \times \cX \times \cS \times \cU \times \cQ$, where $\cU=\cQ$, and they are distributed according to \eqref{eq:Qgraph_joint_dist}. The policy $P_{U^+|U,Q}$ is constrained by \eqref{eq:policy_first_nodes}–\eqref{eq:policy_last_node}, and the strategy function is specified in \eqref{eq:x_behc}.
This policy structure, which is general for any given integer $N>0$ determining the $Q$-graph size, implies that the cardinalities $|\cU^{(Q=q)}|=q+1$ and $|\cU^{+(Q=q)}|=q+2$ increase with $q\in [0:N-1]$, and specifically $|\cU^{+(U=u,Q=q)}|=2$ for any $u$, while for the last node $|\cU^{(Q=N)}|=N+1, |\cU^{+(Q=N)}|=1$. These cardinalities are consistent with \eqref{eq:Uset_initialNodes}–\eqref{eq:Uset_LastNode}.
Based on the given joint PMF, we define the optimization variables as $\vv \mspace{-3mu}\triangleq\mspace{-3mu} P_{S,U,Q,X,S^+,U^+,Q^+}$, representing its non-zero elements.

Next, we define four sets of functions that serve as the constraints for the convex optimization problem.

\subsubsection{Stationary Distribution} 
Since $\vv$ has a stationary distribution over $(\cS, \cU, \cQ)$, the marginal distributions of $(S,U,Q)$ and $(S^+,U^+,Q^+)$ must be equal. That is, $P_{S,U,Q}(\tilde{s},\tilde{u},\tilde{q})=P_{S^+,U^+,Q^+}(\tilde{s},\tilde{u},\tilde{q})$ for all $(\tilde{s},\tilde{u},\tilde{q})\in \cS \times \cU^{(Q=\tilde{q})} \times \cQ$. Thus, for each realization $(\tilde{s},\tilde{u},\tilde{q})$, we define a constraint function:
\begin{align}\label{eq:constraints_stationary}
&f_i(\vv)  \triangleq \mspace{-10mu} \sum\limits_{s^+,u^+,q^+}\mspace{-20mu} P_{S,U,Q,X,S^+,U^+,Q^+}(\tilde{s},\mspace{-3mu}\tilde{u},\mspace{-3mu}\tilde{q},\mspace{-3mu}f(u^+,\mspace{-3mu}\tilde{s}),\mspace{-3mu}s^+,\mspace{-3mu}u^+,\mspace{-3mu}q^+) \nn\\ &\quad -\sum\nolimits_{s,u,q}P_{S,U,Q,X,S^+,U^+,Q^+}(s,\mspace{-3mu}u,\mspace{-3mu}q,\mspace{-3mu}f(\tilde{u},\mspace{-3mu}s),\mspace{-3mu}\tilde{s},\mspace{-3mu}\tilde{u},\mspace{-3mu}\tilde{q}),
\end{align}
where $i = 1,\dots,|\cS|\mspace{-5mu} \left(1+2+\cdots+(N+1)\right)\mspace{-5mu}=\mspace{-5mu}(N+1)(N+2)$ indexes all realizations. Note that the constraint functions in \eqref{eq:constraints_stationary} are linear in $\vv$.

\subsubsection{FSC Law}
The following constraint functions set
enforces that $\vv$ adheres to the FSC $P_{S^+|X,S}$ Markov chain $S^+-(X,S)-(U,Q,U^+)$, implied by \eqref{eq:Qgraph_joint_dist}, i.e., 
\begin{align}
&P\left(s,u,q,X=f(u^+,s),s^+,u^+\right)=P(s,u,q,u^+) \nn\\
&\quad \times P_{S^+|X,S}\left(s^+|f(u^+,s),s\right), \quad \forall s,u,q,s^+,u^+, \nn
\end{align}
Correspondingly, the constraint functions are defined as
\begin{align}\label{eq:constraints_channelLaw}
&f_i(\vv) \triangleq 
P\left(s,u,q,f(u^+,s),s^+,u^+,g(q,f(u^+,s))\right)-\\ 
& \sum\nolimits_{\tilde{s}^+} \mspace{-8mu} P\big(s,\mspace{-3mu}u,\mspace{-3mu}q,\mspace{-3mu}f\mspace{-2mu}(u^+\mspace{-3mu},\mspace{-3mu}s),\mspace{-3mu}\tilde{s}^+\mspace{-3mu}\mspace{-3mu},\mspace{-3mu}u^+\mspace{-3mu},\mspace{-3mu}g(q,\mspace{-3mu}f\mspace{-2mu}(u^+,s))\big) \mspace{-3mu} P\mspace{-2mu}\big(s^+\mspace{-2mu}|f\mspace{-2mu}(u^+\mspace{-3mu},\mspace{-3mu}s),\mspace{-3mu}s\big), \nn
\end{align}
for $i \mspace{-3mu}= \mspace{-3mu}(N\mspace{-3mu}+\mspace{-3mu}1)(N\mspace{-3mu}+\mspace{-3mu}2)\mspace{-3mu}+\mspace{-3mu}1,\dots,(N\mspace{-3mu}+\mspace{-3mu}1)(N\mspace{-3mu}+\mspace{-3mu}2)\mspace{-3mu}+\mspace{-3mu}2(N\mspace{-3mu}+\mspace{-3mu}1)(2N\mspace{-3mu}+\mspace{-3mu}1)\\=\mspace{-3mu}(N\mspace{-3mu}+\mspace{-3mu}1)(5N\mspace{-3mu}+\mspace{-3mu}4)$ that indexes all realizations. Note that $P_{S^+|X,S}$ remains constant for any choice of $\vv$, as defined in~\eqref{eq:stateEvolutionEH}.
Hence, the constraint functions in \eqref{eq:constraints_channelLaw} are linear in~$\vv$.

\subsubsection{Policy} 
The following constraint functions set enforces that $\vv$ adheres to the policy Markov chain $U^+-(U,Q)-S$ implied by \eqref{eq:Qgraph_joint_dist}, which can also be written as 
\begin{align}
{P(s,\mspace{-3mu}u,\mspace{-3mu}q,\mspace{-3mu}u^+\mspace{-3mu})}/{P(u,\mspace{-3mu}q,\mspace{-3mu}u^+\mspace{-3mu})}\mspace{-3mu}=\mspace{-3mu}P(s|u^+\mspace{-3mu},\mspace{-3mu}u,\mspace{-3mu}q)\mspace{-3mu}=\mspace{-3mu}\pi (s|u,\mspace{-3mu}q), \, \forall s,\mspace{-3mu}u,\mspace{-3mu}q,\mspace{-3mu}u^+\mspace{-3mu}. \nn   
\end{align} 
Since $|\cS|=2$ and $\pi(S=0|u,q)=1\mspace{-3mu}-\mspace{-3mu}\pi(S=1|u,q)$, the Markov chain holds if and only if $P(s=0,u,q,u^+)=\pi(S=0|u,q)P(u,q,u^+)$ for all $(u,q,u^+)$.
The corresponding constraint functions are given by
{
\setlength{\jot}{0pt}
\begin{align}\label{eq:constraints_policy}
& \mspace{-15mu} f_i(\vv) \triangleq \mspace{-5mu}
 \sum\nolimits_{\tilde{s}^+} \mspace{-5mu}P\mspace{-5mu}\left(s, \mspace{-3mu} u, \mspace{-3mu} q, \mspace{-3mu} f(u^+, \mspace{-3mu} s),\tilde{s}^+, \mspace{-3mu} u^+, \mspace{-3mu} g(q, \mspace{-3mu} f(u^+, \mspace{-3mu} s))\right)\mspace{-3mu}- \\&\mspace{-5mu}\sum\nolimits_{\tilde{s},\tilde{s}^+}\mspace{-5mu} P\mspace{-5mu} \left(\tilde{s},u,q,f(u^+,s),\tilde{s}^+,u^+,g(q,f(u^+,s))\right) \pi(s|u,q), \nn
\end{align}
}
where $s=0$, and the realization index $i$ spans from $i = (N\mspace{-3mu}+\mspace{-3mu}1)(5N\mspace{-3mu}+\mspace{-3mu}4)\mspace{-3mu}+\mspace{-3mu}1,\dots,(N\mspace{-3mu}+\mspace{-3mu}1)(5N\mspace{-3mu}+\mspace{-3mu}4)\mspace{-3mu}+\mspace{-3mu}(N\mspace{-3mu}+\mspace{-3mu}1)^2
  =(N\mspace{-3mu}+\mspace{-3mu}1)(6N\mspace{-3mu}+\mspace{-3mu}5)$. Since $\pi_{S|U,Q}$ remains constant for any choice of $\vv$, as given in \eqref{eq:s_given_uq_lb}, the constraint functions in \eqref{eq:constraints_policy} are linear in $\vv$.

\subsubsection{PMF} The final constraint function enforces that the PMF of $\vv$ is valid, i.e.,
\begin{align}\mspace{-3mu}
\label{eq:constraint_PMF}
  f_K(\vv)\mspace{-5mu}&\triangleq \mspace{-25mu} \sum_{s,u,q,s^+,u^+}\mspace{-25mu} P(s,\mspace{-3mu}u,\mspace{-3mu}q,\mspace{-3mu}f(u^+,\mspace{-3mu}s),\mspace{-3mu}s^+,\mspace{-3mu}u^+,\mspace{-3mu}g(q,\mspace{-3mu}f(u^+,\mspace{-3mu}s))\mspace{-3mu}  -\mspace{-3mu}  1,
\end{align}
where $K\mspace{-4mu} \triangleq \mspace{-4mu} (N\mspace{-4mu}+\mspace{-4mu}1)(6N\mspace{-4mu}+\mspace{-4mu}5)\mspace{-4mu}+\mspace{-4mu} 1\mspace{-4mu}=\mspace{-4mu}6N^2\mspace{-4mu}+\mspace{-4mu}11N\mspace{-4mu}+\mspace{-4mu}6$. Note that the constraint function in \eqref{eq:constraint_PMF} is linear $\vv$.

Using the constraint functions in \eqref{eq:constraints_stationary}–\eqref{eq:constraint_PMF}, we formulate the following convex optimization problem for lower bounding $C_{\text{BEHC}}$, consisting of $K$ linear constraints. 

\begin{tcolorbox}[colframe=black,colback=white, sharp corners,colbacktitle=white,coltitle=black,boxrule=0.45pt]
\vspace{-2mm}
\underline{Convex optimization problem formulation for the lower }\\\underline{bound sequence on the capacity of the BEHC, $C_{\text{BEHC}}$:}
\begin{align}\label{optProb_EH_LB}
& \underset{\vv}{\text{maximize}} &f_0(\vv)& \triangleq I(U^{+(N)},U^{(N)};X \mid Q^{(N)}) \nn\\
& \quad \quad \quad \text{s.t.} &f_i(\vv) & = \underline{0}, \; i = 1,\dots,K, \nn\\
& & \vv & \succeq \underline{0},
\end{align}
where $f_i(\vv)$ are detailed in \eqref{eq:constraints_stationary}–\eqref{eq:constraint_PMF}.
\vspace{-2mm}
\end{tcolorbox}

\begin{lemma}
\label{lemma:eh_lb_step_2}
For any integer $N\ge0$, Problem~\eqref{optProb_EH_LB}, corresponding to the $N+1$-node $Q$-graph for the lower bound, illustrated in Fig.~\ref{fig:EH_LB_Graph}, is a convex optimization problem. 
\end{lemma}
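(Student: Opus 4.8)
The plan is to verify the two standard ingredients of a convex program: (i) that every constraint $f_i(\vv)=0$, $i=1,\dots,K$, together with $\vv\succeq\underline 0$, defines a convex (indeed polyhedral) feasible set, and (ii) that the objective $f_0(\vv)=I(U^+,U;X\mid Q)$ is concave in $\vv$. For (i), I would go through the four constraint families in order. The stationarity constraints \eqref{eq:constraints_stationary}, the FSC-law constraints \eqref{eq:constraints_channelLaw}, and the PMF constraint \eqref{eq:constraint_PMF} are manifestly affine in the entries of $\vv$, since all of them are signed sums of coordinates of $\vv$ (the indicator $\mathbbm 1\{X=f(u^+,s)\}$ and $\mathbbm 1\{Q^+=g(Q,X)\}$ only select which coordinates appear, and $P_{S^+|X,S}$ from \eqref{eq:stateEvolutionNoiselessEH} is a fixed constant independent of $\vv$, as is noted right after each display). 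The only family that is a priori nonlinear is the policy/BCJR family \eqref{eq:constraints_policy}, because in general it encodes $P(s,u,q,u^+)=\pi(s|u,q)P(u,q,u^+)$ with $\pi$ itself depending on $\vv$; here the crucial point, established in Theorem~\ref{thr:EH_cvx_lower} and recalled in \eqref{eq:s_given_uq_lb}, is that with the prescribed support sets \eqref{eq:Uset_initialNodes}–\eqref{eq:Uset_LastNode} the marginal $\pi(S=0|u,q)=\bar\eta^{\,u+1}$ is a fixed constant, so \eqref{eq:constraints_policy} collapses to an affine relation in $\vv$. Hence all $K$ constraints are linear, and the feasible region is a polytope — in particular convex.

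For (ii), concavity of the objective, I would argue that $f_0$ is a mutual-information functional of the joint law, and that under the feasibility constraints this functional is a concave function of the free variables. The clean way is to observe that, once $Q^{(N)}$ is fixed by the deterministic graph map $g$ and the state marginal $\pi_{S|U,Q}$ is pinned by \eqref{eq:s_given_uq_lb}, the remaining degrees of freedom enter only through the conditional law $P_{U^+,U\mid Q}$ (equivalently $P_{U^+|U,Q}$ together with the induced $\pi_{U|Q}$), while the channel $X=f(U^+,S)$ with $S\sim\pi_{S|U,Q}$ acts as a fixed (policy-independent) stochastic kernel from $(U^+,U,Q)$ to $X$. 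Writing $I(U^+,U;X\mid Q)=\sum_q P(q)\,I(U^+,U;X\mid Q=q)$, each inner term is the mutual information of a fixed DMC driven by the input distribution $P_{U^+,U\mid Q=q}$, and mutual information is concave in the input distribution for a fixed channel; the convex combination over $q$ with weights $P(q)$ — which are themselves affine in $\vv$ — then needs a short additional argument, most easily by expressing $f_0$ as a difference of (conditional) entropies, $H(X\mid Q)-H(X\mid U^+,U,Q)$, each of which is a concave function of $\vv$ because entropy $H(\cdot)$ is concave and the arguments $P_{X,Q}$ and $P_{X,U^+,U,Q}$ are linear images of $\vv$. A sum/difference of such terms arranged so that the net is $H(X\mid Q)-H(X\mid U^+,U,Q)$ is concave precisely because the subtracted conditional entropy $H(X\mid U^+,U,Q)=\sum H(X,U^+,U,Q)-\sum H(U^+,U,Q)$ is linear in the conditional kernel — here one uses that $P_{X|U^+,U,Q}$ is fixed, so $H(X|U^+,U,Q)$ is a \emph{linear} functional of $\vv$, leaving $f_0 = (\text{concave}) - (\text{linear}) = \text{concave}$.

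I would then conclude: maximizing a concave objective over a polytope is a convex optimization problem, which is exactly the assertion. The small boundary case $N=0$ I would dispatch separately by noting the graph degenerates to a single node and the claim is trivial. I expect the main obstacle to be part (ii): one must be careful that the quantity being maximized is $I(U^+,U;X\mid Q)$ and not $I(U^+,U;Y\mid Q)$ with a policy-varying channel — the reduction works only because $X=f(U^+,S)$ and $\pi_{S|U,Q}$ is constant (by \eqref{eq:s_given_uq_lb}), which makes $P_{X|U^+,U,Q}$ policy-independent and hence $H(X|U^+,U,Q)$ genuinely linear rather than merely concave; pinning down this fixed-kernel structure cleanly is the crux, and everything else (linearity of the constraints, concavity of entropy) is routine.
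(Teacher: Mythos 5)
Your proposal is correct and follows essentially the same route as the paper: the constraints are affine because $P_{S^+|X,S}$ and $\pi_{S|U,Q}$ (the latter by \eqref{eq:s_given_uq_lb}) are policy-independent constants, and the objective $H(X\mid Q)-H(X\mid U^+,U,Q)$ is concave because the first term is concave (the paper writes $-H(X\mid Q)$ as a relative entropy, which is jointly convex) while the second is linear in $\vv$ precisely because $P_{X\mid U^+,U,Q}$ is a fixed kernel — the same crux you identify. The only cosmetic difference is that you briefly detour through the ``mutual information is concave in the input for a fixed channel'' viewpoint before settling on the entropy-difference decomposition the paper uses.
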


\begin{proof}
Since all constraint functions $f_i(\vv), i\in[1:K]$ are linear, it remains to show that the objective function $f_0(\vv)=H(X|Q)-H(X|U^+,U,Q)$ is concave in~$\vv$.

On the one hand, $-H(X|Q)\mspace{-4mu}=\mspace{-4mu}\sum_{x,q} P(q,x) \log \frac{P(q,x)}{P(q)}$ is the relative entropy, which is convex. Hence, $H(X|Q)$ is concave. On the other hand, $\mspace{-4mu}H(X\mspace{-2mu}|U^+\mspace{-6mu}=\mspace{-6mu}u^+,U\mspace{-6mu}=\mspace{-6mu}u,Q\mspace{-6mu}=\mspace{-6mu}q)$ is constant because $P(x|u^+,u,q)\mspace{-4mu}=\mspace{-4mu}\sum_{s} \pi(s|u,q) \mathbbm{1}\{x\mspace{-4mu}=\mspace{-4mu}f(u^+,s)\}$,
where $\pi(s|u,q)$ is constant, as given in \eqref{eq:s_given_uq_lb} and proved in Section~\ref{sec:QgraphLB}. This implies that $H(X|U^+,U,Q)$ is linear in $\vv$. Therefore, the objective function $f_0(\vv)$ is concave in $\vv$.
\end{proof}

\subsection{Convex Upper Bound on $C_{\text{BEHC}}$}
\label{subsec:EHconvex_upper}
The convex optimization problem for the upper bound closely resembles that of the lower bound in~\eqref{optProb_EH_LB}, but notable differences arise in the handling of the last node $Q=N$ in the $Q$-graph depicted in Fig.~\ref{fig:EH_UB_Graph}. Specifically, the outgoing edge $'X=0'$ from node $Q=N$ loops back to itself. The optimization variables are now the non-zero elements of the joint PMF $P_{S,U,Q,X,S^+,U^+,Q^+}$ in \eqref{eq:Qgraph_joint_distUB}, denoted by $\tilde{\vv}$.
Recall that this joint distribution includes $P_{S^+|X,S,Q}$ instead of $P_{S^+|X,S}$. The strategy function follows \eqref{eq:x_behc}, and the policy $P_{U^+|U,Q}$ satisfies \eqref{eq:policy_first_nodes}, while also adhering to \eqref{eq:policy_last_node_UB} instead of \eqref{eq:policy_last_node}. This general policy structure, applicable for any given $N$, induces the same auxiliary RV sets as in the lower bound, except that $|\cU^{+(Q=N)}|=2$. Moreover, $S \mid Q=N$ is always $1$, resulting in a cardinality of $1$.

These differences result in modifications to the FSC law constraints \eqref{eq:constraints_channelLaw} and the policy constraints \eqref{eq:constraints_policy}, as detailed below.
While the stationary distribution constraints \eqref{eq:constraints_stationary} and the PMF constraint \eqref{eq:constraint_PMF} remain unchanged (with $\tilde{\vv}$ replacing $\vv$), their indexing is updated.
The stationary distribution constraints now span up to $(N+1)^2$, and the PMF constraint indexing is adjusted based on the indexing of the following modified constraints.

\subsubsection{Modified FSC Law Constraint Functions}
The following constraint functions set
endorces that $\tilde{\vv}$ adheres to the Markov chain $S^+-(X,S,Q)-(U,U^+)$ implied by \eqref{eq:Qgraph_joint_distUB}. Accordingly, the FSC law constraint functions in \eqref{eq:constraints_channelLaw} are modified by replacing $P_{S^+|X,S}$ with $P_{S^+|X,S,Q}$. The redefined constraints are expressed as
\begin{align}\label{eq:constraints_channelLaw_modified}
&f_i(\tilde{\vv}) \triangleq 
P\left(s,u,q,f(u^+,s),s^+,u^+,g(q,f(u^+,s))\right)-\\ 
& \sum\nolimits_{\tilde{s}^+} \mspace{-10mu} P\mspace{-2mu} \big(s,\mspace{-4mu}u,\mspace{-4mu}q,\mspace{-4mu}f(u^+\mspace{-4mu},\mspace{-4mu}s),\mspace{-4mu}\tilde{s}^+\mspace{-4mu}\mspace{-4mu},\mspace{-4mu}u^+\mspace{-4mu},\mspace{-4mu}g(q,\mspace{-4mu}f(u^+,s))\big) P\mspace{-2mu}\big(s^+\mspace{-2mu}|f(u^+\mspace{-4mu},\mspace{-4mu}s),\mspace{-4mu}s,\mspace{-4mu}q\big), \nn
\end{align}
for $i \mspace{-3mu}=\mspace{-3mu} (N\mspace{-3mu}+\mspace{-3mu}1)^2\mspace{-3mu}+\mspace{-3mu}1,\dots,\mspace{-3mu} (N\mspace{-3mu}+\mspace{-3mu}1)^2\mspace{-3mu}+4(N\mspace{-3mu}+\mspace{-3mu}1)^2 =5(N\mspace{-3mu}+\mspace{-3mu}1)^2$, indexing all $(s,u,q,s^+,u^+)$. Note that $P_{S^+|X,S,Q}$ remains constant for any choice of $\tilde{\vv}$, as given in \eqref{eq:modifiedChannelLaw}. Hence, the modified constraint functions in \eqref{eq:constraints_channelLaw_modified} are linear in~$\tilde{\vv}$.

\subsubsection{Modified Policy Constraint Functions}
The constraint functions for the policy resemble those in \eqref{eq:constraints_policy}, but with
$\pi_{S|U,Q}$ given by \eqref{eq:s_given_uq_upper} instead of \eqref{eq:s_given_uq_lb}. Furthermore, the realization index $i$ spans from 
$i \mspace{-3mu}=\mspace{-3mu} 5(N\mspace{-3mu}+\mspace{-3mu}1)^2\mspace{-3mu}+\mspace{-3mu}1,\dots,5(N\mspace{-3mu}+\mspace{-3mu}1)^2\mspace{-3mu}+\mspace{-3mu}(N\mspace{-3mu}+\mspace{-3mu}1)(N\mspace{-3mu}+\mspace{-3mu}2)=(N\mspace{-3mu}+\mspace{-3mu}1)(6N\mspace{-3mu}+\mspace{-3mu}7)$, since there are $N\mspace{-3mu}+\mspace{-3mu}1$ additional realizations due to $|\cU^{+(Q=N)}|\mspace{-3mu}=\mspace{-3mu}2$. 
The modified $\pi_{S|U,Q}$ remains constant for any choice of ${\vv}$; thus, the modified policy constraint functions are linear in~$\tilde{\vv}$.

Finally, $\tilde{K}\triangleq (N\mspace{-3mu}+\mspace{-3mu}1)(6N\mspace{-3mu}+\mspace{-3mu}7)\mspace{-3mu}+\mspace{-3mu}1=6N^2\mspace{-3mu}+\mspace{-3mu}13N\mspace{-3mu}+\mspace{-3mu}8$ indexes the last constraint function (PMF) and replaces $K$. Incorporating these modifications, we formulate the following convex optimization problem to establish a sequence of computable upper bounds on $C_{\text{BEHC}}$.

\begin{tcolorbox}[colframe=black,colback=white, sharp corners,colbacktitle=white,coltitle=black,boxrule=0.45pt]
\vspace{-2mm}
\underline{Convex optimization problem formulation for the upper}\\
\underline{bound sequence on the capacity of the BEHC, $C_{\text{BEHC}}$:}
\begin{align}\label{optProb_EH_UB}
& \underset{\tilde{\vv}}{\text{maximize}} &f_0(\tilde{\vv})& \triangleq I(U^{+(N)},U^{(N)};X \mid Q^{(N)}) \nn\\
& \quad \quad \quad \text{s.t.} &f_i(\tilde{\vv}) & = \underline{0}, \; i = 1,\dots,\tilde{K}, \nn\\
& & \tilde{\vv} & \succeq \underline{0},
\end{align}
where $f_i(\tilde{\vv})$ are detailed in \eqref{eq:constraints_stationary}, 
\eqref{eq:constraints_channelLaw_modified}, \eqref{eq:constraints_policy}, and \eqref{eq:constraint_PMF}.
\vspace{-2mm}
\end{tcolorbox}

\begin{lemma}
\label{lem:BEHC_convex_ub_formulation}
For any integer $N\ge0$, Problem~\eqref{optProb_EH_UB}, corresponding to the $N+1$-node $Q$-graph for the upper bound, illustrated in Fig.~\ref{fig:EH_UB_Graph}, is a convex optimization problem. 
\end{lemma}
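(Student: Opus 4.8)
The plan is to mirror the proof of Lemma~\ref{lemma:eh_lb_step_2}, adapting it to the modified $Q$-graph of Fig.~\ref{fig:EH_UB_Graph}, the modified channel law \eqref{eq:modifiedChannelLaw}, and the modified policy set \eqref{eq:policy_last_node_UB}. The argument splits into two parts: (i) every equality constraint $f_i(\tilde{\vv})$, $i\in[1:\tilde{K}]$, is affine in $\tilde{\vv}$; and (ii) the objective $f_0(\tilde{\vv}) = I(U^{+(N)},U^{(N)};X\mid Q^{(N)}) = H(X\mid Q) - H(X\mid U^+,U,Q)$ is concave in $\tilde{\vv}$. Combined with the nonnegativity constraint $\tilde{\vv}\succeq\underline{0}$, which carves out a polyhedron, this shows \eqref{optProb_EH_UB} maximizes a concave function over a convex set, i.e.\ it is a convex program.

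For part (i): the stationary-distribution constraints \eqref{eq:constraints_stationary} and the PMF constraint \eqref{eq:constraint_PMF} are sums of entries of $\tilde{\vv}$, hence manifestly linear, exactly as in the lower bound; only their index ranges change (reflecting $|\cU^{+(Q=N)}|=2$ and the deterministic $S\mid Q=N$). The modified FSC-law constraints \eqref{eq:constraints_channelLaw_modified} are linear precisely because $P_{S^+|X,S,Q}$ does not depend on $\tilde{\vv}$: it is the fixed kernel specified in \eqref{eq:modifiedChannelLaw}. Likewise, the modified policy constraints are linear because $\pi_{S|U,Q}$ is a fixed vector independent of $\tilde{\vv}$; this is the content of \eqref{eq:s_given_uq_upper}, which I may take as proved in Section~\ref{sec:QgraphUB}. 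Thus all $f_i$ are affine.

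For part (ii): as in Lemma~\ref{lemma:eh_lb_step_2}, $-H(X\mid Q) = \sum_{x,q} P(q,x)\log\frac{P(q,x)}{P(q)}$ is a relative entropy between two distributions that are linear images of $\tilde{\vv}$, hence convex, so $H(X\mid Q)$ is concave. For the second term, since the strategy $f$ in \eqref{eq:x_behc} is deterministic and $\pi(s\mid u,q)$ is a constant, $P(x\mid u^+,u,q) = \sum_{s}\pi(s\mid u,q)\,\mathbbm{1}\{x=f(u^+,s)\}$ is a fixed number; therefore $H(X\mid U^+=u^+,U=u,Q=q)$ is a constant, and $H(X\mid U^+,U,Q) = \sum_{u^+,u,q} P(u^+,u,q)\,H(X\mid u^+,u,q)$ is a linear function of the marginal $P_{U^+,U,Q}$, hence linear in $\tilde{\vv}$. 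Consequently $f_0(\tilde{\vv})$ is concave.

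\textbf{Main obstacle.} There is no deep obstacle here; the work is bookkeeping and one point of care. The only genuine difference from the lower bound is the last node $Q=N$: the self-looping $X=0$ edge together with the overridden kernel ($S^+=1$ whenever $Q\in\{N-1,N\}$ and $X=0$) forces $\pi(S=0\mid u,N)=0$ rather than $\bar{\eta}^{\,u+1}$. This changes the numerical value of the constant vector $\pi_{S|U,Q}$ but not its being constant, so the linearity of the policy constraints and of $H(X\mid U^+,U,Q)$ is unaffected; I simply substitute \eqref{eq:s_given_uq_upper} for \eqref{eq:s_given_uq_lb} throughout, and invoke the constancy of $P_{S^+|X,S,Q}$ and of $\pi_{S|U,Q}$ (both deferred to Section~\ref{sec:QgraphUB}). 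The remainder of the argument is identical to that of Lemma~\ref{lemma:eh_lb_step_2}.
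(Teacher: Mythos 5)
Your proposal is correct and matches the paper's argument, which simply states that the proof follows identically to that of Lemma~\ref{lemma:eh_lb_step_2} with the sole distinction that $\pi(s|u,q)$ is the constant given by \eqref{eq:s_given_uq_upper} rather than \eqref{eq:s_given_uq_lb}. You spell out the same linearity-of-constraints and concavity-of-objective reasoning in more detail, and correctly identify the last node $Q=N$ as the only point of difference.
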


The proof of Lemma~\ref{lem:BEHC_convex_ub_formulation} follows identically to that of Lemma~\ref{lemma:eh_lb_step_2},
with the sole distinction that $\pi(s|u,q)$ is constant, as given by \eqref{eq:s_given_uq_upper} rather than \eqref{eq:s_given_uq_lb}, which is proved in Section~\ref{sec:QgraphUB}.


\section{Proofs of Main Results}
\label{sec:proof_main_theorems}
This section provides the proofs of Theorems~\ref{thr:EH_cvx_lower} and~\ref{thr:EH_cvx_upper}, which constitute the core contribution of this work.

\subsection{Proof of Theorem \ref{thr:EH_cvx_lower} -- Lower Bound on \(C_{\text{BEHC}}\)}
\label{sec:QgraphLB}
Theorem~\ref{thr:EH_cvx_lower} is based on Theorem~\ref{theorem:qgraph_LB} and is proved in four parts.
First, the policy's compliance with the BCJR constraints is established in Lemma~\ref{lemma:eh_lb_BCJR}.
Second, the marginal \(\pi_{S|U,Q}\) is shown to satisfy \eqref{eq:s_given_uq_lb}, as detailed in Lemma~\ref{lem:marginal_s_given_uq}. Third, the convexity of the lower bound \eqref{eq:EH_LB_cvx} is demonstrated in Lemma~\ref{lemma:eh_lb_step_2}, Section~\ref{subsec:EHconvex_lower}. Finally, the convergence of the lower bound sequence to \(C_{\text{BEHC}}\) is proved in Theorem~\ref{thr:convergence}, Section~\ref{subsec:ProofThrConvergence}, alongside the convergence of the upper bound sequence.

\begin{lemma}
\label{lemma:eh_lb_BCJR}    
Assuming the auxiliary RV sets given in \eqref{eq:Uset_initialNodes}–\eqref{eq:Uset_LastNode}, any policy satisfies the BCJR constraints \eqref{eq:BCJR_Def}.    
\end{lemma}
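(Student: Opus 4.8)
The plan is to verify the BCJR Markov chain $(U^+,S^+)-Q^+-(Q,Y)$ directly on the structured $Q$-graph of Fig.~\ref{fig:EH_LB_Graph}. Since $Y=X$ here and the strategy function \eqref{eq:x_behc} gives $X=S\cdot\mathbbm{1}\{U^+=0\}$, the output is deterministic given $(U^+,S)$. First I would record the edge structure: from any node $q\in[0:N-1]$ the edge labeled $X=0$ leads to $q+1$ and the edge labeled $X=1$ leads to $0$; from $q=N$ the edge $X=0$ also returns to $0$. Hence the value $Q^+=q+1$ for $q\le N-1$ is reached \emph{only} from $Q=q$ with output $X=0$, so conditioning on $Q^+=q+1$ already pins down $(Q,Y)=(q,0)$ — the chain is then trivial for those $q^+\in[1:N]$. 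The only value of $Q^+$ that can be reached from several $(Q,Y)$ pairs is $Q^+=0$, reached from every $Q=q$ via $X=1$ (for $q\le N-1$) and from $Q=N$ via $X=0$ as well. So the entire content of the lemma reduces to checking \eqref{eq:BCJR_Def} at $Q^+=0$.

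Next I would compute, for $Q^+=0$, the conditional law of $(U^+,S^+,Q,Y)$ and show $(U^+,S^+)$ is independent of $(Q,Y)$ given $Q^+=0$. Using the joint distribution \eqref{eq:Qgraph_joint_dist} together with the policy support \eqref{eq:policy_first_nodes}–\eqref{eq:policy_last_node} and the marginal \eqref{eq:s_given_uq_lb} (which I may invoke, as it is proved separately in Lemma~\ref{lem:marginal_s_given_uq}), I would argue that reaching $Q^+=0$ from $Q=q<N$ forces $X=1$, which by \eqref{eq:x_behc} forces $U^+=0$ and $S=1$; and from $Q=N$ it forces $X=0$. In the first case $S^+$ is governed by \eqref{eq:stateEvolutionNoiselessEH} with $x=1,s=1$, giving $P(S^+=0)=\bar\eta\,\mathbbm{1}\{1=1\}=\bar\eta$; and at $Q=N$, the "$X=0$ at $Q=N$" edge with $S$ having whatever its stationary value is — here I need the crucial observation that the policy at $Q=N$ is \eqref{eq:policy_last_node}, $U^+=0$ with probability one, so again $X = S\cdot\mathbbm{1}\{U^+=0\}=S$, and the event $X=0$ at $Q=N$ means $S=0$, whose $S^+$ is $P(S^+=0)=\bar\eta\,\mathbbm{1}\{0=0\}=\bar\eta$ by \eqref{eq:stateEvolutionNoiselessEH}. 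In \emph{both} branches the post-transition marginal of $S^+$ is the same Bernoulli($\eta$) on $\{S^+=1\}$, and $U^+=0$ deterministically; moreover $U=U^{+}$-predecessor information is absorbed because, conditioned on $Q^+=0$, the relevant $U^+$ is forced to $0$ regardless of $(Q,Y)$. Therefore $(U^+,S^+)\perp(Q,Y)\mid Q^+=0$, which is exactly \eqref{eq:BCJR_Def}.

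The step I expect to be the main obstacle is the bookkeeping at $Q^+=0$: one must show not merely that $S^+$ has the same marginal in every incoming branch, but that the \emph{pair} $(U^+,S^+)$ has a law that does not depend on which $(Q,Y)$ produced the transition. The subtlety is that $U^+$ lives in a $q$-dependent alphabet $\cU^{+(U=u,Q=q)}=\{0,u+1\}$, so a priori its conditional law could carry information about $Q$; the resolution is that the only way to land on $Q^+=0$ (other than from $Q=N$) is via the $X=1$ edge, and \eqref{eq:x_behc} shows $X=1\Rightarrow U^+=0$, collapsing the alphabet to a single point. I would present this as: (i) degenerate cases $q^+\in[1:N]$ are immediate; (ii) for $q^+=0$, enumerate the incoming $(Q,Y,U^+,S)$ configurations allowed by the support constraints; (iii) check each yields $U^+=0$ and $S^+\sim\mathrm{Bern}(\eta)$ (on the $S^+=1$ coordinate), independently of the configuration; (iv) conclude \eqref{eq:BCJR_Def}. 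A remark that the constancy of $\pi_{S|U,Q}$ in \eqref{eq:s_given_uq_lb}, proved in the companion Lemma~\ref{lem:marginal_s_given_uq}, is what makes the incoming-branch computation policy-independent would close the argument.
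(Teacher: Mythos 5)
Your proposal is correct and follows essentially the same route as the paper: reduce the BCJR check to the single node $Q^+=0$ (the only node with multiple incoming edges), note that both incoming branches force $U^+=0$, and verify that each branch — $X=1$ from any $q$ (so $s=1$) and $X=0$ from $Q=N$ (so $s=0$ by \eqref{eq:policy_last_node}) — yields the same conditional law $P(S^+=0\mid Q^+=0)=\bar{\eta}$ via \eqref{eq:stateEvolutionNoiselessEH}. The only cosmetic difference is that the paper evaluates the explicit ratio form of the BCJR condition (its Eq.~\eqref{eq:BCJR_check}), in which the stationary marginal $\pi(u,s\mid q)$ cancels, so the appeal to Lemma~\ref{lem:marginal_s_given_uq} in your closing remark is not actually needed.
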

\begin{proof}
\label{appendix:ProofBCJR_CVX}
The BCJR constraint \eqref{eq:BCJR_Def} can alternatively be expressed as follows (see \cite[Eq.~12]{shemuel2024finite}):
\begin{align}
&\pi_{U^+,S^+|Q^+} (u^+,s^+|q^+) \label{eq:BCJR_check}\\
&=\frac{\sum_{u,s}\pi(u,s|q)P(u^+|u,q) \mathbbm{1}\{x = f(u^+,s)\} P(s^+|x,s)}{\sum_{u,u'^+,s} \pi(u,s|q)P(u'^+|u,q)\mathbbm{1}\{x = f(u'^+,s)\}}. \nn 
\end{align}
To prove the claim, we need to show that for any $(u^+,s^+, q^+)$, all $(q,x)$ satisfying $q^+ = g(q,x)$ yield the same result in the RHS of \eqref{eq:BCJR_check}. Thus, it suffices to verify this property for nodes with more than one incoming edge. In the $Q$-graph of Fig.~\ref{fig:EH_LB_Graph}, the only such node is $Q^+=0$, which arises in two cases:
\\\textit{Case (I):} $g(q,X=1)=0, \quad \forall q\in \cQ^{(N)}$.
\\\textit{Case (II):} $g(Q=N,X=0)=0$.\\
Since $\cU^{(Q=0)}=\{0\}, |\cS|=2$, it suffices to verify \eqref{eq:BCJR_check} for these two cases when $(u^+,s^+)=(0,0)$ only.

For Case (I), the RHS of \eqref{eq:BCJR_check} when $(u^+,s^+)=(0,0)$ is 
\begin{align}
\label{eq:BCJR_ev_eh1}
&\frac{\sum_{u,s}\mspace{-5mu}\pi(u,\mspace{-3mu}s|q)P(U^+\mspace{-6mu}=\mspace{-4mu}0|u,\mspace{-3mu}q)
\mspace{-3mu}\mathbbm{1}\mspace{-3mu}\{1 \mspace{-5mu}=\mspace{-5mu} f(0,\mspace{-3mu}s)\} P_{S^+|X,S}(0|1,\mspace{-3mu}s)}{\sum_{u,u'^+,s} \pi(u,s|q)P(U^+\mspace{-5mu}=\mspace{-5mu}u'^+|u,\mspace{-3mu}q)
\mathbbm{1}\{1 \mspace{-5mu}=\mspace{-5mu} f(u'^+\mspace{-3mu},\mspace{-3mu}s)} \nn\\
&\stackrel{(a)}=\frac{\sum_{u}\pi(u,S=1|q)P(U^+=0|u,q)P_{S^+|X,S}(0|1,1)}{\sum_{u} \pi(u,S=1|q)P(U^+=0|u,q)}
 \nn\\
&\stackrel{(b)}=\frac{\sum_{u}\pi(u,S=1|q)P(U^+=0|u,q)\bar{\eta}}{\sum_{u} \pi(u,S=1|q)P(U^+=0|u,q)}
 =\bar{\eta}.
\end{align}
Step (a) follows from \eqref{eq:x_behc}; 
Step (b) follows from \eqref{eq:stateEvolutionNoiselessEH}.

For Case (II), the RHS of \eqref{eq:BCJR_check} when $(u^+,s^+)=(0,0)$ is
\begin{align}
\label{eq:BCJR_ev_eh2}
& \frac{\sum_{u,s}\mspace{-4mu}\pi(u,\mspace{-3mu}s|q)\mspace{-2mu} P_{U^+|U,Q}(0|u,N)
\mathbbm{1}\mspace{-3mu}\{0 \mspace{-2mu}=\mspace{-5mu} f(0,\mspace{-3mu} s)\}P_{S^+|X,S}(0|0, \mspace{-3mu}s)}{\sum_{u,u'^+,s} \pi(u, \mspace{-3mu} s|q)P_{U^+|U,Q}(u'^+|u,\mspace{-3mu}N)
\mathbbm{1}\mspace{-3mu}\{0 \mspace{-3mu}=\mspace{-2mu} f(u'^+,\mspace{-3mu} s)\}
} \nn\\
& \stackrel{(a)}=\frac{\sum_{u,s}\mspace{-3mu} \pi(u,s|q)
\mathbbm{1}\{0 = f(0,s)\}
P_{S^+|X,S}(0|0,s)}{\sum_{u,s} \pi(u,s|q)\mathbbm{1}\{0 = f(0,s)\}}
 \nn\\ 
 & \stackrel{(b)} = \frac{\sum_{u}\pi(u,S=0|q)\mathbbm{1}\{0 = f(0,0)\} \bar{\eta}}{\sum_{u,s} \pi(u,s|q)
\mathbbm{1}\{0 = f(0,s)\}}\stackrel{(c)}=\bar{\eta}.
\end{align}
Step (a) follows from \eqref{eq:policy_last_node}; 
Step (b) follows from \eqref{eq:stateEvolutionNoiselessEH}; Step (c) follows from \eqref{eq:x_behc}, which implies that $x=s$ when $u^+=0$. Therefore, \eqref{eq:BCJR_ev_eh1} equals \eqref{eq:BCJR_ev_eh2}, completing the proof.
\end{proof}

\begin{lemma}
\label{lem:marginal_s_given_uq}
For any policy, assuming the auxiliary RV sets in \eqref{eq:Uset_initialNodes}–\eqref{eq:Uset_LastNode}, the marginal \(\pi_{S|U,Q}\) is constant and given by 
\begin{align}
\label{eq:s_given_uq_lb_induction}
\pi(S=0|u,q)=\bar{\eta}^{u+1}, \quad \forall q\in \cQ, \forall u\in \mathcal{U}^{(Q = q)}.
\end{align}
\end{lemma}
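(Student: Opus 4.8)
The plan is to prove \eqref{eq:s_given_uq_lb_induction} by induction on the node index $q$, exploiting the layered structure of the $Q$-graph in Fig.~\ref{fig:EH_LB_Graph}. The key observation is that, because of the strategy function \eqref{eq:x_behc}, a transition along an $'X=0'$ edge (from $Q=q$ to $Q=q+1$, for $q\in[0:N-1]$) must be accompanied by $U^+=u+1$ whenever it starts from a full battery, and may occur from either battery state, whereas a transition along an $'X=1'$ edge forces $S=1$ (the battery was charged and then emptied) and lands on $Q^+=0$ with $U^+=0$. So the quantity $\pi(S=0\mid u,q)$ should be read as: given that we are at node $q$ having taken $u$ consecutive $'X=0'$ steps since the last reset, the probability the battery is still empty. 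Intuitively, after the last $'X=1'$ (or the initial reset at $S_0=0$) the battery is empty, and it stays empty only if all $u+1$ subsequent i.i.d.\ Bern$(\eta)$ harvests failed — hence $\bar\eta^{\,u+1}$.

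Concretely, I would first handle the \textbf{base case}. The node $Q=0$ is reached only via an $'X=1'$ edge from some node, or from $Q=N$ via $'X=0'$, or as the initial node. In all these cases $\mathcal U^{(Q=0)}=\{0\}$, so I must show $\pi(S=0\mid u=0,q=0)=\bar\eta$. Using the transition matrix \eqref{eq:suq_transition} together with the channel law \eqref{eq:stateEvolutionNoiselessEH}: reaching $Q^+=0$ via $X=1$ requires $S=1$, and then $S^+=0$ with probability $\bar\eta$; reaching $Q^+=0$ via $X=0$ from $Q=N$ requires (by \eqref{eq:x_behc}, $X=0\Rightarrow$ either $U^+=0,S=0$ or $S=0$) that $S=0$, and by \eqref{eq:s_given_uq_upper}-type reasoning plus \eqref{eq:stateEvolutionNoiselessEH} the battery becomes empty again with probability $\bar\eta$. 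Summing over incoming edges with the stationary weights gives $\pi(S=0\mid 0,0)=\bar\eta$, matching $\bar\eta^{0+1}$, and — crucially — this value does not depend on the policy $P_{U^+\mid U,Q}$, because the conditional probability $\bar\eta$ factors out of the weighted average exactly as in steps (a)-(b) of \eqref{eq:BCJR_ev_eh1}.

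For the \textbf{inductive step}, fix $q\in[0:N-1]$ and a value $u\in\mathcal U^{(Q=q+1)}=[0:q+1]$, assuming the claim holds at node $q$. The only way to arrive at node $q+1$ is via the $'X=0'$ edge from node $q$; by \eqref{eq:policy_first_nodes} and \eqref{eq:Uset_initialNodes} this edge is consistent only with $U^+\in\{0,u'+1\}$ from state $U=u'$ at node $q$, and the strategy \eqref{eq:x_behc} forces that an $'X=0'$ step from a \emph{full} battery ($S=1$) must use $U^+=0$... wait, more carefully: $X=0$ holds iff $U^+=0$ or $S=0$; so the pair $(U=u-1,U^+=u)$ with $u\ge 1$ and $X=0$ forces $S=0$, while $(U,U^+)=(\cdot,0)$ with $X=0$ allows $S\in\{0,1\}$ but lands on $Q^+=0$, not $q+1$. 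Hence the only incoming $(U,S)$ pairs at node $q+1$ with new state $U^+=u$ come from $U=u-1$ at node $q$ with $S=0$. Then by \eqref{eq:stateEvolutionNoiselessEH}, starting from $S=0$ and applying $X=0$, we get $S^+=0$ with probability $\bar\eta$ and $S^+=1$ with probability $\eta$, independently of everything else. Therefore
\begin{align}
\pi(S^+=0\mid U^+=u,Q^+=q+1)
&= \bar\eta \cdot \pi(S=0\mid U=u-1,Q=q)
= \bar\eta\cdot\bar\eta^{\,(u-1)+1}
= \bar\eta^{\,u+1},\nn
\end{align}
where the middle equality is the induction hypothesis, and again the policy weights cancel because the factor $\bar\eta$ is a constant conditional probability pulled out of a convex combination. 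The last node $Q=N$ is then handled exactly as the other initial nodes for its entering edge, or — since it is never reset until an $'X=1'$ occurs — by the same recursion applied along the chain of $'X=0'$ steps $0\to 1\to\cdots\to N$, giving $\pi(S=0\mid u,N)=\bar\eta^{\,u+1}$ for $u\in[0:N]$.

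The \textbf{main obstacle} I anticipate is bookkeeping the interaction between the policy constraints \eqref{eq:policy_first_nodes}-\eqref{eq:policy_last_node}, the strategy function \eqref{eq:x_behc}, and the edge labels of the graph precisely enough to be sure that the \emph{only} incoming configuration at node $(u,q+1)$ is $(u-1,0)$ at node $q$ — in particular ruling out contributions where $S=1$ but $X=0$ was nonetheless played (which is possible only with $U^+=0$, and then the graph sends us to $Q^+=0$, not $q+1$). Once this combinatorial fact is nailed down, the probabilistic content is just the one-step recursion $\pi(S^+=0)=\bar\eta\,\pi(S=0)$ coming from \eqref{eq:stateEvolutionNoiselessEH}, and the policy-independence is immediate because $\bar\eta$ is constant. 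I would therefore devote the bulk of the write-up to a clean case analysis of the incoming edges of each node class (initial nodes $q=0$; intermediate nodes $q\in[1:N-1]$; terminal node $q=N$), and keep the arithmetic minimal.
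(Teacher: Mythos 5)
Your overall architecture (induction along the chain of $X=0$ edges, with a one-step recursion multiplying by $\bar{\eta}$) is the same as the paper's, and the final recursion formula you write down is the correct one, but two of the combinatorial claims you rest it on are false, and as written the argument does not close.

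First, from \eqref{eq:x_behc} we have $X=f(U^+,S)=S\,\mathbbm{1}\{U^+=0\}$, so when $U^+=u\ge 1$ the input is $X=0$ \emph{regardless} of $S$; your claim that the pair $(U=u-1,U^+=u)$ with $X=0$ "forces $S=0$" is therefore wrong. Transitions originating from $S=1$ also arrive at $(U^+=u,Q^+=q+1)$; they simply never produce $S^+=0$, because by \eqref{eq:stateEvolutionNoiselessEH} the event $\{S^+=0, X=0\}$ forces $S=0$ --- this is the implication the paper actually uses (its Step (c)), and it points in the opposite direction from yours. Note also that your own formula $\pi(S^+=0\mid u,q+1)=\bar{\eta}\cdot\pi(S=0\mid u-1,q)$ is inconsistent with your stated premise: if only $S=0$ states fed into $(u,q+1)$, then conditioning on arrival there would give exactly $\bar{\eta}$ for every $u$, not $\bar{\eta}^{u+1}$. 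The correct derivation keeps the $S=1$ arrivals in the normalizing denominator (where the policy weights cancel because $U^+$ is independent of $S$ given $(U,Q)$, so the conditional law of the incoming $S$ is $\pi_{S|U,Q}(\cdot\mid u-1,q)$), while they contribute zero to the numerator.

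Second, you assert that $(U^+=0,X=0)$ "lands on $Q^+=0$, not $q+1$". The graph transition is $Q^+=g(Q,X)$ and depends only on $X$: a failed attempt ($U^+=0$ with $S=0$, hence $X=S=0$) from a node $q\in[0:N-1]$ moves to $Q^+=q+1$. So $(U^+=0,\,Q^+=q+1)$ is reachable, and since your inductive step only produces values $u\ge 1$ at node $q+1$ while your base case covers only $q=0$, the quantities $\pi(S=0\mid 0,q)$ for $q\ge 1$ are never established. The paper's base case is "for all $q\in\cQ$, $u=0$" precisely to cover this; there, conditioning on $U^+=0$ forces $X=f(0,S)=S$ and hence $P_{S^+|S,X}(0\mid s,s)=\bar{\eta}$ directly. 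Repairing your write-up therefore requires (i) replacing the "arrival forces $S=0$" step by the "$S^+=0$ together with $X=0$ forces $S=0$" step, and (ii) adding the $u=0$, $q\ge 1$ case to the base of the induction.
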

\begin{proof}
We prove the statement by induction.\\ 
\textit{Base case:} For all $q\in \cQ$, when $u=0$ we have:
\begin{align}
    &\pi_{S|U,Q}(0|0,q)\stackrel{(a)}=\pi_{S^+|U^+,Q^+}(0|0,q) \nn\\
    &=\mspace{-3mu}\sum\nolimits_{s,x \mid s\ge x} P_{S,X,S^+|U^+,Q^+}(0,s,x|0,q)\nn\\
    &=\mspace{-3mu}\sum\nolimits_{s,x \mid s\ge x} \mspace{-7mu} P_{S,X|U^+,Q^+}(s,x|0,q) P_{S^+|S,X}(0|s,f(0,s))\mspace{-3mu}\stackrel{(b)}=\mspace{-3mu}\bar{\eta}. \nn
\end{align}
Step (a) follows from the stationarity of the distribution over $(S,U,Q)$. Step (b) follows from \eqref{eq:stateEvolutionNoiselessEH} and \eqref{eq:x_behc}, which imply that $P_{S^+|S,X}(0|s,f(0,s))=\bar{\eta}$.\\
\textit{Inductive hypothesis:} Suppose the statement holds for all $u \in \mathcal{U}^{(Q = q)}\setminus \{0,q\}, q \in \cQ \setminus \{0,N\}$. I.e., $\pi(S\mspace{-2mu}=\mspace{-2mu}0|u,\mspace{-2mu}q)\mspace{-2mu}=\mspace{-2mu}\bar{\eta}^{u+1}$.\\
\textit{Inductive step:} For $\tilde{u}= u+1, \tilde{q}= q+1$, we show that $\pi_{S|U,Q}(0|\tilde{u},\tilde{q})=\bar{\eta}^{\tilde{u}+1}$. Expanding:
\begin{align}
    &\pi_{S|U,Q}(0|\tilde{u},\tilde{q})\stackrel{(a)}=\pi_{S^+|U^+,Q^+}(0|\tilde{u},\tilde{q}) \nn\\    
    &\stackrel{(b)}=P_{S^+|U^+,Q^+,U,Q,X}(0|\tilde{u},\tilde{q},\tilde{u}-1,\tilde{q}-1,0) \nn\\
    &\stackrel{(c)}=P_{S,S^+|U^+,Q^+,U,Q,X}(0,0|\tilde{u},\tilde{q},\tilde{u}-1,\tilde{q}-1,0) \nn\\
    &\stackrel{(d)}=P_{S|U,Q}(0|\tilde{u}-1,\tilde{q}-1)P_{S^+|X,S}(0|0,0) 
    \nn \\    &\stackrel{(e)}=\bar{\eta}^{\tilde{u}}\bar{\eta}=\bar{\eta}^{\tilde{u}+1}. \nn    
\end{align}
Step (a) follows from the stationary distribution.
Step (b) follows since when $\tilde{u}\ne 0$, it implies $X=f(\tilde{u},s)=0, \forall s$. Moreover, $U^+=\tilde{u}, Q^+=\tilde{q}$ are reachable only from $U=\tilde{u}-1, Q=\tilde{q}-1$. Step (c) follows because if $X=0, S^+=0$, it is impossible for $S=1$ due to \eqref{eq:stateEvolutionNoiselessEH}. Step (d) follows from the Markov property of $P_{S^+|X,S}$ and the Markov chain $S-(U,Q)-(U^+,Q^+,X)$. Its proof is omitted for brevity, but follows from  
\eqref{eq:Qgraph_joint_dist} and Bayes' rule. Step (e) follows from the inductive hypothesis and \eqref{eq:stateEvolutionNoiselessEH}. Thus, the statement is proven.
   
\end{proof}




\subsection{Proof of Theorem \ref{thr:EH_cvx_upper} -- Upper Bound on $C_{\text{BEHC}}$}
\label{sec:QgraphUB}
To establish the upper bound, we analyze a variation of the BEHC, whose capacity exceeds $C_{\text{BEHC}}$. Specifically, for any integer $N\ge0$, consider a modified BEHC where transmitting $N$ consecutive zero channel inputs automatically charges the battery, i.e., $P(S_i=1|X_{i-N}^{i-1}=\mathbf{0}^N)=1$.
We refer to this variation as the \textit{boosted BEHC($N$)}. 
Let $\bar{C}_{\text{BEHC}}(N)$ denote the capacity of the boosted BEHC($N$). Then, it follows that $C_{\text{BEHC}} < \bar{C}_{\text{BEHC}}(N) , \, \forall N\ge0$.
To prove Theorem~\ref{thr:EH_cvx_upper}, we demonstrate that the RHS of \eqref{eq:EH_UB_cvx} serves as an upper bound for $\bar{C}_{\text{BEHC}}(N)$ for any $N$.
The proof consists of three main steps, formulated in Lemma~\ref{lem:C_bar_FSC}, Theorem~\ref{thr:FSC_Qgraph_UB} and Theorem~\ref{thr:cardinality} below. 

Observe that the FSC property \eqref{eq:BasicFSCMarkov} with the original state space does not hold for the boosted BEHC($N$). For instance, consider the case $N\mspace{-3mu}=\mspace{-3mu}2$: 
\begin{align}
&P\mspace{-3mu}\left(S_i\mspace{-3mu}=\mspace{-3mu}1|(X_{i-1},\mspace{-3mu} X_{i})\mspace{-3mu}=\mspace{-3mu}(0,\mspace{-3mu} 0),\mspace{-3mu} (S_{i-2}, \mspace{-3mu} S_{i-1})\mspace{-3mu}=\mspace{-3mu}(0,\mspace{-3mu} 0)\right)\mspace{-3mu}=\mspace{-3mu}1 \nn\\
&> P(S_i\mspace{-3mu}=\mspace{-3mu}1|(X_{i-1},X_{i})\mspace{-3mu}=\mspace{-3mu}(1,0),(S_{i-2},S_{i-1})\mspace{-3mu}=\mspace{-3mu}(1,0)). \nn
\end{align}
This demonstrates a dependence on $X_{i-1}, S_{i-2}$, violating the FSC property. Nevertheless, in the first step, we show that the boosted BEHC($N$) can be reformulated as a FSC by introducing a new state definition.

\begin{lemma}[Step 1]
\label{lem:C_bar_FSC}
The boosted BEHC($N$) can be reformulated as a strongly connected FSC $P_{\tilde{S}^+,Y|X,\tilde{S}
}$ with a new state $\tilde{S}_i \triangleq (S_i,Q_i^{(N)}
)$ and initial state $(s_0,q_0)=(0,0)$, given by
\begin{align}
\label{eq:NewFSC}
    &P_{\tilde{S}^+,Y|X,\tilde{S}}((s_i,q_i),y_i|x_i,(s_{i-1},q_{i-1}))
    =\mathbbm{1}\{y_i=x_i\} \nn\\& \times P_{S^+|X,S,Q^{(N)}}(s_i|x_i,s_{i-1},q_{i-1}) \mathbbm{1}\{q_i=g(q_{i-1},x_i)\},
\end{align}
where $P_{S^+|X,S,Q^{(N)}}$ is given in \eqref{eq:modifiedChannelLaw},
and $g(q_{i-1},x_i)$ corresponds to the transitions of the $N\mspace{-3mu}+\mspace{-3mu}1$-node $Q$-graph in Fig.~\ref{fig:EH_UB_Graph}.
\end{lemma}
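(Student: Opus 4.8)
The plan is to establish the two assertions of the lemma in turn: that the augmented process $\tilde S_i = (S_i, Q_i^{(N)})$ makes the boosted BEHC($N$) a bona fide FSC with kernel \eqref{eq:NewFSC}, and that this FSC is strongly connected.

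For the first part, I would verify the FSC Markov property \eqref{eq:BasicFSCMarkov} by splitting the transition of $(\tilde S_i, Y_i)$ into its three coordinates. The output coordinate is trivial, since the channel is noiseless, $Y_i = X_i$, yielding the factor $\mathbbm 1\{y_i = x_i\}$. The $Q$-coordinate is deterministic: because $Y_i = X_i$, the $Q$-graph recursion gives $Q_i^{(N)} = g(Q_{i-1}^{(N)}, X_i)$, yielding $\mathbbm 1\{q_i = g(q_{i-1}, x_i)\}$. The whole content lies in the battery coordinate. I would argue that in the boosted BEHC($N$) the conditional law of $S_i$ given the full past $(X^i, S^{i-1}, Q^{i-1})$ depends on that past only through $(X_i, S_{i-1}, Q_{i-1})$ and equals $P_{S^+|X,S,Q}$ of \eqref{eq:modifiedChannelLaw}. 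The reason is that the un-boosted recursion \eqref{eq:BEHC_state_evolution} already makes $S_i$ depend on the past only through $(X_i, S_{i-1})$, since $E_i$ is i.i.d.; the boost overrides $S_i$ to $1$ precisely on the event that the last $N$ channel inputs are all zero, and, by the structure of the $Q$-graph in Fig.~\ref{fig:EH_UB_Graph}, that event equals $\{Q_{i-1} \in \{N-1, N\}\} \cap \{X_i = 0\}$ (equivalently, $Q_i^{(N)} = N$), which is a function of $(Q_{i-1}, X_i)$ alone. A short case split on whether $Q_{i-1} \in \{N-1, N\}$ and on the value of $X_i$ then matches each branch of \eqref{eq:modifiedChannelLaw}. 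Since the initial condition $(s_0, q_0) = (0, 0)$ matches and $Q_i^{(N)}$ is a deterministic function of $X^i$, the augmented system and the boosted BEHC($N$) have identical input--output laws, so the FSC reformulation is faithful.

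For the second part, I would use $(1, N)$ as a communication hub (note that $Q^{(N)} = N$ forces $S = 1$ by the boost, so the reachable state set is $\{0,1\} \times [0:N-1]$ together with $(1, N)$). From any reachable state, transmitting $X = 0$ repeatedly drives $Q^{(N)}$ deterministically up to $N$ within at most $N$ steps, and the last transition $N-1 \to N$ forces $S = 1$ via \eqref{eq:modifiedChannelLaw}; hence $(1, N)$ is hit with probability one. Conversely, from $(1, N)$ transmit $X = 1$ (legal since $S = 1$), reaching $Q^{(N)} = 0$ with $S = E_i$: conditioning on $E_i = 1$ (probability $\eta$) lands at $(1, 0)$, from which a run of zeros climbs to any $(1, q)$ with $q \in [0:N]$ with probability one; conditioning on $E_i = 0$ (probability $\bar\eta$) lands at $(0, 0)$, from which a run of $q \le N-1$ zeros, none of which harvests, reaches $(0, q)$ with probability $\bar\eta^{q}$. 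Taking $T$ to be the sum of these two finite horizons gives Definition~\ref{def:connectivity} for $\eta \in (0,1)$; the degenerate cases $\eta \in \{0, 1\}$, where $C_{\text{BEHC}}$ is trivially $0$ or $1$, can be dismissed directly or handled by the same construction on the (smaller) communicating class.

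I expect the main obstacle to be the battery-coordinate argument in the first part -- specifically, making precise that the finitely many states of $Q^{(N)}$, which only count the current zero-run capped at $N$ (with a self-loop at $N$), retain exactly the information needed to decide the boost, uniformly over all times given the initial node $q_0 = 0$, and checking that the resulting conditional law coincides with $P_{S^+|X,S,Q}$ on every branch. The remaining ingredients -- the deterministic $Y$- and $Q$-coordinates and the reachability construction for strong connectivity -- are routine bookkeeping.
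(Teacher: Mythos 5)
Your proposal is correct and follows essentially the same route as the paper: the key step in both is that $Q_{i-1}^{(N)}$ is a deterministic function of the preceding inputs that captures exactly the boost event (equivalently, $Q_i^{(N)}=N$), so the transition kernel factorizes as in \eqref{eq:NewFSC} with the battery coordinate governed by \eqref{eq:modifiedChannelLaw}. Your connectivity argument is in fact more careful than the paper's one-line claim (which speaks only of transmitting consecutive zeros), since you correctly observe that a $'1'$ is needed to reset $Q^{(N)}$ to $0$ and that $(0,N)$ is unreachable, so Definition~\ref{def:connectivity} should be read on the communicating class.
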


Having established that the boosted BEHC($N$) can be reformulated as a strongly connected FSC, the next step 
is to derive a single-letter $Q$-graph upper bound on $\bar{C}_{\text{BEHC}}(N)$, which is a specific instance of the general $C_{\text{fb-csi}}$, based on Fig.~\ref{fig:EH_UB_Graph}.
Notably, unlike the general $Q$-graph lower bound in Theorem~\ref{theorem:qgraph_LB}, no general $Q$-graph upper bound on $C_{\text{fb-csi}}$ has been established in \cite{shemuel2024finite}, primarily due to the lack of a cardinality bound on $\cU$. 

\begin{theorem}[Step 2]
\label{thr:FSC_Qgraph_UB}
The capacity of the boosted BEHC($N$) is upper bounded by
\begin{align}
\label{eq:lemmaOron}
    \bar{C}_{\text{BEHC}}(N) &\le  \sup I(U^{+(N)},U^{(N)};X|Q^{(N)}),
\end{align}
where the RV $Q^{(N)}$ is defined on the vertices of the $Q$-graph illustrated in Fig.~\ref{fig:EH_UB_Graph}, 
the joint distribution is given by \eqref{eq:Qgraph_joint_distUB}
in which $P_{S^+|X,S,Q^{(N)}}$ is defined by \eqref{eq:modifiedChannelLaw},
and the supremum is taken with respect to
${\{f: \mathcal U  \times \mathcal S \to \mathcal X \; | \; f(u^+,0)=0 \; \forall u^+ \}, \; P(u^+|u,q) \in \mathcal{P}_\pi}$. 
\end{theorem}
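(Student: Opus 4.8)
The plan is to derive \eqref{eq:lemmaOron} by applying the standard $Q$-graph single-letterization to a multi-letter feedback-capacity expression, which is now available because Lemma~\ref{lem:C_bar_FSC} recasts the boosted BEHC($N$) as a strongly connected FSC with feedback and SI causally at the encoder. First I would invoke, for this FSC, the multi-letter feedback-capacity characterization of \cite{shemuel2024finite}, writing $\bar C_{\text{BEHC}}(N)=\lim_{n\to\infty}\tfrac1n\sup I(W^n\to Y^n)$, where $W$ is the memory-carrying auxiliary ``strategy'' process, each $W_i$ is generated causally from the encoder's information prior to observing $S_{i-1}$, and $X_i=f_i(W_i,S_{i-1})$; since transmitting a $1$ requires a charged battery, every feasible strategy satisfies $f_i(w,0)=0$. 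As the channel is noiseless, $Y_i=X_i$, and the $Q$-graph node $Q_{i-1}^{(N)}=g^{i-1}(q_0,Y^{i-1})=g^{i-1}(q_0,X^{i-1})$ of Fig.~\ref{fig:EH_UB_Graph} is a deterministic function of the observed history.

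The core step is the single-letterization. For a fixed $n$ and a fixed coding policy I would first coarsen the output history, using $H(Y_i\mid Y^{i-1})\le H(Y_i\mid Q_{i-1}^{(N)})$ (as $Q_{i-1}^{(N)}$ is a function of $Y^{i-1}$), so that $\tfrac1n I(W^n\to Y^n)=\tfrac1n\sum_{i}[H(Y_i\mid Y^{i-1})-H(Y_i\mid W^i,Y^{i-1})]\le \tfrac1n\sum_{i}[H(Y_i\mid Q_{i-1}^{(N)})-H(Y_i\mid W^i,Y^{i-1})]$, and then introduce single-letter auxiliaries $(U^{(N)},U^{+(N)})$ carrying enough of the past that $X=f(U^{+(N)},S)$ holds for a fixed time-invariant $f$ with $f(u^+,0)=0$ and that the joint law of $(S,U^{(N)},Q^{(N)},X,S^{+},U^{+(N)},Q^{+(N)})$ has exactly the factorization \eqref{eq:Qgraph_joint_distUB}. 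The three structural ingredients are: the policy Markov chain $U^{+(N)}-(U^{(N)},Q^{(N)})-S$, which holds because the strategy drawn at step~$i$ is a function of $U^{(N)}$ plus randomness independent of the harvesting noise that, together with data already in $U^{(N)}$, generates $S$; the Markov chain $S^{+}-(X,S,Q^{(N)})-(U^{(N)},U^{+(N)})$ together with $Q^{+(N)}=g(Q^{(N)},X)$, both of which follow from the FSC structure of Lemma~\ref{lem:C_bar_FSC}, with the relevant kernel being precisely $P_{S^{+}|X,S,Q^{(N)}}$ of \eqref{eq:modifiedChannelLaw}; and $Q^{(N)}$ being a function of $U^{(N)}$. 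A Cesàro average over $i\in[1:n]$ — with the time index absorbed into $U^{(N)}$, so that the conditional independences above are preserved by the mixture — then yields $\tfrac1n I(W^n\to Y^n)\le I(U^{+(N)},U^{(N)};X\mid Q^{(N)})$ for a joint distribution with the structure of \eqref{eq:Qgraph_joint_distUB}.

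It then remains to replace this joint — which is only \emph{asymptotically} stationary, due to boundary effects at $i=1$ and $i=n$ — by a genuinely stationary one whose induced policy lies in $\mathcal P_\pi$; I would do this by a standard stationarization of the induced $(S,U^{(N)},Q^{(N)})$-chain (e.g.\ perturbing it by a vanishing restart probability to force ergodicity, or a Cesàro-limit/compactness argument exploiting the strong connectivity from Lemma~\ref{lem:C_bar_FSC}), at a cost vanishing as $n\to\infty$. Letting $n\to\infty$ gives $\bar C_{\text{BEHC}}(N)\le I(U^{+(N)},U^{(N)};X\mid Q^{(N)})$; since this holds for every coding policy and since $f$, $P(u^+|u,q)$, and the (unbounded) auxiliary alphabet may be optimized, \eqref{eq:lemmaOron} follows.

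I expect the stationarization-and-single-letterization step to be the main obstacle. The auxiliary alphabet that arises naturally is the space of reachable coarsened histories, which is unbounded, so one must argue carefully that the induced policy can be taken inside $\mathcal P_\pi$ and that $I(U^{+},U;X\mid Q)$ is controlled in the limit. This is exactly the point that blocked a general $Q$-graph upper bound on $C_{\text{fb-csi}}$ in \cite{shemuel2024finite}: Theorem~\ref{thr:FSC_Qgraph_UB} only needs the bound to \emph{hold} with an unbounded $\mathcal U$, whereas the reduction to a finite, explicit $\mathcal U$ of size $N+1$ — which makes the bound computable through the convex program \eqref{optProb_EH_UB} — is deferred to Theorem~\ref{thr:cardinality}, where the ``boost'' (the battery being full with probability one at node $Q=N$, cf.\ \eqref{eq:s_given_uq_upper}) is precisely what makes that reduction possible.
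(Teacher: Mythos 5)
Your skeleton matches the paper's: start from the multi-letter converse \eqref{eq:UBepsilonN} of \cite{shemuel2024finite} applied to the FSC of Lemma~\ref{lem:C_bar_FSC}, condition the first entropy term on $Q_{i-1}^{(N)}$ (a function of $X^{i-1}$), pass to single-letter auxiliaries distributed as \eqref{eq:Qgraph_joint_distUB}, Ces\`aro-average via Jensen, and let $n\to\infty$ to land in the stationary set and then in $\mathcal P_\pi$ via strong connectivity. However, there is a genuine gap at the central step, and it is precisely the point you flag as ``the main obstacle'' without resolving it. The subtracted term $H(X_i\mid \hat U^i, X^{i-1},Q_{i-1}^{(N)})$ cannot be handled by a generic ``auxiliary carrying enough of the past'': if you \emph{drop} the distant past from the conditioning, the entropy increases and the inequality runs the wrong way for an upper bound; if you \emph{keep} it, the auxiliary alphabet grows with $i$, the induced $(S,U,Q)$-process is not stationary on any fixed finite alphabet, and both the Jensen step (which needs $\pi_{S\mid U,Q}$ to be a fixed constant so that $H(X\mid U^+,U,Q)$ is linear in the joint law) and your proposed ``compactness/restart'' stationarization break down. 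The paper closes this gap with an \emph{exact} Markov identity specific to the boosted BEHC($N$), namely Lemma~\ref{lem:boosted_markov}: $X_i-(\hat U_{K_{i-1}}^i,Q_{i-1})-(\hat U^{K_{i-1}-1},X^{i-1})$, where $K_{i-1}$ is the index of the last transmitted $'1'$ in the window $X_{i-N}^{i-1}$ (or $i-(N+1)$ if that window is all zeros). The reason this holds is structural: once a $'1'$ is actually transmitted the battery is known to have been full, which decouples the future state evolution from everything earlier; and the ``boost'' at $Q=N$ handles the all-zeros window by forcing $S=1$ there. This identity is what lets Step~(b) of \eqref{eq:step_EH} rewrite the subtracted entropy \emph{with equality} as $H(X_i\mid \hat U_{K_{i-1}}^i,Q_{i-1}^{(N)})$, a quantity conditioned on a window of at most $N+1$ binary strategies --- already a finite alphabet.

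Consequently, your claim that the reduction to a finite $\cU$ is ``deferred to Theorem~\ref{thr:cardinality}'' misallocates the work: Step~3 only compresses an already-finite auxiliary (vectors of at most $N+1$ strategies from $\{a,b\}$) down to the explicit set $[0:N]$ by observing that only the position of the last $b$-strategy matters; the reduction from the full history to a finite window must happen \emph{inside} Step~2, and without it the supremum in \eqref{eq:lemmaOron} over distributions of the form \eqref{eq:Qgraph_joint_distUB} with $P(u^+\mid u,q)\in\mathcal P_\pi$ is not reached by your construction. Two smaller inaccuracies: $Q^{(N)}$ is \emph{not} a function of $U^{(N)}$ (one has $\cU^{(Q=q)}=[0:q]$, so $U\le Q$, and $Q$ must be carried separately as a function of $X_{i-N}^{i-1}$); and the justification of the policy Markov chain $U^+-(U,Q)-S$ needs the explicit chain $\hat U_i-(\hat U^{i-1},X^{i-1})-S^{i-1}$ inherited from the causal-conditioning structure of \eqref{eq:UBepsilonN}, not an appeal to independent randomization.
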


In the third step, we demonstrate that the supremum of \eqref{eq:lemmaOron} is achieved with $f(u^+,s)$ and the set of policies $P_{U^+|U,Q}$ specified in Theorem~\ref{thr:EH_cvx_upper} with a finite cardinality $|\cU|$. 

\begin{theorem}[Step 3]
\label{thr:cardinality}
The supremum in \eqref{eq:lemmaOron} can be achieved with $\cU=\cQ=[0:N]$, a strategy function $f(u^+,s)$ specified as in \eqref{eq:x_behc},
and a policy $P_{U^+|U,Q} \in \mathcal{P}_\pi$ subject to
\begin{align}     
    &P(u^+|u,q)=0, \quad \forall q\in [0: N-1],  \forall u\in [0:q], \nn\\ &\quad \forall u^+\in [1:q+1] \backslash \ (u+1), \label{eq:policy_first_nodes_proof}\\
    &P(u^+|u,Q=N)=0, \forall u\in [0:N], \forall u^+\in [2:N], \label{eq:policy_last_node_UB_proof}     
\end{align}
yielding the following auxiliary sets: 
\begin{align}
     &\mspace{-10mu}\cU^{(Q=q)}\mspace{-5mu}=\mspace{-5mu}[0\mspace{-2mu}:\mspace{-2mu}q], \cU^{+(U=u,Q=q)}\mspace{-5mu}=\mspace{-5mu}\{0,u\mspace{-5mu}+\mspace{-5mu}1\}, \forall q\mspace{-2mu}\in\mspace{-2mu}[0\mspace{-2mu}:\mspace{-2mu}N\mspace{-3mu}-\mspace{-3mu}1], \mspace{-12mu}\label{eq:Uset_initialNodesProof} \\
     &\mspace{-10mu} \cU^{(Q=N)}=[0:N], \; \cU^{+(Q=N)}=\{0,1\}. \label{eq:Uset_lastNodeProofUB}
\end{align}
Furthermore, the induced marginal $\pi_{S|U,Q}$ is given by \eqref{eq:s_given_uq_upper}.
\end{theorem}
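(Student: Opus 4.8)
The plan is to start from the supremum in Theorem~\ref{thr:FSC_Qgraph_UB}, \eqref{eq:lemmaOron}, taken over all strategy functions $f$ with $f(u^+,0)=0$ and all policies $P(u^+|u,q)\in\mathcal{P}_\pi$ on the fixed $(N+1)$-node $Q$-graph of Fig.~\ref{fig:EH_UB_Graph} with $\cU$ of arbitrary finite cardinality, and to show that this supremum is unchanged, and attained, when one restricts to the structured objects in the statement. The main lever is that under \eqref{eq:Qgraph_joint_distUB} the policy does not depend on $S$, so $S-(U,Q)-U^+$ and $\pi(s\mid u^+,u,q)=\pi(s\mid u,q)=:b(u,q)$; together with $f(u^+,0)=0$ this gives $X=S\,\mathbbm{1}\{U^+\in A\}$ with $A=\{u^+:f(u^+,1)=1\}$. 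Consequently the objective is a function of the node $Q$, the scalar beliefs $b(u,q)=\pi(S=0|u,q)$, and the aggregated attempt probabilities $a(u,q):=P(U^+\in A\mid u,q)$ alone: indeed $H(X\mid U^+,U,Q)=\sum_{u,q}\pi(u,q)\,a(u,q)\,H_2(b(u,q))$ and $P(X=0\mid Q=q)=\sum_u\pi(u\mid q)\bigl[1-a(u,q)\bigl(1-b(u,q)\bigr)\bigr]$, so $H(X\mid Q)$, and hence $I(U^+,U;X|Q)$, depends on the auxiliaries only through $\{\pi(s,u,q)\}$ and $\{a(u,q)\}$.

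Next I would pin down the belief dynamics using \eqref{eq:modifiedChannelLaw}. From $(U=u,Q=q)$ with belief $b$: a ``null'' move ($U^+\notin A$, hence $X=0$) lands at $Q^+=g(q,0)$ with successor belief $b\bar\eta$ if $q\le N-2$ and $0$ if $q\in\{N-1,N\}$; an ``attempt'' move ($U^+\in A$) lands, on success ($S=1,X=1$), at $Q^+=0$ with successor belief $\bar\eta$, and, on failure ($S=0,X=0$), at $Q^+=g(q,0)$ with successor belief $\bar\eta$ if $q\le N-2$ and $0$ if $q=N-1$. In every case the successor belief is a deterministic function of $(b,q)$ and the move/outcome, independent of the label $u^+$. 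Iterating from the recurrent class, and using that $S\equiv1$ once $Q=N$ by the boosting in \eqref{eq:modifiedChannelLaw}, shows that at node $q\le N-1$ the belief ranges exactly over $\{\bar\eta,\bar\eta^2,\dots,\bar\eta^{q+1}\}$ ($q+1$ values) and at $q=N$ equals $0$. Indexing these by $u\in[0:q]$ through $b=\bar\eta^{u+1}$ recovers \eqref{eq:Uset_initialNodesProof}, \eqref{eq:Uset_lastNodeProofUB}, and \eqref{eq:s_given_uq_upper}; the transition rules above force a null move out of $(u,q)$ with $q\le N-1$ to carry $U^+=u+1$ and an attempt move to carry $U^+=0$ (while at $q=N$, where $S\equiv1$, the null and attempt moves carry $U^+=1$ and $U^+=0$), giving \eqref{eq:policy_first_nodes_proof}--\eqref{eq:policy_last_node_UB_proof}; and the single surviving attempt symbol $U^+=0$ with $f(0,s)=s$ is precisely the strategy \eqref{eq:x_behc}.

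With these facts in hand, the reduction is: given an arbitrary feasible $(f,P)$ with large $\cU$, partition the $U$-values at each node $q$ by their belief $b(u,q)$ (at most $q+1$ classes, one at $q=N$), replace each class by a single super-state whose outgoing law is the $\pi(\cdot\mid q)$-weighted mixture of the policies of its members, and merge all $U^+$-symbols sharing a move class. By the first paragraph this preserves $\{\pi(s,u,q)\}$ as a function of the merged belief together with the values $a(\cdot)$, hence preserves the objective; by the second paragraph it preserves the belief dynamics, so the merged $(S,U,Q)$-chain is exactly that of a structured policy on the same $Q$-graph. The last thing to verify is that the merged transition matrix still lies in $\mathcal{P}_\pi$, i.e.\ has a unique stationary law: this follows since the reformulated FSC of Lemma~\ref{lem:C_bar_FSC} is strongly connected in the sense of Definition~\ref{def:connectivity}, so the induced chain on the finite structured state space has a single recurrent class, on which one takes $\pi$ to be its stationary distribution. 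Hence every feasible value of \eqref{eq:lemmaOron} is matched by a structured one, and the supremum is attained over the asserted class.

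\emph{Main obstacle.} The delicate point is that this lumping does not lower the supremum: a priori a finer $\cU$ supplies more conditioning in $H(X\mid U^+,U,Q)$ and could raise $I(U^+,U;X|Q)$. The resolution is the explicit collapse in the first paragraph --- because $X=S\,\mathbbm{1}\{U^+\in A\}$ and $S\perp U^+\mid(U,Q)$, both $H(X\mid Q)$ and $H(X\mid U^+,U,Q)$ reduce to functions of $\{\pi(s,u,q)\}$ and $\{a(u,q)\}$, so refining $U$ within a belief class is genuinely useless. Secondary care is needed to identify the attainable belief set exactly (the off-by-one that yields $\pi(S=0\mid u,q)=\bar\eta^{u+1}$, not $\bar\eta^{u}$), to treat the two boosted nodes $q\in\{N-1,N\}$ separately in the belief recursion, and to confirm irreducibility of the merged chain so that it remains in $\mathcal{P}_\pi$.
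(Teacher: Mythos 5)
Your route is genuinely different from the paper's: the paper never argues at the single-letter level over an arbitrary finite $\cU$, but instead traces the explicit chain of auxiliary variables $U\Rightarrow\hat U\Rightarrow\tilde U$ through the multi-letter converse of Theorem~\ref{thr:FSC_Qgraph_UB}, using Lemma~\ref{lem:tryTransmit1} to show that only the suffix of $\hat u^i$ after the last attempt symbol $b$ affects $P(x_i\mid\hat u^i,x^{i-1})$, so that the ``age of the last attempt'' is a sufficient statistic by construction. Your plan --- collapse the objective to a function of $\{\pi(s,u,q)\}$ and the attempt probabilities $a(u,q)$, then lump $u$'s by belief --- is an attractive single-letter alternative, and your first paragraph (the identity $X=S\,\mathbbm{1}\{U^+\in A\}$, the Markov chain $S-(U,Q)-U^+$, and the resulting expressions for $H(X\mid Q)$ and $H(X\mid U^+,U,Q)$) is correct.

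The gap is in your second paragraph. The successor belief you compute is conditioned on a \emph{fixed} predecessor $(u,q)$, but $\pi(S^+=0\mid u^+,q^+)$ aggregates over all predecessors that assign $u^+$ positive probability. For a null symbol $u^+\notin A$ at node $q+1$ one finds $\pi(S^+=0\mid u^+,q+1)=\bar\eta\sum_u P(u\mid u^+,q)\,b(u,q)$ with $P(u\mid u^+,q)\propto\pi(u\mid q)P(u^+\mid u,q)$, which is a mixture and in general \emph{not} a power of $\bar\eta$ as soon as one null symbol is reachable from two $u$'s with different beliefs. Hence the claim that the beliefs at node $q\le N-1$ ``range exactly over $\{\bar\eta,\dots,\bar\eta^{q+1}\}$'' fails for an arbitrary admissible policy, the partition by belief can have far more than $q+1$ classes, and your lumped chain need not be a structured policy with marginal \eqref{eq:s_given_uq_upper}. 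To close this you need a step your writeup explicitly disclaims: first \emph{refine} each null symbol by the exact age since the last attempt (bounded by $q$ at node $q\le N-1$, and irrelevant at $Q=N$ where $S\equiv 1$ by \eqref{eq:modifiedChannelLaw}), note that this refinement leaves the $X$-process --- hence $H(X\mid Q)$ --- unchanged while it can only decrease $H(X\mid U^+,U,Q)$, so the objective does not decrease; only after this are all beliefs pure powers of $\bar\eta$ and your lumping goes through. Your ``main obstacle'' paragraph addresses refinement \emph{within} a belief class (which is indeed harmless) but not coarsening \emph{across} classes, which is precisely where a general policy can differ from, and must be shown to be dominated by, a structured one.
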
 
In conclusion, Steps~1-3 proved below establish the single-letter upper bound on $C_{\text{BEHC}}$ presented in Theorem~\ref{thr:EH_cvx_upper}. Moreover, Lemma~\ref{lem:BEHC_convex_ub_formulation}, shown in Section~\ref{subsec:EHconvex_upper}, formulates \eqref{eq:lemmaOron} as a convex optimization problem. Thus, after proving these three steps, the final task to complete the proof of Theorem~\ref{thr:EH_cvx_upper} is to establish its convergence to $C_{\text{BEHC}}$, as provided in Section~\ref{subsec:ProofThrConvergence}.

\subsubsection{Proof of Lemma~\ref{lem:C_bar_FSC} (Step 1)} 
In the new state $\tilde{S}_i\triangleq (S_i,Q_i^{(N)})$, the auxiliary RV $Q_i^{(N)}$, defined on the vertices of the $Q$-graph in Fig.~\ref{fig:EH_UB_Graph},
maps the previous $N$ inputs to a final node. Specifically, $Q_{i-1}: \cX^N \to \cQ$, where $\cQ=[0:N]$ and $Q_{i-1}=g(x_{i-N}^{i-1})$. The function $g(x_{i-N}^{i-1})$ determines the resulting node as the length of consecutive $0'$s following the last $'1'$ in the previous $N$ inputs, with $g(\mathbf{0}^N)=N$. Thus, 
\begin{align}
    &P(\tilde{s}_i,y_i|x^i,\tilde{s}_0^{i-1},y^{i-1},m)\nn\\
    &=P\left((s_i,q_i),y_i|x^i,(s_0^{i-1},q_0^{i-1}(x^{i-1})),y^{i-1},m\right) \nn\\
    &= P_{S^+|X,S,Q^{(N)}}\left(s_i|x_i,s_{i-1},q_{i-1}(x_{i-N}^{i-1})\right ) \nn\\ &\quad \times \mathbbm{1}\{q_i=g\left(q_{i-1}(x_{i-N}^{i-1}),x_i\right)\}  \mathbbm{1}\{y_i=x_i\} \nn\\
    &= P_{\tilde{S}^+,Y|X,\tilde{S}}(\tilde{s}_i,y_i|x_i,\tilde{s}_{i-1}),
\end{align}
where $P_{S^+|X,S,Q^{(N)}}$ is given in \eqref{eq:modifiedChannelLaw}.
This demonstrates that the FSC property~\eqref{eq:BasicFSCMarkov} holds with the new state. 
The connectivity property in Definition~\eqref{def:connectivity} is satisfied because any initial state $\tilde{s}'$ can be driven to any desired state $\tilde{s}$ with positive probability by transmitting the appropriate length of consecutive zeros. In conclusion, the boosted BEHC($N$) can be formulated as a strongly connected FSC with state $\tilde{S}$. \qed

Steps $2$ and $3$ incorporate an auxiliary RV 
$K_{i-1}^{(N)}\mspace{-3mu}\in\mspace{-3mu}[i\mspace{-3mu}-\mspace{-3mu}N\mspace{-3mu}:\mspace{-3mu}i\mspace{-3mu}-\mspace{-3mu}1]$ for any $i \ge N$, defined as the index of the last input '$1$' among the previous $N$ inputs, $X_{i-N}^{i-1}$, if such an input exists (\textit{Case $(i)$}), i.e., $X_{K_{i-1}}^{i-1}\mspace{-3mu}=\mspace{-3mu}(1,\mathbf{0}^{i-1-{K_{i-1}}})$. Otherwise, 
if $X_{i-N}^{i-1}=\mathbf{0}^N$ (\textit{Case $(ii)$}), we define $K_{i-1}\mspace{-3mu}\triangleq\mspace{-3mu} i\mspace{-3mu}-\mspace{-3mu}(N\mspace{-3mu}+\mspace{-3mu}1)$. 
For simplicity, we omit the superscript $^{(N)}$ and refer to $K_{i-1}$, while recalling that it depends on $N$.
Notice that there is a one-to-one mapping between $K_{i-1}$ and $Q_{i-1}$, given by $K_{i-1}\mspace{-3mu}=\mspace{-3mu}i\mspace{-3mu}-\mspace{-3mu}(Q_{i-1}\mspace{-3mu}+\mspace{-3mu}1)$.

We now proceed to prove Step 2.

\subsubsection{Proof of Theorem~\ref{thr:FSC_Qgraph_UB} (Step 2)}
Our starting point is the converse result in \cite[Eq.~(42)]{shemuel2024finite}, which states that given a fixed sequence of $(2^{nR},n)$ codes for a FSC $P_{S^+,Y|X,S}$ with feedback and SI available causally to the encoder, as depicted in Fig~\ref{fig:setting}, the achievable rate $R$ must satisfy
\begin{align}
\label{eq:UBepsilonN}
R-\epsilon_n&\le \frac{1}{n} \max_{P(u^n||y^{n-1})}
 I(U^n \to Y^n),
\end{align}
where $\epsilon_n\to 0$ as $n \to \infty$, $U_i$ indexes all possible strategies $\{f_u (s):\cS \to \cX \}$, and the joint distribution is given by
\begin{align}
    \label{eq:joint_distDI_prf}
    &P(s_0^n,u^n,x^n,y^n)=P(s_0)P(u^n||y^{n-1}) \nn\\
    &\times \prod\nolimits_{i=1}^n \mspace{-6mu}\mathbbm{1}\{x_i \mspace{-3mu}=\mspace{-3mu} 
    f_{u_i}(s_{i-1})\}P_{S^+,Y|X,S}(s_i,\mspace{-3mu} y_i|x_i,\mspace{-3mu} s_{i-1})
    \text{.}
\end{align}
By Lemma~\ref{lem:C_bar_FSC}, the boosted BEHC($N$) is a FSC $P_{\tilde{S}^+|X,\tilde{S}}$ with state $\tilde{S}_{i-1}=(S_{i-1},Q^{(N)}_{i-1})$. Applying \eqref{eq:UBepsilonN} to this FSC with the joint distribution \eqref{eq:joint_distDI_prf}, where 
$Y=X$ and $\tilde{S}_{i-1}$ is used instead of $S_{i-1}$, we obtain the joint distribution
\begin{align}    
    &P(\tilde{s}_0^n,u^n,x^n)=P_{\tilde{S}_0}(\tilde{s}_0)P(u^n||x^{n-1}) \nn\\
    &\quad \times \prod\nolimits_{i=1}^n\mathbbm{1}\{x_i = \tilde{f}_{u_i}(\tilde{s}_{i-1})\}  P_{\tilde{S}^+|X,\tilde{S}}(\tilde{s}_i|x_i,\tilde{s}_{i-1}) \nn\\
    &\stackrel{(*)}=P_{S_0,Q^{(N)}_0}(s_0,q_0) P(\hat{u}^n||x^{n-1}) \prod\nolimits_{i=1}^n \mspace{-6mu}\mathbbm{1}\{x_i = f_{\hat{u}_i}(s_{i-1})\} \nn\\ & \times P_{S^+|X,S,Q^{(N)}}(s_i|x_i,s_{i-1},q_{i-1})  \mathbbm{1}\{q_i=g(q_{i-1},x_i)\}
    \text{.} \label{eq:joint_boosted}
\end{align}
Here, the mappings $\tilde{f}:\tilde{\cS} \to \cX$ are subject to the battery constraint, i.e., $\tilde{f}_{u_i}(s_{i-1}=0,q_{i-1})=0 \;\forall u_i, q_{i-1}$. Step ($*$) holds because $x_i = \tilde{f}_{u_i}(s_{i-1},q_{i-1})$ can be rewritten as a time-invariant function $f_{\hat{u}_i}(s_{i-1})$ with another auxiliary RV $\hat{u}_i\triangleq(u_i,q_{i-1})$, preserving the causal conditioning and the objective at any time $i$:
\begin{align}
    I(U^i;Y_i|X^{i-1})\mspace{-3mu}&=\mspace{-3mu}I(U^i,Q^{i-1};Y_i|X^{i-1})\mspace{-3mu}=\mspace{-3mu}I(\hat{U}^i;Y_i|X^{i-1}), \nn\\
    P(u_i|u^{i-1},\mspace{-3mu}x^{i-1})\mspace{-3mu}&=\mspace{-3mu}P(u_i|u^{i-1},\mspace{-3mu}q^{i-2},\mspace{-3mu}x^{i-1}) \mathbbm{1}\mspace{-3mu}\{q_{i-1}\mspace{-3mu}=\mspace{-3mu}g(x_{i-N}^{i-1})\}\nn\\
    \mspace{-3mu}&=\mspace{-3mu}P(\hat{u}_i|\hat{u}^{i-1},\mspace{-3mu}x^{i-1}). \nn
\end{align}
This follows because $q_{i-1}$ is a deterministic function of $x^{i-1}_{i-N}$. Furthermore, the Markov chain $\hat{U}_i-(\hat{U}^{i-1},Y^{i-1})-S^{i-1}$ holds similarly to the original Markov chain involving $U_i$. Consequently, considering the battery constraint ($X_i \le S_{i-1}$), \eqref{eq:joint_boosted} implies that any achievable rate must satisfy
\begin{align}
\label{eq:UBepsilonN_boosted}
R-\epsilon_n&\le \frac{1}{n} \max_{P(\hat{u}^n||x^{n-1})} I(\hat{U}^n \to X^n),
\end{align}
where the mapping $f_{\hat{u}^+}(s)$ is constrained by $f_{\hat{u}^+}(0)=0, \forall \hat{u}^+$.

The remainder of the proof comprises two parts.
In the first part, we upper bound the objective in \eqref{eq:UBepsilonN_boosted} by a single-letter expression that depends on $n$. In the second part, we take the limit as $n\to \infty$ to derive the single-letter upper bound in~\eqref{eq:lemmaOron}. 

For the first part, consider the directed information in \eqref{eq:UBepsilonN_boosted}:
\begin{align}
\label{eq:step_EH}
    &\frac{1}{n} \sum\nolimits_{i=1}^n I(\hat{U}^i;X_i|X^{i-1})  \nn\\
    &\stackrel{(a)}\le \frac{1}{n} \sum\nolimits_{i=1}^n H(X_i|Q_{i-1}^{(N)})-H(X_i|\hat{U}^i,X^{i-1},Q_{i-1}^{(N)})
    \nn\\
    &\stackrel{(b)}= \frac{1}{n} \sum\nolimits_{i=1}^n H(X_i|Q_{i-1}^{(N)})-H(X_i|\hat{U}_{K_{i-1}^{(N)}}^i,Q^{(N)}_{i-1})
    \nn\\     
    &\stackrel{(c)}= \frac{1}{n} \sum\nolimits_{i=1}^n I(\tilde{U}^{(N)}_i,\tilde{U}^{(N)}_{i-1};X_i|Q_{i-1}^{(N)})
    \nn\\     
    &\stackrel{(d)}\le I(\tilde{U}^{+(N)},\tilde{U}^{(N)};X|Q^{(N)}) \nn\\
    &\stackrel{(e)}\le \max_{P \in \mathcal{D}_{\frac{1}{n}}}  I(\tilde{U}^{+(N)},\tilde{U}^{(N)};X|Q^{(N)}).
\end{align}
Step (a) follows from $Q_{i-1}\triangleq g(X_{i-N}^{i-1})$ as defined by the $Q$-graph in Fig.~\ref{fig:EH_UB_Graph}, and because conditioning reduces entropy. 
Step (b) follows from Lemma~\ref{lem:boosted_markov} in Appendix~\ref{appendix:lemma_boosted}, which establishes the Markov chain $X_i-(\hat{U}_{K_{i-1}}^i,Q_{i-1})-(\hat{U}^{K_{i-1}-1},X^{i-1})$.
Step (c) introduces the auxiliary RV $\tilde{U}^{(N)}_i \mspace{-3mu} \triangleq \mspace{-3mu}\hat{U}_{K_{i-1}+1}^i$ and similarly $\tilde{U}^{(N)}_{i-1} \mspace{-3mu} \triangleq \mspace{-3mu}\hat{U}_{K_{i-1}}^{i-1}$. 
We revisit Step (c) in the next theorem by redefining $\tilde{U}_i^{(N)}$ with the auxiliary RV sets $\cU^{(Q=q)}, q\in \cQ$ given in \eqref{eq:Uset_initialNodes}, \eqref{eq:Uset_LastNodeUB}, such that the joint distribution in Step (d) is in the form of \eqref{eq:Qgraph_joint_distUB}. 
Step (d) applies Jensen's inequality, leveraging the concavity of $I(\tilde{U}^+,\tilde{U};X|Q)$ with respect to the joint distribution $P_{S,\tilde{U},Q,X,S^+,\tilde{U}^+,Q^+}$ (see Lemma~\ref{lem:BEHC_convex_ub_formulation}, Section~\ref{subsec:EHconvex_upper}) given by 
\begin{align}
&P(s,\tilde{u},q,x,s^+,\tilde{u}^+,q^+)=\overline{P}(s,\tilde{u},q)\overline{P}(\tilde{u}^+|\tilde{u},q)\nn\\ &\times \mspace{-3mu}\mathbbm{1}\mspace{-2mu}\{x\mspace{-3mu}=\mspace{-3mu}f(\tilde{u}^+\mspace{-4mu},\mspace{-3mu}s')\} P_{S^+|X,S,Q}(s^+\mspace{-3mu}|x,\mspace{-3mu}s,\mspace{-3mu}q) \mathbbm{1}\mspace{-2mu}\{q^+\mspace{-3mu}=\mspace{-3mu}g(q,\mspace{-3mu}x)\}, \nn\\
&\overline{P}(s,\tilde{u},q) \triangleq \frac{1}{n} \sum\nolimits_{i=1}^n P_{S_{i-1},\tilde{U}_{i-1},Q_{i-1}}(s,\tilde{u},q), \label{eq:averageDist_EH}\\
&\overline{P}(\tilde{u}^+|\tilde{u},q) \triangleq \frac{1}{n} \sum\nolimits_{i=1}^n P_{\tilde{U}_{i}|\tilde{U}_{i-1},Q_{i-1}}(\tilde{u}^+|\tilde{u},q).  \label{eq:averagePolicyDist_EH}
\end{align}
Step (e) defines $\mathcal{D}_{\epsilon}$, the set
of distributions satisfying
\begin{align}
&\mathcal{D}_{\epsilon} \triangleq \{ P_{S,\tilde{U},Q}(s,\tilde{u},q)P_{\tilde{U}^+|\tilde{U},Q}(\tilde{u}^+|\tilde{u},q) \in \mathcal{P}_{\cS \times \tilde{\cU} \times \cQ \times \tilde{\cU}} :\nn\\
&|P_{S,\tilde{U},Q}(s,\mspace{-4mu}\tilde{u},\mspace{-4mu}q) \mspace{-4mu}-\mspace{-4mu} \sum\nolimits_{s',\tilde{u}',q',x} \mspace{-15mu} P_{S,\tilde{U},Q}(s'\mspace{-2mu},\mspace{-4mu}\tilde{u}',\mspace{-4mu}q')P_{\tilde{U}^+|\tilde{U},Q}(\tilde{u}|\tilde{u}'\mspace{-3mu},\mspace{-4mu}q') \nn\\
&\mspace{-4mu}\times \mspace{-6mu} \mathbbm{1}\mspace{-3mu}\{x\mspace{-5mu}=\mspace{-5mu}f(\mspace{-2mu}\tilde{u},\mspace{-4mu}s')\} \mspace{-2mu}P_{S^+|X,S,Q}(s|x,\mspace{-4mu}s',\mspace{-4mu}q')  \mspace{-3mu} \mathbbm{1}\mspace{-3mu}\{q\mspace{-4mu}=\mspace{-4mu}g(\mspace{-2mu}q'\mspace{-3mu},\mspace{-4mu}x)\} |\mspace{-4mu}\le\mspace{-4mu} \epsilon, \mspace{-2mu}\forall s,\mspace{-4mu}\tilde{u},\mspace{-4mu}q \}. \nn
\end{align}
For any codebook of length $n$, its induced averaged distribution $\overline{P}(s,\tilde{u},q,\tilde{u}^+)=\overline{P}(s,\tilde{u},q)\overline{P}(\tilde{u}^+|\tilde{u},q)$ lies in $\mathcal{D}_{\frac{1}{n}}$, as shown by the following derivation
using Definitions \eqref{eq:averageDist_EH}–\eqref{eq:averagePolicyDist_EH}:
\begin{align}
    &\Big|\overline{P}_{S,\tilde{U},Q}(s,\mspace{-3mu}\tilde{u},\mspace{-3mu}q) \mspace{-4mu}-\mspace{-5mu} \sum\nolimits_{s',\mspace{-1mu}\tilde{u}',\mspace{-1mu}q',\mspace{-1mu}x}\mspace{-5mu} \overline{P}_{S,\tilde{U},Q}\mspace{-2mu}(s',\mspace{-2mu}\tilde{u}',\mspace{-2mu}q') \overline{P}_{\tilde{U}^+|\tilde{U},Q}\mspace{-2mu}(\tilde{u}|\tilde{u}',\mspace{-2mu}q') \nn\\ &  \quad \times  \mathbbm{1}\{x=f(\tilde{u},s')\} P_{S^+|X,S,Q}(s|x,s',q') \mathbbm{1}\{q=g(q',x)\} \Big|\nn\\
    &= \frac{1}{n} \Big| \sum\nolimits_{i=1}^{n} \big[ P_{S_{i-1},\tilde{U}_{i-1},Q_{i-1}}(s,\tilde{u},q) \nn\\
    & \;-\mspace{-4mu}\sum\nolimits_{s',\tilde{u}',q',x} \mspace{-5mu}
    P_{S_{i-1},\tilde{U}_{i-1},Q_{i-1}}(s',\mspace{-3mu}\tilde{u}',\mspace{-3mu}q')
    P_{\tilde{U}_i|\tilde{U}_{i-1},Q_{i-1}}\mspace{-3mu}(\tilde{u}|\tilde{u}',\mspace{-3mu}q')\nn\\ &  \quad \times\mathbbm{1}\{x=f(\tilde{u},s')\} P(s|x,s',q') \mathbbm{1}\{q=g(q',x)\} \big] \Big|\nn\\         
    &= \frac{1}{n} \Big| \sum\nolimits_{i=2}^{n} \big[ P_{S_{i-1},\tilde{U}_{i-1},Q_{i-1}}(s,\tilde{u},q)     
    \nn\\ &  \;
    -\mspace{-4mu} \sum\nolimits_{s',\tilde{u}',q',x}  
    \mspace{-5mu} P_{S_{i-1},\tilde{U}_{i-1},Q_{i-1}}(s',\mspace{-3mu}\tilde{u}',\mspace{-3mu}q')P_{\tilde{U}_i|\tilde{U}_{i-1},Q_{i-1}}\mspace{-2mu}(\tilde{u}|\tilde{u}',\mspace{-3mu}q') \nn\\ &  \quad \times \mathbbm{1}\{x=f(\tilde{u},s')\} P(s|x,s',q') \mathbbm{1}\{q=g(q',x)\} 
    \big] \nn\\
    &  \quad + P_{S_{0},\tilde{U}_{0},Q_{0}}(s,\tilde{u},q) - P_{S_{n},\tilde{U}_{n},Q_{n}}(s,\tilde{u},q)\Big| \le \frac{1}{n}, 
\end{align}
where the inequality step follows since for any $i \in [1:n]$:
\begin{align}
& \sum\nolimits_{s',\tilde{u}',q',x}         
    \mspace{-5mu} P_{S_{i-1},\tilde{U}_{i-1},Q_{i-1}}(s',\tilde{u}',q')P_{\tilde{U}_i|\tilde{U}_{i-1},Q_{i-1}}\mspace{-3mu}(\tilde{u}|\tilde{u}',q')
    \nn\\ & \times \mspace{-5mu} \mathbbm{1}\mspace{-3mu}\{x\mspace{-4mu}=\mspace{-5mu}f(\tilde{u},s')\} \mspace{-2mu} P(\mspace{-2mu}s|x,s',q') \mspace{-2mu}\mathbbm{1}\mspace{-2mu}\{q\mspace{-5mu}=\mspace{-5mu}g(\mspace{-2mu}q',\mspace{-3mu}x\mspace{-2mu})\} \mspace{-5mu}=\mspace{-5mu} P_{S_{i},\tilde{U}_{i},Q_{i}}\mspace{-3mu}(\mspace{-2mu}s,\mspace{-3mu}\tilde{u},\mspace{-3mu}q\mspace{-2mu}). \nn
\end{align}
This completes the first part of the proof. 

In the second part of the proof, we derive from \eqref{eq:step_EH}:
\begin{align}
    &\bar{C}_{\text{BEHC}}(N) \nn\\
    &\le \lim_{n \to \infty} \max_{{\{P(\tilde{u}_i|\tilde{u}_{i-1},q_{i-1})\}}_{i=1}^n} \mspace{-4mu}
 \frac{1}{n}  \mspace{-4mu}\sum\nolimits_{i=1}^n \mspace{-4mu}I(\tilde{U}^{(N)}_i\mspace{-3mu},\tilde{U}^{(N)}_{i-1};X_i|Q_{i-1}^{(N)})
    \nn\\
     & \le\lim_{n\to \infty}
    \max_{P \in \mathcal{D}_{\frac{1}{n}}} I(\tilde{U}^{+(N)},\tilde{U}^{(N)};X|Q^{(N)}) \nn\\
    & \stackrel{(a)}= \max_{P \in \mathcal{D}_{0}} I(\tilde{U}^{+(N)},\tilde{U}^{(N)};X|Q^{(N)}) \nn\\
    & \stackrel{(b)} = \max_{P(\tilde{u}^+|\tilde{u},q)}  I(\tilde{U}^{+(N)},\tilde{U}^{(N)};X|Q^{(N)}). \label{eq:step_stationary_EH}
\end{align}
Step (a) follows because any $P\in \mathcal{D}_0$ satisfies $P\in \cap_{n=1}^{\infty} \mathcal{D}_{\frac{1}{n}}$, as $\frac{1}{n}>0, \; \forall n\in \mathbb{N}$. 
Conversely, any $P\in \cap_{n=1}^{\infty} \mathcal{D}_{\frac{1}{n}}$ satisfies $P\in \mathcal{D}_0$
because $\frac{1}{n}$ monotonically decreases as $n\to \infty$. Thus,
$\lim_{n\to \infty} \mathcal{D}_{\frac{1}{n}} = \mathcal{D}_{0}$.
Step (b) follows from the fact that $\mathcal{D}_{0}$ represents the set of all distributions $P(\tilde{u},q,u^+)P(s|\tilde{u},q)$ inducing a stationary distribution on $(S,\tilde{U},Q)$, i.e., 
\begin{align}
    \mspace{-5mu} P_{S,\tilde{U},Q}\mspace{-2mu}(\mspace{-3mu}s,\mspace{-4mu}\tilde{u},\mspace{-4mu}q\mspace{-2mu})\mspace{-4mu}=\mspace{-4mu}P_{S^+,\tilde{U}^+\mspace{-4mu},Q^+}\mspace{-2mu}(\mspace{-3mu}s,\mspace{-4mu}\tilde{u},\mspace{-4mu}q\mspace{-2mu}), \forall s,\mspace{-4mu} \tilde{u},\mspace{-4mu} q\mspace{-4mu} \in \mspace{-4mu} \cS \mspace{-4mu} \times \mspace{-4mu} \tilde{\cU}^{(Q=q)} \mspace{-6mu} \times \mspace{-6mu} \cQ. \label{eq:stationarity_step_proof}
\end{align}
Recall that since $\cS, \cQ^{(N)}$ are finite, and $\tilde{\cU}$ will be shown to be finite in the next theorem, 
there always exists a stationary distribution $\pi(s,\tilde{u},q)$ (not necessarily unique) for any $P(\tilde{u}^+|\tilde{u},q)$. Thus, $\mathcal{D}_{0}$ is non-empty.

Finally, the proof concludes by renaming $\tilde{U}$ in place of $U$ and demonstrating that the maximization in Step (b) can be restricted to $\mathcal{P}_\pi$, the set of $P(u^+|u,q)$ distributions that induce a unique stationary distribution. This conclusion follows from the connectivity (Definition~\ref{def:connectivity}) of the boosted BEHC($N$), and can be proven using the approach outlined in \cite[Lem.~1]{NOST}. \qed

We conclude this section with the proof of Step 3.

\subsubsection{Proof of Theorem~\ref{thr:cardinality} (Step 3)} 
We establish the cardinality bounds on $|\tilde{\cU}|$ and $|\tilde{\cU}^+|$, which are identical by the stationary distribution $\pi(s,\tilde{u},q)$ \eqref{eq:stationarity_step_proof}. We also validate the optimality of the policy structure by analyzing the transformations of the auxiliary RVs $U \Rightarrow \hat{U}\Rightarrow \tilde{U}$ throughout the proof of Theorem~\ref{thr:FSC_Qgraph_UB}. 
Initially, the set of strategies satisfies $|\cU|\le|\cX|^{|{\cS}| |\cQ^{(N)}|}$, where $|\cQ^{(N)}|\mspace{-3mu}=\mspace{-3mu}N\mspace{-2mu}+\mspace{-2mu}1$. In \eqref{eq:joint_boosted}, Step~$(*)$ introduces $\hat{u}_i\triangleq(u_i,q_{i-1})$, mapping $x_i=f_{\hat{u}_i}(s_{i-1})$, which reduces $|\hat{\cU}|$ to $|\cX|^{|\cS|}$. 
However, due to the battery constraint $X_i\mspace{-3mu}=\mspace{-3mu}0$ when $S_{i-1}\mspace{-3mu}=\mspace{-3mu}0$, the cardinality of this auxiliary RV further reduces to $|\hat{\cU}_i|\mspace{-3mu}=\mspace{-3mu} |\cX|^{|\cS|-1}\mspace{-3mu}=\mspace{-3mu}2$, as there are only two feasible strategies out of four in the set $\{f_{\hat{u}} (s):\cS \to \cX \}$, as shown in Table~\ref{table:strategies_EH}. Specifically, strategy $\hat{u}^+\mspace{-3mu}=\mspace{-3mu}a$ corresponds to always transmitting $X\mspace{-3mu}=\mspace{-3mu}0$, regardless of the battery state, while strategy $\hat{u}^+\mspace{-3mu}=\mspace{-3mu}b$ represents an attempt to transmit $X\mspace{-3mu}=\mspace{-3mu}1$, which succeeds only if the battery is charged.



\begin{table}[t]
\centering 
\begin{tabular}[b]{||l|c|c||} 
\hline \hline
$x=f_{\hat{u}^+}(s)$ & $s=0$ & $s=1$ \\
\hline \hline
$\hat{u}^+=a$ (feasible -- always transmit $'0'$) & $0$ & $0$ \\
\hline
$\hat{u}^+=b$ (feasible -- attempt to transmit $'1'$) & $0$ & $1$ \\
\hline
$\hat{u}^+=c$ (infeasible) & $1$ & $0$ \\
\hline
$\hat{u}^+=d$ (infeasible) & $1$ & $1$ \\
\hline \hline
\end{tabular}
\caption{Feasibility of strategies under the battery constraint.}
\label{table:strategies_EH}
\end{table}

Now, we revisit Step (c) in \eqref{eq:step_EH}, which states
\begin{align}
    I(\hat{U}_{K_{i-1}}^i; X_i | Q_{i-1})    
    \mspace{-3mu}=\mspace{-3mu}I(\tilde{U}_i,\tilde{U}_{i-1};X_i|Q_{i-1}), \,i\in[1\mspace{-3mu}:\mspace{-3mu}n], \label{eq:StepC_revisit}
\end{align}
where $\tilde{U}_{i-1} \mspace{-3mu} \triangleq \mspace{-3mu}\hat{U}_{K_{i-1}}^{i-1}$, and redefine $\tilde{U}_{i-1}$ to establish the auxiliary sets in 
\eqref{eq:Uset_initialNodesProof}–\eqref{eq:Uset_lastNodeProofUB}.
We analyze each of the two cases corresponding to whether $Q_{i-1}\mspace{-3mu}\in\mspace{-3mu} [0\mspace{-3mu}:\mspace{-3mu}N\mspace{-3mu}-\mspace{-3mu}1]$ or $Q_{i-1}\mspace{-3mu}=\mspace{-3mu}N$.

For Case $(i)$ ($X_{{K}_{i-1}}^{i-1}=(1,\mathbf{0}^{i-1-{K_{i-1}}})$), corresponding to nodes $Q_{i-1}\mspace{-3mu}\in\mspace{-3mu} [0\mspace{-3mu}:\mspace{-3mu}N\mspace{-3mu}-\mspace{-3mu}1]$, we employ the following lemma.

\begin{lemma}
\label{lem:tryTransmit1}
For the boosted BEHC($N$) at time $i$, given $x^{i-1}$ such that $x_{i-N}^{i-1}\mspace{-3mu}\neq \mspace{-3mu}\mathbf{0}^N$, and given $\hat{u}^i$ containing a component $\hat{u}_l\mspace{-4mu}=\mspace{-4mu}b$ for some time index $l\in [k_{i-1}(x_{i-N}^{i-1}):i\mspace{-3mu}-\mspace{-3mu}1]$, 
the following holds: 
$P(x_i|\hat{u}^i,x^{i-1})=P(x_i|\hat{u}_l^i,q_{i-1}(x_{i-N}^{i-1}))$.
\end{lemma}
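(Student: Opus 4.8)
The plan is to exploit the \emph{reset} property of the strategy $\hat u = b$ (``attempt to transmit a $1$''): whenever $\hat U_l = b$, Table~\ref{table:strategies_EH} gives $X_l = f_b(S_{l-1}) = S_{l-1}$, so the battery recursion \eqref{eq:BEHC_state_evolution} yields $S_l = \min\{S_{l-1} - X_l + E_l, 1\} = \min\{E_l,1\} = E_l$ \emph{irrespective} of the previous battery state $S_{l-1}$ --- that is, irrespective of whether a $1$ was actually delivered. Because the arrivals $E_j$ are i.i.d.\ and independent of the message and of any policy randomization, $E_l$ is independent of everything determined up to time $l-1$; hence, conditioned on $\{\hat U_l=b\}$, the state $S_l$ is a fresh $\mathrm{Bern}(\eta)$ variable, conditionally independent of $(\hat U^{l-1},X^{l-1})$. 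This is what ``washes out'' the pre-$l$ history, so that only $\hat U_l^i$ --- together with the $Q$-graph node $Q_{i-1}$, which encodes the relevant part of $X_l^{i-1}$ --- can influence $X_i$.

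I would carry this out in three steps. \textbf{(i) Reduction to the battery state.} Since $X_i = f_{\hat U_i}(S_{i-1})$ and, by the causal structure of \eqref{eq:joint_boosted}, $\hat U_i$ is conditionally independent of $S_{i-1}$ given $(\hat U^{i-1},X^{i-1})$, it suffices to show that the conditional law of $S_{i-1}$ given $(\hat U^{i-1},X^{i-1})$ is a function of $(\hat U_l^{i-1},Q_{i-1})$ alone. \textbf{(ii) Collapsing the pre-$l$ history.} Expanding the joint distribution \eqref{eq:joint_boosted} and applying Bayes' rule, the indicator $\mathbbm 1\{X_l = f_b(S_{l-1})\} = \mathbbm 1\{X_l = S_{l-1}\}$ pins $S_{l-1}=X_l$ in the marginalization over the pre-$l$ states; the contribution of times $1,\dots,l-1$ then factors out and cancels between numerator and denominator, leaving the conditional law of $S_{i-1}$ computed in a ``fresh'' chain started at $S_{l-1}=X_l$ with inputs $X_l^{i-1}$ and strategies $\hat U_l^{i-1}$. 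Combined with the reset identity $S_l=E_l$ (which does not even depend on the value $X_l$), this law is a function of $(\hat U_l^{i-1},X_{l+1}^{i-1})$ only. \textbf{(iii) Resolving the forward window.} Since $l\ge K_{i-1}$ in Case $(i)$, each $X_j$ with $j\in[l+1:i-1]$ lies in the last-$N$ window strictly after the last transmitted $1$, hence $X_j=0$; a short index count shows the associated $Q$-graph nodes satisfy $Q_{j-1}=j-1-K_{i-1}\le Q_{i-1}-1\le N-2$, so the modified law \eqref{eq:modifiedChannelLaw} coincides with the ordinary one \eqref{eq:stateEvolutionNoiselessEH} throughout this window (the ``boost'' is never active). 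Under that law a zero input gives $\{S_j=0\}=\{S_{j-1}=0\}\cap\{E_j=0\}$, while at the times $j\in[l+1:i-1]$ with $\hat U_j=b$ the constraint $X_j=0$ forces $S_{j-1}=0$ and re-resets the battery. Unrolling from $S_l=E_l$ through these forcing events then gives $P(S_{i-1}=0\mid \hat U_l^{i-1},X_{l+1}^{i-1}=\mathbf 0^{\,i-1-l})=\bar{\eta}^{\,i-l^{\ast}}$, where $l^{\ast}\triangleq\max\{j\in[l:i-1]:\hat U_j=b\}$ is read off from $\hat U_l^{i-1}$; and since $l\ge K_{i-1}=i-1-Q_{i-1}$ already forces $X_{l+1}^{i-1}=\mathbf 0^{\,i-1-l}$, conditioning on $Q_{i-1}$ supplies exactly this structure. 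Reinserting $X_i=f_{\hat U_i}(S_{i-1})$ yields $P(x_i\mid\hat U^i,X^{i-1})=P(x_i\mid\hat U_l^i,Q_{i-1})$, which is the claim.

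The step I expect to be the main obstacle is (iii): one must confirm the boosted modification \eqref{eq:modifiedChannelLaw} is never triggered on $[l:i-1]$ --- including for the update of $S_l$ itself, which follows either from $X_l=1$ when $l=K_{i-1}$ or from $Q_{l-1}\le Q_{i-1}-1\le N-2$ when $l>K_{i-1}$ --- and one must treat with care the intermediate times $j$ at which $\hat U_j=b$ but $X_j=0$: there $\{X_j=0\}$ is not automatic, it pins $S_{j-1}=0$, and the unrolling of $P(S_{i-1}=0\mid\cdot)$ must go through the successive conditional probabilities (equivalently, a telescoping of the forcing events) rather than a naive product of transition probabilities. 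Everything else --- the reduction (i) and the cancellation (ii) --- is routine once one observes that the indicator $\mathbbm 1\{X_l=S_{l-1}\}$ collapses the pre-$l$ state marginal to the single term $S_{l-1}=X_l$.
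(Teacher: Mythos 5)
Your proposal is correct and rests on the same key mechanism as the paper's proof: the reset property of strategy $b$ (since $X_l=f_b(S_{l-1})=S_{l-1}$, the updated state $S_l=E_l$ is a fresh $\mathrm{Bern}(\eta)$ variable independent of the pre-$l$ history), together with the check that the boosted transition \eqref{eq:modifiedChannelLaw} is never triggered on $[l:i-1]$ because the relevant nodes satisfy $Q_{j-1}\le Q_{i-1}-1\le N-2$ --- exactly the content of Step~(d) in the paper's Bayes-rule derivation in Appendix~\ref{Appendix:prf_lem_tryTransmit1}. The only organizational difference is that you carry the computation of $P(S_{i-1}=0\mid\cdot)=\bar{\eta}^{\,i-l^{*}}$ to completion and handle the passage from $x^{i-1}$ to $q_{i-1}$ directly, whereas the paper stops once the pre-$l$ factors cancel and delegates that last conversion to Lemma~\ref{lem:boosted_markov}; both routes are sound.
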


The proof of Lemma~\ref{lem:tryTransmit1} appears in Appendix~\ref{Appendix:prf_lem_tryTransmit1}. From this lemma, we deduce that for any time $i$, node $q_{i-1} \in [0:N-1]$ and vector $\hat{u}^i$, we only need to consider the maximal time index $l\in [k\mspace{-3mu}:\mspace{-3mu}i\mspace{-3mu}-\mspace{-3mu}1]$ for which $\hat{u}_l=b$, while all previous strategies $\hat{u}^{l-1}$ can be ignored as they do not affect $I(\hat{U}_{K_{i-1}}^i; X_i | Q_{i-1}=q_{i-1})$.
Thus, defining $m \mspace{-3mu}\triangleq \mspace{-3mu} \max \{l \mspace{-3mu} \in  \mspace{-3mu} [k_{i-1}\mspace{-3mu} :\mspace{-3mu} i\mspace{-3mu}-\mspace{-3mu}1] \;| \; \hat{u}_{l}\mspace{-3mu}=\mspace{-3mu} b \}$, we obtain $\hat{u}_m^{i-1}\mspace{-3mu}=\mspace{-3mu}(b,\mspace{-2mu}a,\mspace{-2mu}a,\mspace{-2mu}\dots,\mspace{-2mu}a)$. Due to the support of $m$, there are $i\mspace{-3mu}-\mspace{-3mu}k_{i-1}=q_{i-1}+1$ possible combinations for $\hat{u}_m^{i-1}$. Hence, we establish a one-to-one mapping 
to $[0\mspace{-3mu}:\mspace{-3mu}q_{i-1}]$ via $\tilde{u}_{i-1}\mspace{-3mu}\triangleq \mspace{-3mu} i\mspace{-3mu} -\mspace{-3mu} 1\mspace{-3mu} -\mspace{-3mu} m$. We then define $\tilde{u}_i=\mathbbm{1}\{\hat{u}_i=a\} (\tilde{u}_{i-1}\mspace{-3mu}+\mspace{-3mu}1)$, and map $X_i$ via
\begin{align}
     \tilde{f}(\tilde{U}_i,S_{i-1})&=S_{i-1} \mathbbm{1}\{\tilde{U}_i^+=0\}.
    \label{eq:f_tilde}
\end{align}
Since there is a one-to-one mapping between $\hat{U}_{K_{i-1}}^i$ and $(\tilde{U}_i,\tilde{U}_{i-1})$ given $q_{i-1}$, we obtain $I(\hat{U}_{K_{i-1}}^i; X_i | Q_{i-1}=q_{i-1})=I(\tilde{U}_i,\tilde{U}_{i-1}; X_i | Q_{i-1}=q_{i-1})$.
This establishes $\tilde{\cU}=[0:N]$ and \eqref{eq:policy_first_nodes_proof}, and yields the auxiliary sets in \eqref{eq:Uset_initialNodesProof}.

For Case $(ii)$ ($X_{i-N}^{i-1}\mspace{-3mu}=\mspace{-3mu}\mathbf{0}^N$), corresponding to $Q_{i-1}\mspace{-3mu}=\mspace{-3mu}N$, we obtain
$P\mspace{-1mu}(x_i|\hat{u}^i\mspace{-3mu},\mspace{-3mu}x^{i-N-1}\mspace{-3mu},\mspace{-3mu}x_{i-N}^{i-1}\mspace{-4mu}=\mspace{-4mu}\mathbf{0}^N)\mspace{-4mu}=\mspace{-4mu}P\mspace{-1mu}(x_i|\hat{u}_i\mspace{-1mu},\mspace{-2mu}q_{i-1}\mspace{-2mu}=\mspace{-2mu}N)$ (see derivation in \eqref{eq:opt2}, Appendix~\ref{appendix:lemma_boosted}). Thus, $I(\hat{U}_{K_{i-1}}^i; X_i | Q_{i-1}=N)=\mspace{-3mu}I(\hat{U}_{i};X_i|Q_{i-1}\mspace{-3mu}=\mspace{-3mu}N)$, as only $\hat{U}_i$ is relevant. Consequently, we redefine $\tilde{u}_i=\hat{u}_i$, relabeling $(a,b) \mspace{-3mu} \Leftrightarrow \mspace{-3mu}(0,1)$, without requiring $\tilde{u}_{i-1}$. That is, we set $|\tilde{\cU}^{(Q=N)+}\mspace{-2mu}|\mspace{-6mu}\triangleq\mspace{-6mu} 2$ and $|\tilde{\cU}^{(Q=N)}\mspace{-2mu}|\mspace{-6mu}=\mspace{-6mu}0$.
However, we set $\tilde{\cU}^{(Q=N)}\mspace{-3mu}=\mspace{-3mu}[0\mspace{-3mu}:\mspace{-3mu}N]$ to align with $\tilde{\cU}^{+(Q=N-1)}\mspace{-3mu}=\mspace{-3mu}[0\mspace{-3mu}:\mspace{-3mu}N]$, since node $Q^+\mspace{-3mu}=\mspace{-3mu}N$ is reached by $g(Q\mspace{-3mu}=\mspace{-3mu}N\mspace{-3mu}-\mspace{-3mu}1,X\mspace{-3mu}=\mspace{-3mu}0)$. Additionally, stationarity on $(S,\tilde{U},Q)$ must be preserved. Hence, \eqref{eq:policy_last_node_UB_proof} and \eqref{eq:Uset_lastNodeProofUB} are established.

Considering the redefinitions of $\tilde{U}_i,\tilde{U}_{i-1}$ in both cases, along with mapping $X_i$ via \eqref{eq:f_tilde}, we conclude that \eqref{eq:StepC_revisit} holds. The resulting joint distribution $P(s_0^n,\tilde{u}^n,x^n,q^n)$ is given by 
\begin{align}
&P_{S_0,Q_0}(s_0,q_0) \prod\nolimits_{i=1}^n P(\tilde{u}_i|\tilde{u}_{i-1},q_{i-1}) \mspace{-2mu}\mathbbm{1}\{x_i = \tilde{f}(\tilde{u}_i,s_{i-1})\nn\\ &\times P_{S^+|X,S,Q}(s_i|x_i,s_{i-1},q_{i-1})  \mathbbm{1}\{q_i=g(q_{i-1},x_i)\}
    \text{.} \label{eq:joint_boostedProofCard}
\end{align}
Thus, the policy structure and auxiliary RV sets are established, and \eqref{eq:step_EH}, Step (c), follows with \eqref{eq:joint_boostedProofCard}.

To maintain consistency with the theorem's notation, we rename $\tilde{U}_i, \tilde{f}$ by $U_i, f$, respectively. It remains to show that the policy structure induces the marginal $\pi_{S|U,Q}$ in \eqref{eq:s_given_uq_upper}. The proof follows by induction, as in Lemma~\ref{lem:marginal_s_given_uq}, except for the last node $Q=N$, because for any $Q\in[0:N-1]$, the conditionals $P_{S^+|,X,S,Q}$ and $P_{S^+|X,S}$ are equal by \eqref{eq:modifiedChannelLaw}. 

Node $Q^+=N$ is reached either by $g(Q\mspace{-3mu}=\mspace{-3mu}N-1,X\mspace{-3mu}=\mspace{-3mu}0)$ or by $g(Q\mspace{-3mu}=\mspace{-3mu}N,X\mspace{-3mu}=\mspace{-3mu}0)$. Hence, for any $u$:
\begin{align}
    &P_{S|U,Q}(0|u,N)\stackrel{(a)}=P_{S^+|U^+,Q^+}(0|u,N)\nn\\
    &\stackrel{(b)}=P_{S^+|U^+,Q^+,X}(0|u,N,0) \nn\\
    &=\sum\nolimits_{q\in\{N-1,N\},s} P_{Q,S,S^+|U^+,Q^+,X}(q,s,0|u,N,0)\nn\\
    &\stackrel{(c)}=\mspace{-22mu}\sum_{q\in\{N-1,N\},s} \mspace{-27mu}P_{Q,S|U^+,Q^+,X}(q,s|u,N,\mspace{-3mu}0)P_{S^+|X,S,Q}(0|0,\mspace{-3mu}s,\mspace{-3mu}q)\mspace{-3mu}\stackrel{(d)}= 0. \nn 
\end{align}
Step (a) follows from the stationary distribution. Step (b) follows because $Q^+=N$ is reached only by $X=0$, as it has no incoming edge $'X=1'$. Step (c) follows from the Markovity of $P_{S^+|X,S,Q}$ implied by the joint distribution \eqref{eq:modifiedChannelLaw}. Step (d) follows from 
\eqref{eq:modifiedChannelLaw}.
This yields $P_{S|U,Q}(0|u,N)=0$, completing the proof.
\qed

\begin{remark}
    The proof of Theorem~\ref{thr:cardinality} provides an intuitive interpretation of the auxiliary RV $U_{i-1}$: it represents the time elapsed since the encoder last \textit{attempted} to transmit a $'1'$, with success depending on whether the battery was charged. The policy $P_{U_i|U_{i-1},Q_{i-1}}(u^+|u,q), u^+\in \{0,u+1\}$ describes the likelihood of attempting another $'1'$ ($u^+=0$) versus deliberately transmitting a $'0'$ ($u^+=u+1$). This aligns with $\pi(S=0|u,q)$ in \eqref{eq:s_given_uq_upper}: if the last attempt was $u$ steps ago, the probability of an empty battery is $\bar{\eta}^{u+1}$, unless all $N$ previous transmission were zeros, ensuring a charged battery.
\end{remark}

We now proceed to prove the convergence of both the upper and lower bound sequences to $C_{\text{BEHC}}$.


\subsection{Proof of Bounds Convergence to 
\label{subsec:ProofThrConvergence}
$C_{\text{BEHC}}$}
To complete the proofs of Theorems~\ref{thr:EH_cvx_lower} and~\ref{thr:EH_cvx_upper}, we introduce the following theorem.

\begin{theorem}
\label{thr:convergence}
The sequences of solutions to the convex optimization problems for the lower and upper bounds converge to $C_{\text{BEHC}}$ as $N$ tends to infinity.
\end{theorem}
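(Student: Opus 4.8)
The plan is to sandwich $C_{\text{BEHC}}$ between the two sequences and show that the gap vanishes as $N\to\infty$. Denote by $L_N$ the optimal value of the lower-bound problem \eqref{optProb_EH_LB} and by $U_N$ the optimal value of the upper-bound problem \eqref{optProb_EH_UB}. By Theorem~\ref{thr:EH_cvx_lower} and Theorem~\ref{thr:FSC_Qgraph_UB}–Theorem~\ref{thr:cardinality} we already have $L_N \le C_{\text{BEHC}} \le \bar{C}_{\text{BEHC}}(N) \le U_N$ for every $N\ge0$, so it suffices to prove $U_N - L_N \to 0$; in fact, since the two $Q$-graphs in Fig.~\ref{fig:EH_LB_Graph} and Fig.~\ref{fig:EH_UB_Graph} differ only in the single self-loop at node $Q=N$, and the two feasible sets differ only in the treatment of that last node (the lower bound forces $U^+=0$ there while the upper bound allows $U^+\in\{0,1\}$, and $\pi(S=0\mid u,N)$ is $\bar\eta^{u+1}$ versus $0$), I expect to show directly that the two optimal values coincide in the limit and both equal $C_{\text{BEHC}}$.

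First I would establish the monotonicity/comparison structure: show $U_N$ is non-increasing in $N$ and $L_N$ is non-decreasing in $N$ (or at least that suitable subsequences are monotone), using the fact that a larger $Q$-graph refines a smaller one — a policy on the $N$-node graph can be lifted to the $(N{+}1)$-node graph by "passing through" the extra node without changing the induced process on $(S,X)$, so the feasible set only grows for the lower bound and, dually, the boosted channel $\bar{C}_{\text{BEHC}}(N)$ only gets closer to $C_{\text{BEHC}}$ as $N$ grows because $P(S_i=1\mid X_{i-N}^{i-1}=\mathbf{0}^N)=1$ is a weaker and weaker perturbation. Thus both sequences converge. Second, I would quantify the gap between the boosted channel and the true BEHC: the only difference is that after $N$ consecutive zero inputs the boosted channel's battery is full with probability $1$ instead of with probability $1-\bar\eta^{\,N}$. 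Since under the optimal (or any good) input distribution the event of $N$ consecutive transmitted zeros occurs with probability at most some $\rho<1$ per renewal, and the resulting mismatch in the state law is of order $\bar\eta^{\,N}$, a standard continuity/perturbation bound on mutual information (e.g.\ via a coupling of the two state processes that differ only on an event of probability $O(\bar\eta^{\,N})$, together with the boundedness of the per-letter information and a Fano-type estimate) gives $\bar{C}_{\text{BEHC}}(N) - C_{\text{BEHC}} = O(\bar\eta^{\,N}) \to 0$. Combined with $U_N \le$ (something converging to $\bar{C}_{\text{BEHC}}(N)$ up to the $O(N)$ cardinality slack discussed after Theorem~\ref{thr:EH_cvx_upper}), this pins the upper sequence's limit to $C_{\text{BEHC}}$.

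Third, for the lower bound I would argue convergence from below to $C_{\text{BEHC}}$ by exhibiting asymptotically optimal policies on the $N$-node $Q$-graph. Here I would invoke the known single-letter capacity characterization of $C_{\text{BEHC}}$ as a directed-information / timing-channel optimization (the multi-letter expression recalled in the introduction, cf.\ \cite{tutuncuoglu2017binary,shemuel2024finite}): given $\varepsilon>0$, pick a finite-memory input strategy achieving within $\varepsilon$ of capacity whose relevant statistic is the time since the last attempted $'1'$, truncated at some horizon $N_0$; for $N\ge N_0$ this strategy is representable on the $N$-node $Q$-graph of Fig.~\ref{fig:EH_LB_Graph} with an appropriate BCJR-invariant policy (Lemma~\ref{lemma:eh_lb_BCJR} guarantees BCJR-invariance is automatic for the chosen auxiliary sets), so $L_N \ge C_{\text{BEHC}} - \varepsilon$ for all large $N$. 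Since $\varepsilon$ is arbitrary and $L_N \le C_{\text{BEHC}}$, we get $L_N \to C_{\text{BEHC}}$; together with the upper-bound limit this forces $U_N \to C_{\text{BEHC}}$ as well, completing the proof.

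The main obstacle is the second step — making the perturbation bound $\bar{C}_{\text{BEHC}}(N)-C_{\text{BEHC}}\to 0$ rigorous uniformly, i.e.\ controlling how a battery-law perturbation that is exponentially small in $N$ but occurs on a possibly long block ($N$ channel uses) affects capacity, and simultaneously absorbing the $O(N)$ cardinality slack that appears when passing from the directed-information converse \eqref{eq:UBepsilonN_boosted} to the single-letter $Q$-graph bound. I expect this to require a careful coupling argument between the BEHC state process and the boosted process, bounding the expected number of letters on which they disagree, and then using a uniform continuity estimate for the per-letter conditional mutual information in terms of total variation distance of the joint laws; the $O(N)$ term is dominated because the convergence of $\bar{C}_{\text{BEHC}}(N)$ itself is geometric in $N$.
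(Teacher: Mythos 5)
Your overall sandwich strategy ($L_N \le C_{\text{BEHC}} \le \bar{C}_{\text{BEHC}}(N) \le U_N$ plus a vanishing gap) is the right skeleton, and it matches the paper's Part 2. But the way you propose to close the gap is genuinely different from the paper's argument, and the three steps you rely on each contain a hole that you acknowledge but do not fill. (1) The claim $\bar{C}_{\text{BEHC}}(N)-C_{\text{BEHC}}=O(\bar{\eta}^{\,N})$ via a coupling and a ``standard continuity/perturbation bound'' is not standard here: the perturbation acts on the state-transition law of a channel whose state is unknown to the decoder and has unbounded memory, and the two capacities are defined through $n$-letter limits; no uniform continuity estimate of the kind you invoke is available off the shelf, and you flag this yourself as the main obstacle. (2) Even granting (1), concluding $U_N\to C_{\text{BEHC}}$ requires the \emph{reverse} inequality $U_N \le \bar{C}_{\text{BEHC}}(N)+o(1)$, i.e.\ asymptotic tightness of the single-letter $Q$-graph upper bound for the boosted channel. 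Theorem~\ref{thr:FSC_Qgraph_UB} only gives $\bar{C}_{\text{BEHC}}(N)\le U_N$; the ``$O(N)$ cardinality slack'' you cite is the paper's bound on $U_N-L_N$, not on $U_N-\bar{C}_{\text{BEHC}}(N)$, so this step is circular as written. (3) Your lower-bound argument presumes that truncating the infinite-cardinality auxiliary variable of the timing-channel capacity expression at horizon $N_0$ loses at most $\varepsilon$; that truncation property is precisely what makes the known expression uncomputable and is nowhere established.

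The paper avoids all three difficulties by never comparing either sequence to $\bar{C}_{\text{BEHC}}(N)$ or to the timing-channel expression. Instead it compares the two convex programs \emph{directly}: take the optimizer $\bar{P}^*$ of the upper-bound problem, transplant its policy onto the lower-bound $Q$-graph for all nodes $q\in[0:N-1]$ (forcing $U^+=0$ at $q=N$ to stay feasible), and prove by induction that the induced distributions $P(x,u,u^+\mid q)$ coincide on those $N$ nodes. The objective then differs only through node $Q=N$, whose stationary mass under $\bar{P}^*$ is at most $\bigl(N\,\bar{P}^*_{U^+\mid Q}(0\mid N)+1\bigr)^{-1}$, giving
\begin{align}
b_N-a_N \;\le\; \max_{p\in[0,1]}\frac{H_2(p)}{Np+1}\;\xrightarrow[N\to\infty]{}\;0, \nonumber
\end{align}
after which the squeeze theorem with $a_N\le C_{\text{BEHC}}\le b_N$ finishes the proof. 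If you want to salvage your outline, the mimicking construction is the missing idea: it converts the convergence question into a purely combinatorial comparison of two feasible sets that differ at a single node, with no continuity-of-capacity or truncation argument needed.
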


Let $a_N$ and $b_N$ denote the solutions of the convex optimization problems for the lower and upper bounds in \eqref{optProb_EH_LB} and \eqref{optProb_EH_UB}, respectively. Thus, we aim to prove that 
\begin{align}
\label{eq:thrConvergence}
    \lim_{N\to \infty} a_N= C_{\text{BEHC}}=\lim_{N\to \infty} b_N.
\end{align}

\begin{proof}
We divide the proof into two parts. In the first part, we prove that both sequences converge to the same limit denoted by $J$, i.e.,
\begin{align}
    \lim_{N\to \infty} a_N = \lim_{N\to \infty} b_N=J. \label{eq:conv_part1}
\end{align}
In the second part of the proof, for any $N\ge0$, we have 
$a_N \le C_{\text{BEHC}} \le \bar{C}_{\text{BEHC}}(N) \le b_N$,
and define the constant sequence $c_N\triangleq C_{\text{BEHC}}$ to obtain $a_N \le c_N \le b_N$.
Taking the limit as $N\to \infty$, and using \eqref{eq:conv_part1} alongside the squeeze theorem, we find $\lim_{N\to\infty} c_N=J$. Since $c_N$ is constant, it follows that 
$\lim_{N\to\infty} c_N=C_{\text{BEHC}}$, which implies $J=C_{\text{BEHC}}$, thus concluding the proof. Therefore, it remains to prove \eqref{eq:conv_part1}.

For the sequence of solutions to the convex optimization problems \eqref{optProb_EH_UB} for the lower bound, we denote $\underline{I}_{\underline{P}^*}(U^{+(N)},U^{(N)};X|Q^{(N)})= a_N$, representing the conditional mutual information induced by an optimal joint distribution $\underline{P}^*_{S,U,Q,X,S^+,U^+,Q^+}$.
Similarly, for the sequence of solutions to the convex optimization problems \eqref{optProb_EH_UB} for the upper bound, we denote $\bar{I}_{\bar{P}^*}(U^{+(N)},U^{(N)};X|Q^{(N)})=b_N$,
representing the conditional mutual information induced by an optimal joint distribution $\bar{P}^*_{S,U,Q,X,S^+,U^+,Q^+}$. We construct a joint distribution $\tilde{P}_{S,U,Q^,X,S^+,U^+,Q^+}$
that satisfies all the constraints for a valid lower bound in Theorem~\ref{thr:EH_cvx_lower}.
$\tilde{P}$ induces the lower bound $\underline{I}_{\tilde{P}}(U^{+(N)},U^{(N)};X|Q^{(N)})$, which represents the conditional mutual information based on the $N+1$ sized $Q$-graph in Fig.~\ref{fig:EH_LB_Graph}. The construction mirrors the policy of $\bar{P}^*$, which corresponds to the upper bound, for any node $q\in [0:N-1]$, i.e.,
\begin{align}
    \label{eq:mimicPolicy}
    \mspace{-9mu} \tilde{P}_{U^+|U,Q}(\mspace{-2mu}u^+\mspace{-4mu}|u,\mspace{-4mu}q)\mspace{-4mu}=\mspace{-4mu} \bar{P}^*_{U^+|U,Q}(u^+\mspace{-4mu}|u,\mspace{-4mu}q),  q\mspace{-4mu}\in\mspace{-4mu} [0\mspace{-4mu}:\mspace{-4mu}N\mspace{-4mu}-\mspace{-4mu}1],\mspace{-4mu} \,\forall u,\mspace{-4mu} u^+\mspace{-4mu}, 
\end{align}
while for the last node, $Q=N$, it satisfies Constraint \eqref{eq:policy_last_node}, ensuring a valid joint distribution for the lower bound. This construction induces that $\tilde{P}(x,u,u^+|q)=\bar{P}^*(x,u,u^+|q)$ for all $q \in [0:N-1]$ because
\begin{align}
\label{eq:joint_mimicUB}
    & \tilde{P}(x,u,u^+|q)=\sum\nolimits_{s} \tilde{P}(s,x,u,u^+|q) \nn\\
    &=\sum\nolimits_{s} \tilde{P}(u|q) \pi_{S|U,Q}(s|u,q)\tilde{P}(u^+|u,q) \mathbbm{1}\{x = f(u^+,s)\}     \nn\\
    &\stackrel{(*)}=\sum\nolimits_{s} \bar{P}^{*}(u|q) \pi_{S|U,Q}(s|u,q)\bar{P}^{*}(u^+|u,q) \mathbbm{1}\{x = f(u^+,s)\}     \nn\\
    & =\bar{P}^*(x,u,u^+|q), 
\end{align}
where Step $(*)$ follows from \eqref{eq:mimicPolicy}, the fact that both distributions share the same function $f(U^+,S)$ as given in \eqref{eq:x_behc}, and the same conditional stationary distribution $\pi_{S|U,Q}(s|u,q)$ for all nodes $q\ne N$ (see \eqref{eq:s_given_uq_lb} vs. \eqref{eq:s_given_uq_upper}). Furthermore, by induction, it holds that
\begin{align}
\label{eq:inductionClaimUgivenQ}
\tilde{P}(u|q)=\bar{P}^{*}(u|q), \forall q \in [0:N-1]    
\end{align}
\textit{Base case:} For $q=0$. we have $\tilde{P}_{U|Q}(0|0)=1=\bar{P}^{*}_{U|Q}(0|0)$ because
$\pi_{U,S|Q}(0,0|0)=\bar{\eta}, \quad \pi_{U,S|Q}(0,1|0)=\eta$, as shown in the proof of Lemma~\ref{lemma:eh_lb_BCJR}. This results applies to any policy of the lower bound in Theorem~\ref{thr:EH_cvx_lower} and, in particular, holds for both $\tilde{P}$ and $\bar{P}^{*}$.\\ \textit{Inductive Hypothesis:
}: Assume that 
\begin{align}
\label{eq:inductionHypothesisMimic}
    \tilde{P}(u|q)=\bar{P}^{*}(u|q), q\in [0:N-2], \quad \forall u\in [0:q].
\end{align}
\textit{Inductive Step}: We need to prove that
\begin{align}
\tilde{P}_{U|Q}(u|q+1)=\bar{P}^{*}_{U|Q}(u|q+1), \quad \forall u \in [0:q+1]. \label{eq:inductionStep}
\end{align}
From the stationarity on $(S,U,Q)$, which holds for both $\tilde{P}$ and $\bar{P}^{*}$, it follows that $\pi_{U^+|Q^+} (u|q)=\pi_{U|Q}(u|q), \forall u,q$. Thus, proving \eqref{eq:inductionStep} reduces to showing
\begin{align}
\mspace{-9mu} \tilde{P}_{U^+|Q^+}\mspace{-3mu}(u^+|q\mspace{-3mu}+\mspace{-3mu}1)\mspace{-4mu}=\mspace{-4mu}\bar{P}^{*}_{U^+|Q^+}\mspace{-3mu}(u^+|q\mspace{-3mu}+\mspace{-3mu}1), \forall u^+ \mspace{-4mu}\in\mspace{-4mu} [0:q\mspace{-3mu}+\mspace{-3mu}1]. 
\label{eq:inductionAlternativeStep}
\end{align}
For both $Q$-graphs of lower and upper bounds in Fig.~\ref{fig:EH_LB_Graph} and Fig.~\ref{fig:EH_UB_Graph}, node $Q^+=q+1 \in [1:N-1]$ is reached solely by node $Q=q$ with input $X=0$, i.e. $g(q,0)$. Hence, the joint distribution 
$\tilde{P}_{U^+,S^+|Q^+}(u^+,s^+|q+1)$ is given by
\begin{align} 
&\frac{\sum_{u,s}
\mspace{-15mu} \tilde{P}(u|q)\pi(s|u,\mspace{-3mu} q)
\tilde{P}(u^+\mspace{-3mu} |u, \mspace{-3mu} q) \mspace{-3mu} \mathbbm{1}\mspace{-3mu} \{0 \mspace{-3mu} = \mspace{-5mu} f(u^+\mspace{-3mu} ,\mspace{-3mu}s)\} \mspace{-3mu} P_{S^+|X,S}\mspace{-3mu}(s^+\mspace{-3mu} |0, \mspace{-3mu} s)}{\sum_{u,u'^+,s} \tilde{P}(u|q)\pi(s|u, \mspace{-3mu} q)\tilde{P}(u'^+|u, \mspace{-3mu} q)\mspace{-3mu} \mathbbm{1}\mspace{-3mu} \{0 = f(u'^+, \mspace{-3mu} s)\}}, \nn
\end{align}
which equals the following expression due to the induction hypothesis \eqref{eq:inductionHypothesisMimic}, Eq.~\eqref{eq:mimicPolicy}, and the fact that both $\tilde{P}$ and $\bar{P}^*$ share the same $f(U^+,S)$, $\pi_{S|U,Q}$ and $P_{S^{+}|X,S}$ for all $Q\ne N$: 
\begin{align}
&\frac{\sum_{u,s}\mspace{-16mu} \bar{P}^*(u|q)\pi(\mspace{-3mu} s|u, \mspace{-3mu} q)\mspace{-3mu} \bar{P}^*\mspace{-3mu} (u^+\mspace{-3mu} |u, \mspace{-3mu}q) \mathbbm{1}\{0 \mspace{-3mu}=\mspace{-3mu} f(u^+\mspace{-5mu} ,\mspace{-3mu}s)\} P_{S^+|X,S}(s^+\mspace{-3mu}|0,\mspace{-3mu}s)}{\sum_{u,u'^+,s} \bar{P}^*(u|q)\pi(s|u,\mspace{-3mu} q) \bar{P}^*(u'^+\mspace{-3mu} |u, \mspace{-3mu} q)\mathbbm{1}\{0 \mspace{-3mu} =\mspace{-3mu}  f(u'^+,\mspace{-3mu} s)\}} \nn\\
&=\bar{P}^*_{U^+,S^+|Q^+}(u^+,s^+|q+1). \nn
\end{align}
Consequently, the marginals are also equal, i.e., \eqref{eq:inductionAlternativeStep} holds. Thus, \eqref{eq:inductionClaimUgivenQ} holds as well, and \eqref{eq:joint_mimicUB} is established.

Now, we evaluate the difference between the bounds
\begin{align}
    &0 \le b_N-a_N \nn\\
    & =\mspace{-3mu}\bar{I}_{\bar{P}^*}(U^{+(N)}\mspace{-3mu},U^{(N)};X|Q^{(N)})\mspace{-3mu}-\mspace{-3mu}\underline{I}_{\underline{P}^*}(U^{+(N)}\mspace{-3mu},U^{(N)};X|Q^{(N)})  \nn\\
    & \stackrel{(a)}\le \mspace{-3mu} \bar{I}_{\bar{P}^*}(U^{+(N)}\mspace{-3mu},U^{(N)};X|Q^{(N)})\mspace{-3mu}-\mspace{-3mu}\underline{I}_{\tilde{P}}(U^{+(N)}\mspace{-3mu},U^{(N)};X|Q^{(N)})  \nn\\
    & =\mspace{-8mu}\sum_{q=0}^{N-1} \mspace{-8mu} \bar{P}^*_{Q}(q) \bar{I}_{\bar{P}^*}\mspace{-3mu} (U^+\mspace{-5mu} ,\mspace{-3mu} U; \mspace{-3mu} X | Q\mspace{-3mu} = \mspace{-3mu} q) \mspace{-4mu} + \mspace{-4mu} \bar{P}^*_{Q}(\mspace{-2mu} N \mspace{-2mu}) \bar{I}_{\bar{P}^*}(U^+\mspace{-5mu} ,U ; \mspace{-2mu}X|Q\mspace{-4mu} = \mspace{-4mu}N\mspace{-2mu}) \nn\\ & - \mspace{-9mu} \sum_{q=0}^{N-1} \mspace{-7mu}\tilde{P}_{Q}(q)\mspace{-1mu} \underline{I}_{\tilde{P}}(U^+\mspace{-3mu},U;\mspace{-2mu} X |Q\mspace{-3mu}=\mspace{-3mu}q) \mspace{-5mu}-\mspace{-5mu} \tilde{P}_{Q}(\mspace{-2mu}N\mspace{-2mu}) \underline{I}_{\tilde{P}}(U^+\mspace{-3mu},\mspace{-2mu} U;\mspace{-2mu} X|Q\mspace{-3mu}=\mspace{-3mu}N) \nn\\
    & \stackrel{(b)}=\sum\nolimits_{q=0}^{N-1} \left(\bar{P}^*_{Q}(q) -\tilde{P}_{Q}(q)\right) \bar{I}_{\tilde{P}}(U^+,U;X|Q=q)+ \nn\\
    & \quad \mspace{-6mu} \bar{P}^*_{Q}(N) \bar{I}_{\bar{P}^*}(U^+\mspace{-3mu} ,\mspace{-3mu}U;X|Q\mspace{-3mu}=\mspace{-3mu} N) \mspace{-4mu}-\mspace{-4mu} \tilde{P}_{Q}(N) \underline{I}_{\tilde{P}}(U^+\mspace{-3mu},\mspace{-3mu}U;X|Q\mspace{-3mu}=\mspace{-3mu} N) \nn\\
    &\stackrel{(c)}\le \mspace{-3mu} \bar{P}^*_{Q}(N) \bar{I}_{\bar{P}^*}(U^+,U;X|Q\mspace{-3mu}=\mspace{-3mu}N) \mspace{-5mu}\stackrel{(d)}=\mspace{-5mu} \bar{P}^*_{Q}(N) H_2(\bar{P}^*_{U^+|Q}(0|N)) \nn\\
    & \stackrel{(e)}\le \frac{H_2(\bar{P}^*_{U^+|Q}(0|N))}{N \bar{P}^*_{U^+|Q}(0|N) +1} \le \max_{p\in[0,1]} \frac{H_2(p)}{N p +1}.    \label{eq:gap}
\end{align}
Step~(a) follows since $\tilde{P}$ satisfies the conditions for the lower bound optimization problem, but may not be optimal.
Step~(b) utilizes \eqref{eq:joint_mimicUB} 
to evaluate $\underline{I}_{\tilde{P}}(U^+,U;X|Q\mspace{-3mu}=\mspace{-3mu}q)\mspace{-3mu}=\mspace{-3mu}\bar{I}_{\bar{P}^*}(U^+,U;X|Q\mspace{-3mu}=\mspace{-3mu}q)$ for $q \in [0:N-1]$. 
Step~(c) relies on demonstrating that 
\begin{align}
\label{eq:statLastNode_mimic}
    \bar{P}^*_{Q}(q) \le \tilde{P}_{Q}(q), \quad \forall q\in [0:N-1].
\end{align}
To establish this, we observe that the transition probabilities $P_{Q^+|Q}(q^+|q)$ from node $q\in [0:N-1]$ to node $q^+\in [0:N]$ are identical for $\bar{P}^*$ and $\tilde{P}$. This follows from $\tilde{P}(x|q)\mspace{-3mu}=\mspace{-3mu}\bar{P}^*(x|q)$, as implied by \eqref{eq:joint_mimicUB}, and the fact that the $Q$-graphs for both the lower and upper bounds are identical, except at node $Q\mspace{-3mu}=\mspace{-3mu}N$. At node $Q\mspace{-3mu}=\mspace{-3mu}N$, however, there is a key distinction: the self-loop in $\bar{P}^*$ is absent in $\tilde{P}$, i.e.,
$\tilde{P}_{Q^+|Q}(N|N)=0\le \bar{P}^*_{Q^+|Q}(N|N)$. In contrast, $\tilde{P}_{Q^+|Q}(0|N)\ge \bar{P}^*_{Q^+|Q}(0|N)$. As a result, the corresponding stationary distributions satisfy 
$\bar{P}^*_{Q}(N) \ge \tilde{P}_{Q}(N)$ and the necessary Inequality \eqref{eq:statLastNode_mimic}.
Step~(d) follows from $\bar{I}_{\bar{P}^*}(U^+,U;X|Q\mspace{-3mu}=\mspace{-3mu}N)\mspace{-3mu}=\mspace{-3mu}H_{\bar{P}^*}(X|Q\mspace{-3mu}=\mspace{-3mu}N)-H_{\bar{P}^*}(X|U^+,U,Q\mspace{-3mu}=\mspace{-3mu}N)$ and simplifying each entropy. Specifically, using \eqref{eq:x_behc} and \eqref{eq:s_given_uq_upper}, the latter entropy simplifies to
$\mspace{-3mu}H_{\bar{P}^*}(X|S\mspace{-3mu}=\mspace{-3mu}1,U^+\mspace{-3mu}=\mspace{-3mu}0,U,Q\mspace{-3mu}=\mspace{-3mu}N)\mspace{-3mu}=\mspace{-3mu}0$.
Step~(e) follows from the fact that $\bar{P}^*_{Q}(N) \le \bar{P}'_{Q}(N)$, where $\bar{P}'$ is a joint distribution distributed as \eqref{eq:Qgraph_joint_distUB} with the upper bound $Q$-graph in Fig.~\ref{fig:EH_UB_Graph}, similar to $\bar{P}^*$, but constructed to maximizes the $Q$-graph stationary distribution at node $Q\mspace{-3mu}=\mspace{-3mu}N$ compared to $\bar{P}^*$. The distribution $\bar{P}'$ is constructed by setting $\bar{P}'_{U^+|U,Q}(0|u,q)\mspace{-3mu}=\mspace{-3mu}0$ for any $q\in[0:N-1], u\in \cU^{(Q=q)}$, thereby ensuring $\bar{P}'_{Q^+|Q}(q+1|q)\mspace{-3mu}=\mspace{-3mu}1$. At the last node, however, $\bar{P}'$ mirrors $\bar{P}^*$ by setting $\bar{P}'_{U^+|Q}(0|N)\mspace{-3mu}=\mspace{-3mu}\bar{P}^*_{U^+|Q}(0|N)$. The resulting stationary distribution of the last node, induced by the policy of $\bar{P}'$, is calculated as $\bar{P}'_{Q}(N)\mspace{-3mu}=\mspace{-3mu} (N \bar{P}^*_{U^+|Q}(0|N)+1)^{-1}$.

Finally, taking the limit as $N\to \infty$ in \eqref{eq:gap}, we obtain
\begin{align}
    0 \le \lim_{N\to \infty} b_N-a_N \le \lim_{N\to \infty} \max_{p\in[0,1]} \frac{H(p)}{N p +1} \mspace{-3mu}=\mspace{-3mu}0. 
\end{align}
Consequently, \eqref{eq:conv_part1} holds, thereby concluding the proof.
\end{proof}

\begin{remark}    
In \eqref{eq:gap}, we derive an upper bound on the gap between the upper and lower bounds:
\begin{align}
    b_N-a_N \le \max_{p\in [0,1]}\frac{H(p)}{N p +1} \triangleq \psi(N)=O(N). \label{eq:gapPrecision}
\end{align}
Notably, this upper bound is independent of $\eta$, and guarantees
that the precision $\psi(N)$ for calculating $C_{\text{BEHC}}$ can be achieved without explicitly computing $a_N$ or $b_N$ using convex optimization algorithms. E.g., with $N=10,000$, a precision of $\psi(10,000)=0.0010432 \approx 1e-3$ is guaranteed for all values of $\eta$. Yet, from Table~\ref{table:noiseless_BEHC_full}, we observe that as $\eta$ increases, a smaller $N$ required to achieve a given precision, and vice versa. While the upper bound provided by $\psi(N)$ is generally loose, especially for large values of $\eta$, its independence from $\eta$ ensures that a desired precision can still be achieved by appropriately selecting $N$, even when $\eta$ is very small.
\end{remark}

\begin{remark}
    The convex optimization problems in \eqref{optProb_EH_LB} and \eqref{optProb_EH_UB} involve $O(N^2)$ constraints. While increasing $N$ improves precision per \eqref{eq:gapPrecision}, it also incurs quadratic space complexity, making large $N$ computationally demanding. The main computational burden lies in constructing the constraints, whereas solving the optimization problem itself, once set, is relatively fast using a convex optimization algorithm.
\end{remark}

\section{Achievable Rates for Noisy BEHCs}
\label{sec:Noisy}
In this section, we consider noisy BEHCs.
Using the MDP formulation from \cite[Th.~4]{shemuel2024finite} and the $Q$-graph lower bound in Theorem~\ref{theorem:qgraph_LB}, achievable rates for BEHCs with any DMC and feedback can be evaluated numerically. As an example, we apply the VIA for the MDP to the BSC with $\left|\mathcal{U}\right|\mspace{-3mu}=\mspace{-3mu}2$ and the same $f(u^+,s)$ as in Eq. \eqref{eq:x_behc}. The numerical results, summarized in Table~\ref{table:BSC_BEHC}, show achievable rates as a function of the EH parameter $\eta$ and the channel crossover probability $p$. Notably, for $\eta \mspace{-3mu}=\mspace{-3mu}1$, the achievable rate matches the capacity of the standard BSC (with/ without feedback), given by $1\mspace{-3mu}-\mspace{-3mu}H(p)$.

\begin{table}[t]
\caption{Achievable rates for the BEHC$(\eta)$ over a BSC($p$)}
\label{table:BSC_BEHC}
\begin{center}
\begin{tabular}{|c||c|c|c|c|c|}
\hline 
\multicolumn{1}{|c||}{\textbf{$\bm{\eta} \backslash \mathbf{p}$}} & \textbf{0.1} & \textbf{0.2} & \textbf{0.3} & \textbf{0.4}\\
\hline \hline
\textbf{0.1} & 0.0724 & 0.0331 & 0.0132 & 0.0032\\
\hline
\textbf{0.2} & 0.1437 & 0.0639 & 0.0280 & 0.0068\\
\hline
\textbf{0.3} & 0.2106 & 0.1044 & 0.0433 & 0.0105\\
\hline
\textbf{0.4} & 0.2737 & 0.1397 & 0.0588 & 0.0143\\
\hline
\textbf{0.5} & 0.3271 & 0.1730 & 0.0736 & 0.0180\\
\hline
\textbf{0.6} & 0.3725 & 0.2022 & 0.0872 & 0.0214\\
\hline
\textbf{0.7} & 0.4105 & 0.2261 & 0.0994 & 0.0243\\
\hline
\textbf{0.8} & 0.4486 & 0.2445 & 0.1074 & 0.0268\\
\hline
\textbf{0.9} & 0.4870 & 0.2608 & 0.1136 & 0.0284\\
\hline
\textbf{1} & 0.5310 & 0.2781 & 0.1187 & 0.0290 \\
\hline
\end{tabular}
\end{center}
\end{table}

More generally, the lower bound computation techniques from \cite{shemuel2024finite} extend to EH models with any DMC, arbitrary finite input/output alphabets, any battery size, and any EH process $E_i$. However, establishing computable upper bounds for general EH models remains challenging, as it requires proving a finite cardinality bound for the auxiliary RV $U$, as done for the BEHC.

\section{Conclusion}
\label{sec:conclusions}
This work resolves the open problem of computing the capacity of the BEHC, a challenge stemming from the encoder's intricate constraints, the decoder's lack of state knowledge, and the requirement for infinitely large memory to track consecutive zero inputs in the full history until they are followed by a $'1'$, which resets the tracking. While prior capacity expressions in the literature are multi-letter and uncomputable, we establish that the capacity can be determined to any desired precision via the $Q$-graph method and convex optimization.

Future research may explore coding schemes for the BEHC, refine upper bounding techniques for broader EH models, and extend the approach to large alphabets or continuous-state EH systems.

\appendices

\section{Proof of the Markov Chain 
$X_i-(\hat{U}_{K_{i-1}}^i,Q_{i-1})-(\hat{U}^{K_{i-1}-1},X^{i-1})$}
\label{appendix:lemma_boosted}
\begin{lemma}
\label{lem:boosted_markov}
For the boosted BEHC($N$), the following holds for any time~$i$
\begin{align}
        P(x_i|\hat{u}^i,x^{i-1})=P(x_i|\hat{u}_{k_{i-1}}^i,q_{i-1}(x_{i-N}^{i-1})), \quad \forall \hat{u}^i,x^i,\label{eq:boosted_markov}
\end{align}
where $q_{i-1}$ is a vertex of the $Q$-graph illustrated in Fig.~\ref{fig:EH_UB_Graph}.
\end{lemma}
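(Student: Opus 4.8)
\textbf{Proof plan for Lemma~\ref{lem:boosted_markov}.}

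The plan is to prove \eqref{eq:boosted_markov} by splitting into the two cases according to whether the last $N$ inputs $x_{i-N}^{i-1}$ contain a $'1'$ (Case $(i)$) or are all zeros (Case $(ii)$), exactly as in the definition of $K_{i-1}$. In both cases, the argument is to expand $P(x_i\mid \hat u^i, x^{i-1})$ using the joint distribution \eqref{eq:joint_boosted}, marginalizing over the hidden battery states $S^{i-1}$, and then to show that the conditioning on $(\hat u^{k_{i-1}-1}, x^{i-k_{i-1}-1})$ is irrelevant once we condition on $(\hat u_{k_{i-1}}^i, q_{i-1})$. The intuition — which I would also state as a remark — is that $q_{i-1}$ (equivalently $k_{i-1}$) records how many consecutive zeros have elapsed since the last $'1'$, so the conditional law of $S_{i-1}$ given the past depends on the past only through how long ago the last $'1'$ (or the last reset to $N$) occurred, together with the strategies applied in between.

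First I would establish the key probabilistic fact: for the boosted BEHC($N$), conditioned on the event that $x_{i-N}^{i-1}$ has its last $'1'$ at position $k_{i-1}$ (so $x_{k_{i-1}}^{i-1}=(1,\mathbf 0^{\,i-1-k_{i-1}})$), we have $P(s_{i-1}=0\mid \hat u^i, x^{i-1}) = P(s_{i-1}=0\mid \hat u_{k_{i-1}}^{i-1}, q_{i-1})$. This is proved by a short induction on the number of zero steps after time $k_{i-1}$: at time $k_{i-1}$ a $'1'$ was transmitted (which under \eqref{eq:stateEvolutionNoiselessEH} forces $S_{k_{i-1}}=0$ with probability $\bar\eta$ and $=1$ with probability... actually $S_{k_{i-1}}\in\{0,1\}$ according to whether an energy unit was harvested), and each subsequent step $l\in[k_{i-1}+1:i-1]$ has $x_l=0=f(\hat u_l,s_{l-1})$ only if $\hat u_l=a$ or $s_{l-1}=0$, and the recursion $S_l=\min\{S_{l-1}+E_l,1\}$ then makes $P(s_l=0\mid\cdot)=\bar\eta\,P(s_{l-1}=0\mid\cdot)$; crucially, none of these recursion steps references $\hat u^{k_{i-1}-1}$ or $x^{k_{i-1}-1}$ — only the strategies $\hat u_{k_{i-1}}^{l}$ and the fact that a $'1'$ happened at $k_{i-1}$ (which is encoded in $q_{i-1}$). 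In Case $(ii)$, the boosting rule gives $P(s_{i-1}=1\mid x_{i-N}^{i-1}=\mathbf 0^N)=1$ directly, so $P(s_{i-1}=0\mid\cdot)=0$ regardless of the entire past, which is an even simpler sub-case; this is the computation already referenced as \eqref{eq:opt2} in the excerpt. Once the conditional law of $S_{i-1}$ is pinned down, $P(x_i\mid\cdot)=\sum_{s_{i-1}} P(s_{i-1}\mid\cdot)\,\mathbbm 1\{x_i=f(\hat u_i,s_{i-1})\}$ depends on the past only through $(\hat u_{k_{i-1}}^i, q_{i-1})$, which is \eqref{eq:boosted_markov}.

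I expect the main obstacle to be the bookkeeping in the inductive step of the battery-state computation: one must carefully verify that the event ``$x_l=0$ for all $l\in[k_{i-1}+1:i-1]$'' together with the chosen strategies is consistent and that the Markov recursion for $S_l$ genuinely decouples from $\hat u^{k_{i-1}-1}$ and $x^{k_{i-1}-1}$. The subtlety is that $x_l$ being $0$ is itself partly a consequence of $s_{l-1}$ (when $\hat u_l=b$ but the battery is empty), so the induction hypothesis on $P(s_{l-1}=0\mid\cdot)$ must be stated as a statement conditioned on the \emph{observed} $x_l=0$; I would handle this by conditioning throughout on the full realization $x^{i-1}$ (which is what appears on the left-hand side anyway), so that the only randomness left is in $S^{i-1}$, and then the decoupling is a clean consequence of the fact that $S_{k_{i-1}}$'s distribution is reset by the $'1'$ at time $k_{i-1}$ and the Markov chain $S_l-(S_{l-1},X_l)-(\text{everything earlier})$. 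The rest is routine.
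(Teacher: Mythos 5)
Your proposal is correct, and it rests on the same two pillars as the paper's proof: the identical case split on whether $x_{i-N}^{i-1}$ contains a $'1'$, and the observation that the $'1'$ at time $k_{i-1}$ resets the battery memory (it forces $S_{k_{i-1}-1}=1$, hence $S_{k_{i-1}}=E_{k_{i-1}}\sim\text{Bern}(\eta)$ independently of everything earlier). Where you differ is in how the reset is exploited. The paper inserts the revealed state $s_{k_{i-1}-1}=1$ into the conditioning and then deletes $(\hat u^{k_{i-1}-1},x^{k_{i-1}-1})$ in one stroke via the generic FSC Markov property (an adaptation of \cite[Lemma~21]{PermuterWeissmanGoldsmith09} with $\hat U^i$ in the conditioning); you instead run an explicit forward, BCJR-type recursion on the posterior $P(s_l\mid \hat u^l,x^l)$ for $l=k_{i-1},\dots,i-1$ and check that each update uses only $(\hat u_l,\,x_l=0,\,q_{l-1})$, so that $P(s_{i-1}\mid\hat u^{i-1},x^{i-1})$ is a function of $(\hat u_{k_{i-1}}^{i-1},q_{i-1})$ alone. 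Your route is more elementary and self-contained, and it correctly isolates the one genuine subtlety (the event $x_l=0$ under $\hat u_l=b$ collapses the posterior of $s_{l-1}$ to $\{0\}$, so the recursion must condition on the observed inputs), which the paper's route hides inside the cited lemma. Two points to make explicit if you write this up: (i) dropping $\hat u_i$ from the conditioning in $P(s_{i-1}\mid\hat u^{i},x^{i-1})$ uses the Markov chain $\hat U_i-(\hat U^{i-1},X^{i-1})-S_{i-1}$ implied by \eqref{eq:joint_boosted}; and (ii) for $l\in[k_{i-1}+1:i-1]$ one has $q_{l-1}\le N-2$, so the boosted transition law \eqref{eq:modifiedChannelLaw} coincides with the original \eqref{eq:stateEvolutionNoiselessEH} there and your update $P(s_l=0\mid\cdot)=\bar{\eta}\,P(s_{l-1}=0\mid\cdot,\,x_l=0)$ is indeed the correct one.
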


\begin{proof}
We address each of the two cases separately. For Case $(i)$, i.e., $X_{K_{i-1}}^{i-1}\mspace{-4mu}=\mspace{-4mu}(1,\mathbf{0}^{i-1-{K_{i-1}}})$, we have $S_{K_{i-1}-1}\mspace{-4mu}=\mspace{-4mu}1$ and
\begin{align}
    P(x_i|\hat{u}^i,\mspace{-3mu}x^{i-1})\mspace{-4mu}& \stackrel{(a)}=\mspace{-4mu}P(x_i|\hat{u}^i,x^{i-1},s_{k_{i-1}-1}\mspace{-4mu}=\mspace{-4mu}1,q_{i-1},k_{i-1}) \nn\\
    &\stackrel{(b)}=P(x_i|\hat{u}_{k_{i-1}}^i,x_{k_{i-1}}^{i-1},s_{k_{i-1}-1}\mspace{-4mu}=\mspace{-4mu}1, q_{i-1},k_{i-1}) \nn\\    
    & \stackrel{(c)}=P(x_i|\hat{u}_{k_{i-1}}^i, q_{i-1},k_{i-1}). \label{eq:opt1}
\end{align}
Step (a) follows because $q_{i-1}$ and $k_{i-1}$ are deterministic functions of $x_{i-N}^{i-1}$. In Step (b), the Markov chain follows from an identical derivation of \cite[Lemma~21]{PermuterWeissmanGoldsmith09} that includes $\hat{U}^i$ in the conditioning. Step (c) follows since $x_{k_{i-1}}^{i-1}$ and $s_{{k_{i-1}}-1}$ are deterministic functions of $k_{i-1}$.

For Case $(ii)$, i.e., $x_{i-N}^{i-1}\mspace{-4mu}=\mspace{-4mu}\mathbf{0}^N$, we have $S_{i-1}\mspace{-4mu}=\mspace{-4mu}1$ and
\begin{align}
    &P(x_i|\hat{u}^i,\mspace{-3mu}x^{i-N-1},\mspace{-3mu}x_{i-N}^{i-1}\mspace{-4mu}=\mspace{-4mu}\mathbf{0}^N) \mspace{-4mu}=\mspace{-4mu}P(x_i|\hat{u}_i,\mspace{-3mu}s_{i-1}\mspace{-4mu}=\mspace{-4mu}1,\mspace{-3mu} q_{i-1}\mspace{-4mu}=\mspace{-4mu}N) \nn\\
    &=\mathbbm{1}\{x_i\mspace{-4mu}=\mspace{-4mu}f(\hat{u}_i,\mspace{-3mu}s_{i-1}\mspace{-4mu}=\mspace{-4mu}1)\}
    =P(x_i|\hat{u}_i,\mspace{-3mu}q_{i-1}=N). \label{eq:opt2}
\end{align}
Consequently, from \eqref{eq:opt1} and \eqref{eq:opt2}, we obtain for any $\hat{u}^i,x^i$:
\begin{align}
    P(x_i|\hat{u}^i,\mspace{-3mu}x^{i-1})&=P(x_i|\hat{u}_{k_{i-1}}^i,\mspace{-3mu}q_{i-1}(x_{i-N}^{i-1}),\mspace{-3mu} k_{i-1}(x_{i-N}^{i-1})) \nn\\
    &=P(x_i|\hat{u}_{k_{i-1}}^i,\mspace{-3mu}q_{i-1}(x_{i-N}^{i-1})),
\end{align}
where the last step follows from the one-to-one mapping between $k_{i-1}(x_{i-N}^{i-1})$ and $q_{i-1}(x_{i-N}^{i-1})$.
\end{proof}

\section{Proof of Lemma \ref{lem:tryTransmit1}}
\label{Appendix:prf_lem_tryTransmit1}
Throughout the proof, we denote the special component by 
$\tilde{\hat{u}}_l=b$, where $l\in [k_{i-1}\mspace{-3mu}:\mspace{-3mu} i\mspace{-3mu}-\mspace{-3mu}1]$, i.e., $f_{\tilde{\hat{u}}_l}(s_{l-1}\mspace{-4mu}=\mspace{-4mu}1)\mspace{-4mu}=\mspace{-4mu}1$, and any vector containing it by $\tilde{\hat{u}}_m^j\triangleq (\hat{u}_m,\dots,\hat{u}_{l-1},\tilde{\hat{u}}_l,\hat{u}_{l+1},\dots,\hat{u}_j), m \le l \le j$. We aim to prove that
$P(x_i|\tilde{\hat{u}}^i,x^{i-1})\mspace{-4mu}=\mspace{-4mu}P(x_i|\tilde{\hat{u}}_l^i,q_{i-1})$, assuming $x_{i-N}^{i-1}\mspace{-3mu}\neq \mspace{-3mu}\mathbf{0}^N$. 
\begin{proof}
Assume there exists an input $'1'$ among $x_{i-N}^{i-1}$. Then, 
\begin{align}
    & P(x_i|\tilde{\hat{u}}^i,x^{i-1})=\sum\nolimits_{s_l} P(s_l|\tilde{\hat{u}}^i,x^{i-1})P(x_i|\tilde{\hat{u}}^i,x^{i-1},s_l) \nn\\
    & \stackrel{(a)}= \frac{\sum_{s_l} P(\hat{u}_{l+1}^{i},x_{l+1}^{i-1}|\tilde{\hat{u}}^l,x^l,s_l) P(s_l|\tilde{\hat{u}}^l,x^l) P(x_i|\tilde{\hat{u}}^i,x^{i-1},s_l)}{\sum_{s_l} P(\hat{u}_{l+1}^{i},x_{l+1}^{i-1},s_l|\tilde{\hat{u}}^l,x^l)}  \nn\\
    & = \frac{\sum_{s_l} P(\hat{u}_{l+1}^{i},x_{l+1}^{i-1}|\tilde{\hat{u}}^l,x^l,s_l) P(s_l|\tilde{\hat{u}}^l,x^l) P(x_i|\tilde{\hat{u}}^i,x^{i-1},s_l)}{\sum_{s_l} P(\hat{u}_{l+1}^{i},x_{l+1}^{i-1}|\tilde{\hat{u}}^l,x^l,s_l) P(s_l|\tilde{\hat{u}}^l,x^l)} \nn\\
    & \stackrel{(b)}=\frac{\sum\limits_{s_l}  P(s_l|\tilde{\hat{u}}^l\mspace{-3mu},\mspace{-3mu}x^l)  \mspace{-10mu}\prod\limits_{j=l+1}^{i} \mspace{-10mu}P(\hat{u}_j|\tilde{\hat{u}}^{j-1}\mspace{-3mu},\mspace{-3mu}x^{j-1}) P(x_j|\tilde{\hat{u}}^j\mspace{-3mu},\mspace{-3mu}x^{j-1}\mspace{-3mu},\mspace{-3mu}q_{l}\mspace{-3mu},\mspace{-3mu}s_l) }{\sum\limits_{s_l}  P(s_l|\tilde{\hat{u}}^l\mspace{-3mu},\mspace{-3mu}x^l)  \mspace{-10mu} \prod\limits_{j=l+1}^{i} \mspace{-10mu} P(\hat{u}_j|\tilde{\hat{u}}^{j-1}\mspace{-3mu},\mspace{-3mu}x^{j-1})  \mspace{-10mu} \prod\limits_{j=l+1}^{i-1} \mspace{-10mu} P(x_j|\tilde{\hat{u}}^j\mspace{-3mu},\mspace{-3mu}x^{j-1}\mspace{-3mu},\mspace{-3mu}q_{l}\mspace{-3mu},\mspace{-3mu}s_l) }  \nn\\
    & \stackrel{(c)}= \frac{\sum_{s_l} P(s_l|\tilde{\hat{u}}^l,x^l) \prod_{j=l+1}^{i} P(x_j|\hat{u}_{l+1}^j,x_{l+1}^{j-1},q_{l},s_l)}{\sum_{s_l} P(s_l|\tilde{\hat{u}}^l,x^l) \prod_{j=l+1}^{i-1} P(x_j|\hat{u}_{l+1}^j,x_{l+1}^{j-1},q_l,s_l)}  \nn\\ 
    & \stackrel{(d)}= \frac{\sum_{s_l} P(s_l|q_{l-1},x_l) \mspace{-5mu}\prod_{j=l+1}^{i} \mspace{-5mu} P(x_j|\hat{u}_{l+1}^j,x_{l+1}^{j-1},q_{l},s_l)}{\sum_{s_l} P(s_l|q_{l-1},x_l) \mspace{-10mu} \prod_{j=l+1}^{i-1} \mspace{-5mu}P(x_j|\hat{u}_{l+1}^j,x_{l+1}^{j-1},q_l,s_l)}     
    . \label{eq:yGivenPastEH_last}
\end{align}
Step (a) follows from Bayes' rule and the law of total probability. Step (b) follows from the Markov chain $\hat{U}_j-(\hat{U}^{j-1},X^{j-1})-\tilde{S}^{j-1}$ for any time $j$, implied by \eqref{eq:joint_boosted}, and the fact that $q_{l}$ is a function of $(x_{l-N+1}^{l})$. In Step (c), $\prod_{j=l+1}^{i} P(\hat{u}_j|\tilde{\hat{u}}^{j-1},x^{j-1})$ cancels out, and it follows from the claim that the Markov $P(x_j|\tilde{\hat{u}}^j,x^{j-1},s_l)=P(x_j|\hat{u}_{l+1}^j,x_{l+1}^{j-1},s_l) 
$ holds for any $j>l$. This claim follows directly from an identical derivation of \cite[Lemma~21]{PermuterWeissmanGoldsmith09}, with $\hat{U}^j$ in the conditioning. Step (d) holds by
\begin{align}
& P(s_l\mspace{-3mu}=\mspace{-3mu}0|\tilde{\hat{u}}^l,\mspace{-3mu}x^l)=\mspace{-7mu}\sum\nolimits_{s_{l-1}}\mspace{-17mu} P(s_{l-1}|\tilde{\hat{u}}^l,\mspace{-3mu}x^l) P(s_l\mspace{-3mu}=\mspace{-3mu}0|\tilde{\hat{u}}^l,\mspace{-3mu} x^l, \mspace{-3mu} s_{l-1}, \mspace{-3mu} q_{l-1}) \nn\\
& \stackrel{(*)}=\sum\nolimits_{s_{l-1}} 
P(s_{l-1} | \tilde{\hat{u}}^l, x^l)
\bar{\eta} \quad=\bar{\eta}, \nn 
\end{align}
where $(*)$ follows from \eqref{eq:modifiedChannelLaw} and the assumptions: 
\begin{itemize}
    \item $x_{i-N}^{i-1} \neq \mathbf{0}^N$, which implies $q_{l-1} \neq N$.
    \item $f_{\tilde{\hat{u}}_l}(s_{l-1} = 1) = 1$, which implies $x_l = s_{l-1}$.
\end{itemize}

Since $\hat{u}^{l-1},x^{l-1}$ do not appear in \eqref{eq:yGivenPastEH_last}, it follows that $P(x_i|\tilde{\hat{u}}^i,x^{i-1})=P(x_i|\tilde{\hat{u}}_{l}^i,x_{l+1}^{i-1})$. From Lemma~\ref{lem:boosted_markov} and the assumption that $l\ge k_{i-1}$, we also have:\\
$ P(x_i|\tilde{\hat{u}}^i,x^{i-1})=P(x_i|\tilde{\hat{u}}_{k_{i-1}}^i,q_{i-1}(x^{i-1}_{i-N})).$ Since these two expressions are equal, the proof is complete.
\end{proof}
\bibliographystyle{IEEEtran}
\bibliography{ref}
\clearpage
\setcounter{totalnumber}{1}  

\end{document}